\begin{document}
\allowdisplaybreaks[1]

\newtheorem{thm}{Theorem} 
\newtheorem{lem}{Lemma}
\newtheorem{prop}{Proposition}
\newtheorem{cor}{Corollary}
\newtheorem{defn}{Definition}
\newcommand{\remarkend}{\IEEEQEDopen}
\newtheorem{remark}{Remark}
\newtheorem{rem}{Remark}
\newtheorem{ex}{Example}
\newtheorem{pro}{Property}

\newenvironment{example}[1][Example]{\begin{trivlist}
\item[\hskip \labelsep {\bfseries #1}]}{\end{trivlist}}

\renewcommand{\qedsymbol}{ \begin{tiny}$\blacksquare$ \end{tiny} }


\renewcommand{\leq}{\leqslant}
\renewcommand{\geq}{\geqslant}


\title { 
Secret-Key Generation in Many-to-One Networks: An Integrated Game-Theoretic and Information-Theoretic Approach}%

\author{\IEEEauthorblockN{R\'emi A. Chou, \textit{Member, IEEE}, and  Aylin Yener, \textit{Fellow, IEEE}}
\thanks{
R\'{e}mi A. Chou is with the Department of Electrical Engineering and Computer Science, Wichita State University, Wichita, KS 67260. Part of this research was conducted when he was a postdoctoral researcher at the Department of  of Electrical Engineering at The Pennsylvania State University, University Park, PA 16802. Aylin Yener is with the Department of Electrical Engineering, The Pennsylvania State University, University Park, PA 16802. Part of this work was presented at the 2017 IEEE International Symposium on Information Theory (ISIT) in \cite{ISIT17a}.
This research has been sponsored by the National Science Foundation Grant CNS-1314719. }
}
\maketitle
\begin{abstract}
This paper considers secret-key generation between several
agents and a base station that observe independent and identically
distributed realizations of correlated random variables.
Each agent wishes to generate the longest possible individual
key with the base station by means of public communication. All
keys must be jointly kept secret from all external entities. 
In this
many-to-one secret-key generation setting, it can be shown that
the agents can take advantage of a collective protocol to increase
the sum-rate of their generated keys. However, when each
agent is only interested in maximizing its own secret-key rate,
agents may be unwilling to participate in a collective protocol.
Furthermore, when such a collective protocol is employed, how
to fairly allocate individual key rates arises as a valid issue.
This paper studies this tension between cooperation and self-interest with
a game-theoretic treatment. The work establishes that cooperation is in
the best interest of all individualistic agents and that there exists individual
secret-key rate allocations that incentivize the agents to follow the
protocol. Additionally, an explicit coding scheme that
achieves such allocations is proposed.
\end{abstract} 
\begin{IEEEkeywords}
\noindent{}Multiterminal secret-key generation, strong secrecy, coalitional game theory, hash functions, polar codes
\end{IEEEkeywords}

\section{Introduction}

Multiterminal communication settings subject to limited total resources bring about issues pertaining to competition,
and fairness among users. %
Such issues are typically studied by means of game theory; see, for instance,  references \cite{la2004game,gajic2008game,zhu2009constrained}  which deal with the Gaussian multiple access channel, and~\cite{leshem2008cooperative,mathur2006coalitional,berry2011shannon,liu2011game,jorswieck2008complete,larsson2008competition} which deal with interference channels.

In this paper, we study a multiterminal secret-key generation problem that involves selfish users, and propose to study the tension between cooperation and selfishness by means of cooperative game theory, more specifically, coalitional game theory.  We refer to~\cite{peleg2007introduction,osborne1994course,myerson2013game} for an introduction to coalitional game theory, and to~\cite{saad2009coalitional} for a review of some of its applications to telecommunications. %
Our setting can be explained as follows. Each agent wishes to generate an individual key of maximal length with the base station to securely and individually report information, using a one-time pad for instance. 
There are many such agents and a single base station. The generated keys must be jointly kept secret from all external entities. 
We consider a source model for secret-key generation \cite{Ahlswede93,Maurer93}, i.e., the agents and the base station observe independent and identically
distributed (i.i.d.) realizations of correlated random variables (possibly obtained, after appropriate manipulations, from \emph{channel gains measurements}~\cite{wilson2007channel,wallace2010automatic,ye2010information,pierrot2013experimental}), and can communicate over an authenticated public noiseless channel.  
It can be shown that when agents are altruistic, the agents increase the sum of their key lengths by agreeing to participate in a joint protocol, in contrast to operating separately on their own. However, when each agent is  interested in maximizing its own key length only, as we consider, there exists a tension between  cooperation and the sole interest of a given agent. Moreover, assuming that the agents collaborate to maximize the sum of their key lengths, another issue is to determine a fair allocation of individual key lengths,  so that no agent has any incentive to deviate from the protocol. 
The goal of our study is to study this tension between cooperation and selfishness.

Note that when the agents are assumed to be altruistic, and when fairness issues are ignored, the secret-key generation model we consider reduces to the one studied in~\cite{lai2013simultaneously} and is related to multiple-key generation in a network with trusted helpers~\cite{lai2015key,zhangmulti,xu2016simultaneously}. Note also that once the secret-key generation protocol is done, the subsequent transmission to the base station of messages protected by means of a one-time pad with the generated secret keys can be viewed as a noiseless multiple access wiretap channel \cite{tekin2008general}.

Our contributions are three-fold. (i) We formally introduce an integrated game-theoretic and information-theoretic formulation of the problem  in Section \ref{sec:def}. Specifically, we cast the problem as a coalitional game in which the value function is determined under  information-theoretic guarantees, i.e., the value associated with a coalition is computed with no restrictions  on the strategies that the users outside the coalition could adopt. We then derive properties of the defined game and propose rate allocations as candidates for fair solutions in Section \ref{sec:analysis}. (ii)~By adding the constraint that the agents are selfish, we derive a converse using the core of the game we define, which differs from the techniques used for a setting that does not involve selfihness constraints~\cite{lai2013simultaneously,zhangmulti}. (iii)  We provide in Section~\ref{sec:CS} an explicit coding scheme based on polar codes for source coding \cite{Arikan10} and hash functions to implement the solutions proposed in Section~\ref{sec:analysis}. Note that a few explicit coding schemes have been proposed for multiterminal secret-key generation problems~\cite{Nitinawarat10,Ye12,chou2015polar}, however, the coding schemes in these references do not seem to easily apply to our setting. Specifically, the distributed nature of our setting is challenging as each agent must locally generate a key without the knowledge of the source observations of the other agents, and all the generated keys must be collectively secure.

The remainder of the paper is organized as follows. We formally state the problem in Section~\ref{sec:def}. We study in Section~\ref{sec:analysis} the game we have defined in Section \ref{sec:def}. We propose an explicit coding scheme to achieve any point in the core of our game in Section \ref{sec:CS}. We study our model in the case of non-degraded sources in Section \ref{sec:ndg}.  We generalize our model to a setting with multiple clearance levels in Section \ref{secext}. Finally, we provide concluding remarks in Section~\ref{secconcl}.

\section{Problem Statement} \label{sec:def}

We define an auxiliary secret-key generation model with no selfishness constraints  in Section~\ref{secmod}, and provide additional definitions in Section \ref{secmod3}. In Section~\ref{secmod2}, we explain our objective using the model of Section~\ref{secmod} to which selfishness constraints are added, and describe the integrated game-theoretic and information-theoretic problem formulation. %

Notation: For any $a \in \mathbb{N}^*$, define $\llbracket 1, a \rrbracket \triangleq [1,a] \cap \mathbb{N}$. For a given set $\mathcal{S}$, we  let $2^{\mathcal{S}}$  denote the power set of~$\mathcal{S}$. For two probability distributions $p$ and $q$ defined over the same alphabet $\mathcal{X}$, we define the variational distance between $p$ and $q$ as 
$\mathbb{V}(p,q) \triangleq \sum_{x\in \mathcal{X}} |p(x) - q(x)| .$ Finally, $\bigtimes$ denotes the Cartesian product.

\begin{figure} 
\centering

\includegraphics[width=8.5cm]{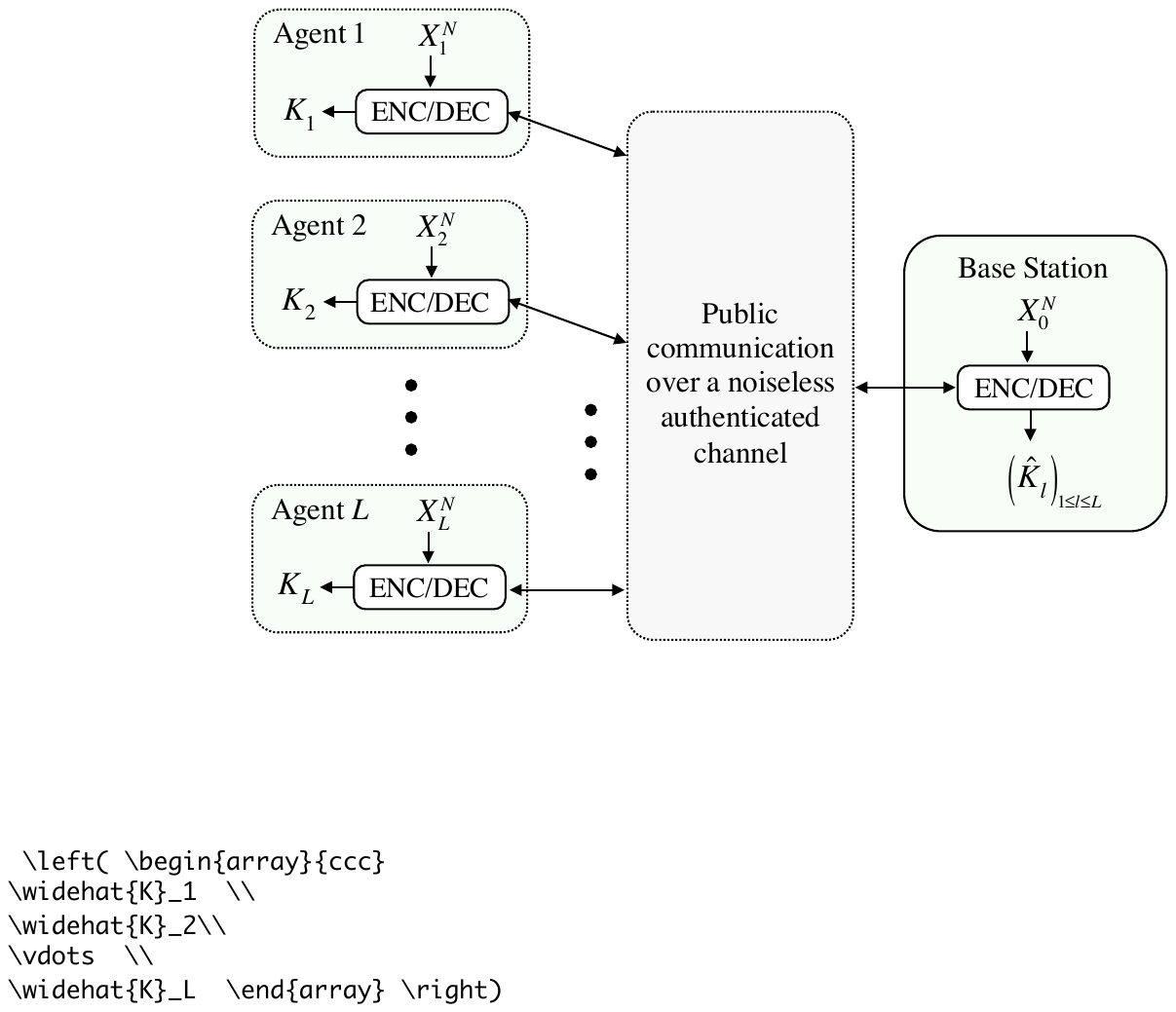}  %
  \caption{Many-to-one secret-key generation setting.}  \label{fig:modellevela}
\end{figure}

\subsection{An auxiliary secret-key generation model (without selfishness constraints)} \label{secmod}
Let $L \in \mathbb{N}^*$ and $\mathcal{L} \triangleq \llbracket 1 ,L \rrbracket$. In the following, we consider $L$ agents represented by the set $\mathcal{L}$ and one base station. %

\subsubsection{Definition of the source model}Define $\mathcal{X}_{\mathcal{L}}$ as the Cartesian product of $L$ finite alphabets $\mathcal{X}_{l}$, $l\in \mathcal{L}$. Consider a discrete memoryless source (DMS) $\left(\mathcal{X}_{\mathcal{L}} \times \mathcal{X}_0 , p_{X_{\mathcal{L}}X_0} \right)$, where $\mathcal{X}_0$ is a finite alphabet and $X_{\mathcal{L}} \triangleq (X_{l})_{l \in \mathcal{L} }$. For $l \in\mathcal{L}$, Agent $l$ observes the component $X_{l}$ of the DMS, and the base station observes the component $X_0$. The source is assumed to follow the following Markov chain: for any $\mathcal{S},\mathcal{T} \subset \mathcal{L}$ such that $\mathcal{S} \cap \mathcal{T} = \emptyset$,
\begin{align} \label{Markov}
X_{\mathcal{S}} - X_0 - X_{\mathcal{T}}.
\end{align}
Note that such a source model has already been considered in~\cite{Csiszar04,Ye12,chou2015polar}. Assuming that all the random variables are binary, an instance of this model is $X_l \triangleq X_0 \oplus B_l$, $\forall l \in \mathcal{L}$, where the $B_l$'s are independent Bernoulli random variables and $\oplus$ is the modulo-two addition. 

The source's statistics are assumed known to all parties, and communication is allowed over an authenticated  noiseless  public channel. 

\subsubsection{Description of the objectives for the agents} The goal of Agent $l\in \mathcal{L}$ is to generate an individual secret-key with the base station. We formalize the definition of a secret-key generation protocol for this setting, which is depicted in Figure~\ref{fig:modellevela}.

\begin{defn} \label{definition_modelg}
For $l \in \mathcal{L}$, let  $\mathcal{K}_l$ be a key alphabet of size $2^{NR_l}$ and define $\mathcal{K}_{\mathcal{L}}$ as the Cartesian product $\bigtimes_{l\in \mathcal{L}} \mathcal{K}_l$. A $( (2^{NR_l})_{l \in \mathcal{L}},N)$ secret-key generation strategy for the agents in  $\mathcal{L}$  is as follows.
\begin{enumerate}
\item The base station observes $X_0^{N}$ and Agent $l$, $l \in \mathcal{L}$, observes $X_l^{N}$.
\item The agents in $\mathcal{L}$ and the base station communicate, possibly interactively, over the public channel. The global public communication between the agents in $\mathcal{L}$ and the base station is denoted by $A \in \mathcal{A}$, for some discrete alphabet $\mathcal{A}$. 
\item Agent $l$, $l \in \mathcal{L}$, computes $K_l(X_l^{N},A) \in \mathcal{K}_l$. 
\item The base station computes $\widehat{K}_l(X_0^{N},A) \in \mathcal{K}_l$, $l \in \mathcal{L}$.
\end{enumerate}
In the following, we use the notation ${K}_{\mathcal{L}} \triangleq  ({K}_l)_{l \in \mathcal{L}}$. 
\end{defn}
\begin{defn} \label{def}
A secret-key rate tuple $(R_l)_{l\in \mathcal{L}}$ is achievable if there exists a sequence of $((2^{NR_l})_{{l\in \mathcal{L}}},N)$ secret-key generation strategies for the agents in  $\mathcal{L}$ such that
\begin{align}
	\lim_{N \to \infty} \mathbb{P}[\widehat{K}_{\mathcal{L}} \neq K_{\mathcal{L}}] &= 0 \text{ (Reliability),} \\
	\lim_{N \to \infty} I\left(K_{\mathcal{L}} ; A \right) &= 0 \text{ (Collective Secrecy),}\label{eqSeca} \\
	\lim_{N \to \infty} \log |\mathcal{K}_{\mathcal{L}} | - H(K_{\mathcal{L}}) &= 0 \text{ (Keys Uniformity)}.\label{eqU1}
\end{align}
\end{defn}
 The secrecy constraint \eqref{eqSeca} ensures that the keys generated by the agents in $\mathcal{L}$ are independent from the public communication.  Note, however, that \eqref{eqSeca} does not mean that the key of a particular agent in $\mathcal{L}$ is secret from the other agents. Moreover, \eqref{eqU1} ensures that the keys generated are almost jointly independent, so that the simultaneous use of the keys by the agents in $\mathcal{L}$ is secure. Note that this setting has been introduced in \cite{lai2013simultaneously}.

Observe that in the presented setting we implicitly assumed that the agents in $\mathcal{L}$ are willing to agree on a common secret key generation protocol. In Section \ref{secmod2}, we study a similar setting but with the additional constraint that the agents are selfish. Before we move to this setting we introduce additional definitions in Section \ref{secmod3}.

\subsection{Additional definitions} \label{secmod3}

We provide additional definitions that will be useful to incorporate selfishness constraints in the model presented in Section \ref{secmod}. These definitions generalize the setting described in Section~\ref{secmod} when security constraints with respect to a subset of agents hold, i.e., the keys generated by a given subset of agents are required to be secret from the rest of the agents. We formalize the definition of a secret-key generation protocol for a group of agents $\mathcal{S} \subseteq \mathcal{L}$ in the following definitions. %

\begin{defn} \label{definition_modelgs}
Let $\mathcal{S} \subseteq \mathcal{L}$. For $i \in \mathcal{S}$, let  $\mathcal{K}_i$ be a key alphabet of size $2^{NR_i}$ and define $\mathcal{K}_{\mathcal{S}}= \bigtimes_{i\in \mathcal{S}} \mathcal{K}_i$. A $( (2^{NR_i})_{i \in \mathcal{S}},N)$ secret-key generation strategy for the coalition of agents $\mathcal{S}$  is as follows.
\begin{enumerate}
\item The base station observes $X_0^{N}$ and Agent $i$, $i \in \mathcal{S}$, observes $X_i^{N}$.
\item The agents in $\mathcal{S}$ and the base station communicate, possibly interactively, over the public channel. The global public communication between the agents in $\mathcal{S}$ and the base station is denoted by $A_{\mathcal{S}} \in \mathcal{A}_{\mathcal{S}}$, for some discrete alphabet $\mathcal{A}_{\mathcal{S}}$. 
\item Agent $i$, $i \in \mathcal{S}$, computes $K_i(X_i^{N},A_{\mathcal{S}}) \in \mathcal{K}_i$. 
\item The base station computes $\widehat{K}_i(X_0^{N},A_{\mathcal{S}}) \in \mathcal{K}_i$, $i \in \mathcal{S}$.
\end{enumerate}
In the following, we use the notation ${K}_{\mathcal{S}} \triangleq  ({K}_i)_{i \in \mathcal{S}}$. 
\end{defn}
\begin{defn} \label{defs}
Let $\mathcal{S} \subseteq \mathcal{L}$. A secret-key rate tuple $(R_i)_{i\in \mathcal{S}}$ is achievable if there exists a sequence of $((2^{NR_i})_{{i\in \mathcal{S}}},N)$ secret-key generation strategies for the coalition of agents $\mathcal{S}$ such that
\begin{align}
	\lim_{N \to \infty} \mathbb{P}[\widehat{K}_{\mathcal{S}} \neq K_{\mathcal{S}}] &= 0 \text{ (Reliability),} \\
	\lim_{N \to \infty} I\left(K_{\mathcal{S}} ; A_{\mathcal{S}}, X_{\mathcal{L} \backslash \mathcal{S}}^{N}  \right) &= 0 \text{ (Collective  Secrecy),}\label{eqSec} \\
	\lim_{N \to \infty} \log |\mathcal{K}_{\mathcal{S}} | - H(K_{\mathcal{S}}) &= 0 \text{ (Key Uniformity)}.\label{eqU}
\end{align}
\end{defn}

 The secrecy constraint~\eqref{eqSec} with respect to the agents outside of $\mathcal{S}$ means
 that the agents in $\mathcal{S}$  follow a protocol for secret-key generation under the information-theoretic constraint that the agent in $\mathcal{L} \backslash \mathcal{S}$ are not assumed to follow \emph{any specific communication strategy}. Note that choosing $\mathcal{S} = \mathcal{L}$ recovers the setting of Section \ref{secmod}. %

\begin{rem} \label{rem}
\eqref{eqSec} and \eqref{eqU} can be combined in only one condition. If 
\begin{align}
\lim_{N \to \infty} N \mathbb{V} \left(p_{K_{\mathcal{S}} A_{\mathcal{S}}X^N_{\mathcal{L} \backslash \mathcal{S}} }, p_{\mathcal{U}_{\mathcal{S}} }p_{A_{\mathcal{S}}X^N_{\mathcal{L} \backslash \mathcal{S}} } \right) =0,
\end{align} then \eqref{eqSec} and \eqref{eqU} hold by \cite[Lemma 1]{Csiszar1996}, \cite[Lemma 2.7]{bookCsizar}, where $p_{\mathcal{U}_{\mathcal{S}}}$ denotes the uniform distribution over $\mathcal{K}_{\mathcal{S}}$.
\end{rem}

\subsection{Secret-key generation with selfish users}  \label{secmod2}
We consider the secret-key generation problem described by Definitions \ref{definition_modelg} and \ref{def}  when the agents are selfish, i.e., they are solely interested in maximizing their own secret-key rate.
The agents can potentially form coalitions to achieve this goal, in the sense that subsets of agents can agree on a collective protocol to follow before the actual secret-key generation protocol occurs.
Note that the model allows the agents to communicate with each other over the public channel and determine whether or not they want to be part of a coalition. However, we do not assume any privilege for coalitions, in particular, if the members of a given coalition need to communicate with each other, they only have access to the aforementioned public communication channel. In the following, \emph{cooperation among a set of agents $\mathcal{S} \subseteq \mathcal{L}$ means that the agents in $\mathcal{S}$ agree on participating in a secret key generation scheme as defined in Definition \ref{definition_modelgs}}.

The questions we are interested in are the following. (i) Can selfish agents find a consensus about which coalitions to form? %
 (ii) If such consensus exists, how should the value, i.e., the secret-key sum-rate, of each coalition be allocated among its agents? 

We define a game corresponding to this problem as follows. For $\mathcal{S} \subseteq \mathcal{L}$, let $\mathfrak{S}(\mathcal{S})$ be the set of all sequences $(S_N(\mathcal{S}))_{N \in \mathbb{N}}$, where $S_N(\mathcal{S})$ is a $((2^{NR_i})_{i\in \mathcal{S}},N)$ secret-key generation strategy as defined in Definition \ref{definition_modelgs}. The set of strategies that coalition $\mathcal{S} \subseteq \mathcal{L}$ can adopt is $\mathfrak{S}(\mathcal{S})$. Consider a sequence of payoff functions $(\pi_l)_{l \in \mathcal{L}}$, where for $l \in \mathcal{L}$, $\pi_l(a_{\mathcal{L}})$ represents the payoff of agent~$l$, i.e., the rate of its secret key, when the strategies $a_{\mathcal{L}} \in \bigcup_{\mathcal{P} \in \mathfrak{P}} \left( \bigtimes_{\mathcal{S}\in\mathcal{P}} \mathfrak{S}(\mathcal{S})\right)$ are  played by the agents, where $\mathfrak{P}$ denotes the set of all partitions of $\mathcal{L}$. We assume a decentralized setting in the sense that the base station does not  influence the strategies of the agents, i.e., is not a player but a passive entity. 
\begin{rem}
Our study aims at modeling selfish constraints for the agents, and how they can be accounted in a decentralized manner without an external authority entity. However, considering an active base station that could force coalition-building is an interesting avenue for future research.
\end{rem}

We next wish to formulate a coalitional game by associating with each coalition of cooperating agents $\mathcal{S} \subseteq \mathcal{L}$ a certain worth $v(\mathcal{S})$. As detailed in Section \ref{sec:analysis}, such mapping $v$ provides with a tool to study the stability of coalitions formed by the agents, where stability of a coalition means that there is no incentive to merge with another coalition or to split into smaller coalitions. Two potential choices for the worth $v(\mathcal{S})$ of coalition $\mathcal{S} \subseteq \mathcal{L}$  are the following,\text{~\cite{aumann1960neumann,jentzsch1964some}} 
\begin{align}
\max_{ \substack{a_{\mathcal{S}} \\ \in \mathfrak{S}(\mathcal{S})}} \min_{ \substack {a_{\mathcal{L} \backslash \mathcal{S}} \\ \in \mathfrak{S}(\mathcal{L} \backslash\mathcal{S})}} \sum_{i \in \mathcal{S}}\pi_i(a_{\mathcal{S}},a_{\mathcal{L} \backslash \mathcal{S}}), \label{eq6}\\
 \min_{ \substack {a_{\mathcal{L} \backslash \mathcal{S}} \\ \in \mathfrak{S}(\mathcal{L} \backslash\mathcal{S})}}  \max_{ \substack{a_{\mathcal{S}} \\ \in \mathfrak{S}(\mathcal{S})}} \sum_{i \in \mathcal{S}}\pi_i(a_{\mathcal{S}},a_{\mathcal{L} \backslash \mathcal{S}}),\label{eq7}
\end{align}
where the quantity in \eqref{eq6} corresponds to the payoff that coalition $\mathcal{S}$ can ensure to its members regardless of the strategies adopted by the member of $\mathcal{L} \backslash \mathcal{S}$, and the one in~\eqref{eq7} to the payoff that coalition $\mathcal{L} \backslash \mathcal{S}$ cannot prevent coalition $\mathcal{S}$ to receive. See, for instance, \cite{shapley1973gameb} for a detailed explanation of the subtle difference between these two notions in general. Observe also that for our problem both quantities are equal since for any $\mathcal{S} \subseteq \mathcal{L}$, there exists $a^*_{\mathcal{L} \backslash \mathcal{S}} \in \mathfrak{S}(\mathcal{L} \backslash\mathcal{S})$ such that for any strategies $a_{\mathcal{S}}\in \mathfrak{S}(\mathcal{S})$, we have  
\begin{align}
 \sum_{i \in \mathcal{S}}\pi_i(a_{\mathcal{S}},a_{\mathcal{L} \backslash \mathcal{S}}) \geq \sum_{i \in \mathcal{S}}\pi_i(a_{\mathcal{S}},a^*_{\mathcal{L} \backslash \mathcal{S}}).
 \end{align}
Indeed, consider $a^*_{\mathcal{L} \backslash \mathcal{S}}$ as the strategies consisting in publicly disclosing $X^N_i$ for all agents $i\in \mathcal{L} \backslash \mathcal{S}$.

To summarize, for a DMS $\left(\mathcal{X}_{\mathcal{L}} \times \mathcal{X}_0 , p_{X_{\mathcal{L}}X_0} \right)$, the secret-key generation problem described in Definitions \ref{definition_modelg}, \ref{def}, when the agents are selfish is cast  as a coalitional games $(\mathcal{L},v)$ where the value function is defined  as
\begin{align} \label{eqvf}
v  : 2^{\mathcal{L}}& \to \mathbb{R}^+, 
 \mathcal{S} \mapsto \max_{ \substack{a_{\mathcal{S}} \\ \in \mathfrak{S}(\mathcal{S})}} \min_{ \substack {a_{\mathcal{L} \backslash \mathcal{S}} \\ \in \mathfrak{S}(\mathcal{L} \backslash\mathcal{S})}} \sum_{i \in \mathcal{S}}\pi_i(a_{\mathcal{S}},a_{\mathcal{L} \backslash \mathcal{S}})
\end{align}
such that for any $\mathcal{S} \subseteq \mathcal{L}$, $v(\mathcal{S})$ corresponds to the maximal secret-key sum-rate achievable by coalition $\mathcal{S}$ when \emph{no specific strategy is assumed} for the agents in $\mathcal{L} \backslash \mathcal{S}$. 

 \section{Game analysis} \label{sec:analysis}
 For any $\mathcal{S} \subseteq \mathcal{L}$, we define the complement of $\mathcal{S}$ as $\mathcal{S}^c \triangleq \mathcal{L} \backslash \mathcal{S}$. 
In Section~\ref{sec:supadd}, we study the properties of the game defined in Section \ref{secmod2} and, in Section \ref{sec:candid}, we propose candidates for the secret-key rate allocation.

\subsection{Properties of the game and characterization of its core} \label{sec:supadd}

We first provide the following characterization of the value function $v$ defined in \eqref{eqvf}. %

\begin{thm} \label{propsk}
We have for any coalition $\mathcal{S} \subseteq \mathcal{L}$
\begin{align}
\max_{ \substack{a_{\mathcal{S}} \\ \in \mathfrak{S}(\mathcal{S})}} \min_{ \substack {a_{\mathcal{S}^c} \\ \in \mathfrak{S}(\mathcal{S}^c)}} \sum_{i \in \mathcal{S}}\pi_i(a_{\mathcal{S}},a_{\mathcal{S}^c}) = I\left(X_{\mathcal{S}}; X_0 | X_{\mathcal{S}^c}  \right).
\end{align}
Hence, for any $\mathcal{S} \subseteq \mathcal{L}$
\begin{align}
v(\mathcal{S}) = I\left(X_{\mathcal{S}}; X_0 | X_{\mathcal{S}^c}  \right).
\end{align}
\end{thm}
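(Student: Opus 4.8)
The plan is to build on the reduction already recorded just before the statement: since publicly disclosing $X^N_{\mathcal{S}^c}$ is a minimizing strategy for the complementary coalition, for every $a_{\mathcal{S}} \in \mathfrak{S}(\mathcal{S})$ the inner minimum is attained at $a^*_{\mathcal{S}^c}$, so that
\[
\max_{a_{\mathcal{S}} \in \mathfrak{S}(\mathcal{S})} \min_{a_{\mathcal{S}^c} \in \mathfrak{S}(\mathcal{S}^c)} \sum_{i \in \mathcal{S}} \pi_i(a_{\mathcal{S}}, a_{\mathcal{S}^c}) = \max_{a_{\mathcal{S}} \in \mathfrak{S}(\mathcal{S})} \sum_{i \in \mathcal{S}} \pi_i(a_{\mathcal{S}}, a^*_{\mathcal{S}^c}).
\]
The right-hand side is, by definition, the largest secret-key sum-rate that coalition $\mathcal{S}$ can achieve with the base station when $X^N_{\mathcal{S}^c}$ has been made public, i.e., when $X^N_{\mathcal{S}^c}$ is common side information available to the coalition, to the base station, and to any external entity. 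It therefore suffices to show that this conditional sum-rate capacity equals $I(X_{\mathcal{S}}; X_0 \mid X_{\mathcal{S}^c})$; the characterization of $v(\mathcal{S})$ then follows immediately from \eqref{eqvf}.

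For the converse (upper bound), I would relax the distributed constraint of Definition \ref{definition_modelgs} by granting coalition $\mathcal{S}$ the pooled observation $X^N_{\mathcal{S}}$ at a single virtual terminal; any distributed scheme is a special case of a pooled one, so an upper bound for the pooled problem upper-bounds $v(\mathcal{S})$. The pooled problem is a two-terminal source-model secret-key generation instance (coalition versus base station) in which the eavesdropper observes the public transcript together with the common side information $X^N_{\mathcal{S}^c}$. Starting from $\tfrac{1}{N}\log|\mathcal{K}_{\mathcal{S}}| = \tfrac{1}{N}\sum_{i\in\mathcal{S}}\log|\mathcal{K}_i|$, invoking the reliability constraint through Fano's inequality and the combined secrecy-and-uniformity formulation of Remark \ref{rem} (which merges \eqref{eqSec} and \eqref{eqU}), and applying the standard single-letterization of the secret-key converse of \cite{Ahlswede93,Maurer93} conditioned on $X^N_{\mathcal{S}^c}$, yields $\tfrac{1}{N}\log|\mathcal{K}_{\mathcal{S}}| \leq I(X_{\mathcal{S}}; X_0 \mid X_{\mathcal{S}^c}) + o(1)$.

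For achievability (lower bound), I would exhibit a distributed scheme in which, conditioned on the disclosed $X^N_{\mathcal{S}^c}$, each agent generates its own key with the base station. The Markov chain \eqref{Markov}, which renders the family $(X_i)_{i\in\mathcal{S}}$ conditionally independent given $X_0$, is precisely what makes the individually produced keys asymptotically jointly uniform and collectively secret from $(A_{\mathcal{S}}, X^N_{\mathcal{S}^c})$, and what lets their rates add up: by the chain rule $I(X_{\mathcal{S}}; X_0 \mid X_{\mathcal{S}^c}) = \sum_{i\in\mathcal{S}} I(X_i; X_0 \mid X_{\mathcal{S}^c}, X_{\mathcal{S}\cap\llbracket 1, i-1\rrbracket})$, and conditional independence collapses these terms into a realizable distributed allocation. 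This is the conditional analogue of the simultaneous-key-generation achievability of \cite{lai2013simultaneously}, and the explicit polar-coding construction of Section \ref{sec:CS} realizes it. Combining the two bounds gives the displayed equality.

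The step I expect to be the main obstacle is the distributed achievability: certifying that keys produced locally by the agents in $\mathcal{S}$, each from its own $X^N_i$ and the public transcript alone, remain jointly independent and collectively secret with respect to both the public communication and the disclosed $X^N_{\mathcal{S}^c}$, while their sum-rate still reaches the full conditional mutual information. This is exactly where the conditional-independence structure of \eqref{Markov} must be used carefully, whereas the converse is a comparatively routine single-letterization once the reduction to the pooled, conditional two-terminal model is in place.
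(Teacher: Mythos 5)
Your proposal is correct and follows essentially the same route as the paper: the converse reduces the coalition to a single pooled terminal facing the base station with $X^N_{\mathcal{S}^c}$ as eavesdropper side information and invokes the two-terminal converse of \cite{Ahlswede93,Maurer93} together with the Markov chain \eqref{Markov}, while the achievability is deferred to the explicit distributed scheme of Section~\ref{sec:CS} (Theorem~\ref{sk2} and Corollary~\ref{cor}), exactly as the paper does. The only detail you leave implicit is that verifying the chain-rule corner point lies in $\mathcal{R}_{\mathcal{S}}$ requires the submodularity argument of Appendix~\ref{App_cor}, but since you correctly identify this as the content of the postponed achievability, the structure of your argument matches the paper's.
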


\begin{proof}
Consider the secret-key generation problem described in Definitions \ref{definition_modelgs} and \ref{defs}.  $v(\mathcal{S})$ corresponds to the secret-key sum-rate capacity $C_{\mathcal{S}}$ for  coalition $\mathcal{S} \subseteq \mathcal{L}$, i.e., the maximal secret-key sum-rate $\sum_{i\in \mathcal{S}}R_i$ achievable by coalition $\mathcal{S}$. Moreover, we have 
\begin{align} \label{eqcqp}
C_{\mathcal{S}} = I\left(X_{\mathcal{S}}; X_0 | X_{ \mathcal{S}^c} \right).
\end{align}
The converse proof for \eqref{eqcqp} follows from \cite{Maurer93,Ahlswede93} by considering two legitimate users, each observing $X_{\mathcal{S}}^N$ and $X_0^N$, one Eavesdropper, observing $X^N_{ \mathcal{S}^c} $, and by the Markov chain \eqref{Markov}. The achievability part is more involved and will later follow from Corollary \ref{cor} derived in Section \ref{sec:CS}. We intentionally postpone its proof to streamline presentation. 
\end{proof}
We now review the notion of superadditivity.
\begin{defn}
A game $(\mathcal{L},v)$ is superadditive if $v: 2^{\mathcal{L}} \to \mathbb{R}^+$ is such that 
\begin{align}
\forall \mathcal{S},\mathcal{T}\subseteq \mathcal{L}, \mathcal{S} \cap\mathcal{T} = \emptyset \implies  v(\mathcal{S}) + v(\mathcal{T}) \leq v(\mathcal{S} \cup \mathcal{T}) .
\end{align}
\end{defn}

\begin{pro} \label{propod}
The game $(\mathcal{L},v)$ defined in \eqref{eqvf} is superadditive.
\end{pro}
\begin{proof}
Let $\mathcal{S},\mathcal{T}\subseteq \mathcal{L}$, $\mathcal{S} \cap\mathcal{T} = \emptyset$. We have
\begin{subequations}
\begin{align}
& v(\mathcal{S} \cup \mathcal{T}) \nonumber \\
& = I\left(X_{\mathcal{S}\cup \mathcal{T}}; X_0 | X_{\mathcal{S}^c \cap \mathcal{T}^c}  \right)\\
& = I\left(X_{\mathcal{S}}; X_0 | X_{\mathcal{S}^c \cap \mathcal{T}^c}  \right)   + I\left(X_{ \mathcal{T}}; X_0 | X_{ \mathcal{T}^c}  \right)\\
& =  H \left(X_{\mathcal{S}}| X_{\mathcal{S}^c \cap \mathcal{T}^c}  \right) - H \left(X_{\mathcal{S}}| X_0 X_{\mathcal{S}^c }  \right) \nonumber \\
& \phantom{--}+ I\left(X_{ \mathcal{T}}; X_0 | X_{ \mathcal{T}^c}  \right)\label{eqcondmarkov1}\\
&\geq  I\left(X_{ \mathcal{	S}}; X_0 | X_{ \mathcal{S}^c}  \right) + I\left(X_{ \mathcal{T}}; X_0 | X_{ \mathcal{T}^c}  \right) \label{eqcondmarkov}\\
& = v(\mathcal{S}) + v(\mathcal{T}),
\end{align}
where \eqref{eqcondmarkov1} holds by \eqref{Markov}, \eqref{eqcondmarkov} holds because conditioning reduces entropy.
\end{subequations}
\end{proof}
Superadditivity implies that there is an interest in forming a large coalition to obtain a larger secret-key sum rate, however, large coalition might not be in the individual interest of the agents, in the sense that increasing the secret-key sum-rate of a given coalition might not lead to an increased individual secret-key rate for every player in the coalition. A useful concept to overcome this complication is the core of the game.

\begin{defn}[e.g. \cite{maschler1979geometric}] \label{defcore}
The core of a superadditive game $(\mathcal{L},v)$ is defined as follows.
\begin{multline} \label{eqcore}
\mathcal{C}(v) \triangleq \\ \left\{ (R_l)_{l \in \mathcal{L}} : \sum_{l \in \mathcal{L}} R_l = v(\mathcal{L}) \text{ and }\sum_{i \in \mathcal{S}} R_i \geq v(\mathcal{S}), \forall \mathcal{S} \subset \mathcal{L} \right\}.
\end{multline}
\end{defn}

Observe that for any point in the core, the grand coalition, i.e., the coalition $\mathcal{L}$, is in the best interest to all agents, since the set of inequalities in \eqref{eqcore} ensures that no coalition of agents can increase its secret-key sum-rate by leaving the grand coalition. Observe also that for any point in the core the maximal secret-key sum rate $v(\mathcal{L})$ for the grand coalition  is achieved. %
In general, the core of a game can be empty. However, we will show that the game we have defined has a non-empty core. %

Definition \ref{defcore} further clarifies the choice of the value function~$v$. A coalition $\mathcal{S}$ wishes to be associated with a value $v(S)$ as large as possible, while the agents outside $\mathcal{S}$ wish $v(S)$ to be as small as possible to demand a higher share of $v(\mathcal{L})$. The latter achieve their goal by waiving a threat argument, which consists in arguing that they could adopt the strategy that minimizes $v(S)$ by publicly disclosing their source observations, whereas coalition $\mathcal{S}$ achieves its goal by arguing that it can always achieve the secret-key sum-rate capacity of Theorem~\ref{propsk}, irrespective of the strategy of agents in $\mathcal{S}^c$. This formulation is analogous to the one for the Gaussian multiple access channel problem studied in~\cite{la2004game}, and the Gaussian multiple access wiretap channel problem studied in~\cite{ISIT17}, where users can also form coalitions to request a larger communication sum-rate by means of jamming threats, and is generically termed as alpha effectiveness or alpha theory~\cite{aumann1960neumann,jentzsch1964some,shapley1973gameb}.

We now introduce the notion of convexity for a game to better understand the structure of the core of our game.
\begin{defn} [\!\!\cite{shapley1971cores}]
A game $(\mathcal{L},v)$ is convex if $v: 2^{\mathcal{L}} \to \mathbb{R}^+$ is supermodular, i.e., 
\begin{align}
\forall \mathcal{U}, \mathcal{V} \subseteq \mathcal{L}, v(\mathcal{U}) + v(\mathcal{V}) \leq v(\mathcal{U} \cup \mathcal{V}) + v(\mathcal{U} \cap \mathcal{V}).
\end{align}
\end{defn}
The intuition behind this definition is that supermodularity provides a stronger incentive to form coalition than superadditity. Indeed, supermodularity of a function $v: 2^{\mathcal{L}} \to \mathbb{R}^+$ can equivalently be defined as follows \cite{shapley1971cores}
\begin{multline}
\forall l \in \mathcal{L}, \forall \mathcal{T} \subseteq \mathcal{L}\backslash\{l\}, \forall \mathcal{S} \subseteq \mathcal{T},\\  v(\mathcal{S} \cup \{l\}) -  v(\mathcal{S}) \leq v(\mathcal{T} \cup \{l\}) -  v(\mathcal{T}),
\end{multline}
which means that, in addition to superaddivity, the contribution of a single agent to a given coalition increases with the size of the coalition it joins. We also refer to \cite{shapley1971cores} for other interpretations of supermodularity. 

\begin{prop} \label{propconv}
The game $(\mathcal{L},v)$ defined in \eqref{eqvf} is convex.
\end{prop}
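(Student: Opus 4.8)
The plan is to prove supermodularity of $v$ directly from the closed-form expression $v(\mathcal{S}) = I(X_{\mathcal{S}};X_0\mid X_{\mathcal{S}^c})$ established in Theorem~\ref{propsk}, reducing the combinatorial inequality to the Markov structure~\eqref{Markov} together with standard entropy inequalities. We must show, for all $\mathcal{U},\mathcal{V}\subseteq\mathcal{L}$,
\begin{align}
v(\mathcal{U}) + v(\mathcal{V}) \leq v(\mathcal{U}\cup\mathcal{V}) + v(\mathcal{U}\cap\mathcal{V}). \nonumber
\end{align}
The first step is to rewrite each term so that the conditioning sets become transparent. Using the identity $I(X_{\mathcal{S}};X_0\mid X_{\mathcal{S}^c}) = H(X_{\mathcal{S}}\mid X_{\mathcal{S}^c}) - H(X_{\mathcal{S}}\mid X_0,X_{\mathcal{S}^c})$ and then invoking the Markov chain~\eqref{Markov} (which, as in the derivation of~\eqref{eqcondmarkov1}, collapses the conditioning on the complement to conditioning on $X_0$ alone, giving $H(X_{\mathcal{S}}\mid X_0,X_{\mathcal{S}^c}) = H(X_{\mathcal{S}}\mid X_0)$), I would express $v(\mathcal{S}) = H(X_{\mathcal{S}}\mid X_{\mathcal{S}^c}) - H(X_{\mathcal{S}}\mid X_0)$. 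The second, conditional-on-$X_0$ entropy term is \emph{additive} over disjoint index sets because, given $X_0$, the components $(X_l)_{l\in\mathcal{L}}$ are mutually independent (a direct consequence of~\eqref{Markov} applied repeatedly). Hence $H(X_{\mathcal{U}}\mid X_0) + H(X_{\mathcal{V}}\mid X_0) = H(X_{\mathcal{U}\cup\mathcal{V}}\mid X_0) + H(X_{\mathcal{U}\cap\mathcal{V}}\mid X_0)$ holds with \emph{equality}, so these terms cancel exactly on both sides of the target inequality.

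It then remains to establish the supermodularity of the map $\mathcal{S}\mapsto H(X_{\mathcal{S}}\mid X_{\mathcal{S}^c})$, i.e.
\begin{align}
H(X_{\mathcal{U}}\mid X_{\mathcal{U}^c}) + H(X_{\mathcal{V}}\mid X_{\mathcal{V}^c}) \leq H(X_{\mathcal{U}\cup\mathcal{V}}\mid X_{(\mathcal{U}\cup\mathcal{V})^c}) + H(X_{\mathcal{U}\cap\mathcal{V}}\mid X_{(\mathcal{U}\cap\mathcal{V})^c}). \nonumber
\end{align}
This is where the real work lies, since here the conditioning variable also changes with $\mathcal{S}$, so no term cancels for free. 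The plan is to reduce this to the submodularity of the (unconditional) entropy function, which is the classical fact $H(X_{\mathcal{A}}) + H(X_{\mathcal{B}}) \geq H(X_{\mathcal{A}\cup\mathcal{B}}) + H(X_{\mathcal{A}\cap\mathcal{B}})$ valid for \emph{any} joint distribution. Writing each conditional entropy as a difference of joint entropies, $H(X_{\mathcal{S}}\mid X_{\mathcal{S}^c}) = H(X_{\mathcal{L}}) - H(X_{\mathcal{S}^c})$, the inequality to prove becomes, after cancelling the common $H(X_{\mathcal{L}})$ terms, precisely
\begin{align}
H(X_{\mathcal{U}^c}) + H(X_{\mathcal{V}^c}) \geq H(X_{\mathcal{U}^c\cap\mathcal{V}^c}) + H(X_{\mathcal{U}^c\cup\mathcal{V}^c}), \nonumber
\end{align}
which is exactly submodularity of unconditional entropy applied to the complementary sets $\mathcal{U}^c=(\mathcal{U})^c$ and $\mathcal{V}^c$, using the set identities $\mathcal{U}^c\cap\mathcal{V}^c=(\mathcal{U}\cup\mathcal{V})^c$ and $\mathcal{U}^c\cup\mathcal{V}^c=(\mathcal{U}\cap\mathcal{V})^c$.

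The main obstacle to watch is the bookkeeping of the complement operation: it swaps unions and intersections, so one must be careful that the direction of the submodularity inequality matches the direction required for supermodularity of $v$ — the complementation is exactly what turns the submodularity of entropy into the supermodularity of $v$, and a sign error here would be fatal. A secondary technical point is justifying the conditional-independence decomposition of $H(\cdot\mid X_0)$; I would state it cleanly as a lemma-like observation that~\eqref{Markov} implies the $(X_l)_{l\in\mathcal{L}}$ are conditionally independent given $X_0$, so that the $H(\cdot\mid X_0)$ terms split additively and cancel. Once these two ingredients are in place, the proof is essentially a two-line algebraic manipulation combining the additive cancellation of the $H(\cdot\mid X_0)$ terms with the submodularity of unconditional entropy, and no further estimation or limiting argument is needed.
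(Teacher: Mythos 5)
Your proof is correct and follows essentially the same route as the paper's: both rest on the decomposition $v(\mathcal{S}) = H(X_{\mathcal{L}}) - H(X_{\mathcal{S}^c}) - H(X_{\mathcal{S}}\mid X_0)$ (the paper's \eqref{eqconv}), and both obtain supermodularity of $\mathcal{S}\mapsto -H(X_{\mathcal{S}^c})$ by applying the submodularity of unconditional entropy to the complementary sets. The only difference is minor: you invoke the conditional independence of the $(X_l)_{l\in\mathcal{L}}$ given $X_0$ (which does follow from \eqref{Markov}) to make the $H(\cdot\mid X_0)$ terms exactly modular so that they cancel, whereas the paper only needs the weaker, distribution-free submodularity of conditional entropy (``conditioning reduces entropy'') for that term.
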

\begin{proof}
For any $\mathcal{S} \subseteq \mathcal{L}$, we have 
\begin{subequations}
\begin{align}
I(X_{\mathcal{S}}; X_0 | X_{\mathcal{S}^c})
& = H(X_{\mathcal{S}}| X_{\mathcal{S}^c})- H(X_{\mathcal{S}}| X_{\mathcal{S}^c} X_0)   \\ 
& =  H(X_{\mathcal{S}}| X_{\mathcal{S}^c})- H(X_{\mathcal{S}}|  X_0)\\
& =  H(X_{\mathcal{L}}) - H(X_{\mathcal{S}^c})- H(X_{\mathcal{S}}|  X_0), \label{eqconv}
\end{align}
\end{subequations}
where we have used the Markov chain \eqref{Markov} in the second equality.
Then, $\mathcal{S} \mapsto -H(X_{\mathcal{S}}|  X_0)$ is supermodular because for any $\mathcal{U}, \mathcal{V} \subseteq \mathcal{L}$,
\begin{subequations}
\begin{align}
&H(X_{\mathcal{U}\cup \mathcal{V}}|  X_0) + H(X_{\mathcal{U}\cap \mathcal{V}}|  X_0)\\
& = H(X_{\mathcal{U}}|  X_0) + H(X_{\mathcal{V}\backslash \mathcal{U} }| X_{\mathcal{U}} X_0) + H(X_{\mathcal{U}\cap \mathcal{V}}|  X_0)\\
& \leq H(X_{\mathcal{U}}|  X_0) + H(X_{\mathcal{V}\backslash \mathcal{U} }| X_{\mathcal{U}\cap \mathcal{V}} X_0) \nonumber \\
& \phantom{--} + H(X_{\mathcal{U}\cap \mathcal{V}}|  X_0) \label{eq14}\\
& = H(X_{\mathcal{U}}|  X_0) + H(X_{\mathcal{V}}|  X_0),
\end{align}
\end{subequations}
where \eqref{eq14} holds because conditioning reduces entropy.
Consequently, $\mathcal{S} \mapsto - H(X_{\mathcal{S}^c})$ is also supermodular since for any supermodular function $w$,  $\mathcal{S} \mapsto w(\mathcal{S}^c)$ is supermodular. Hence, by~\eqref{eqconv} we conclude that $v$ is supermodular. 
\end{proof}
A consequence of Proposition \ref{propconv} is that the core of our game is non-empty.

\begin{cor} \label{corcore}
By \cite{shapley1971cores}, any convex game has non-empty core. Hence, by Proposition \ref{propconv}, our game defined in \eqref{eqvf} has a non-empty core $\mathcal{C}(v)$.
\end{cor}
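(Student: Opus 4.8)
The plan is to go beyond merely quoting the cited existence theorem and instead exhibit an explicit element of $\mathcal{C}(v)$, since by Proposition \ref{propconv} the game is convex and convex games admit a well-understood family of core points, the \emph{marginal contribution vectors} \cite{shapley1971cores}. First I would fix an arbitrary ordering of the agents, which without loss of generality I take to be $1,2,\dots,L$, and introduce the nested prefix coalitions $\mathcal{L}_0 \triangleq \emptyset$ and $\mathcal{L}_k \triangleq \llbracket 1,k \rrbracket$ for $k \in \llbracket 1,L \rrbracket$, so that $\mathcal{L}_L = \mathcal{L}$. I would then propose the allocation given by the successive marginal contributions
\[
R_k \triangleq v(\mathcal{L}_k) - v(\mathcal{L}_{k-1}), \qquad k \in \llbracket 1,L \rrbracket ,
\]
and verify that $(R_l)_{l \in \mathcal{L}}$ satisfies the two requirements in \eqref{eqcore}.

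The efficiency constraint $\sum_{l \in \mathcal{L}} R_l = v(\mathcal{L})$ is immediate, as the sum telescopes to $v(\mathcal{L}_L) - v(\mathcal{L}_0) = v(\mathcal{L})$ using $v(\emptyset)=0$. The substantive step, and the place where convexity is genuinely used, is establishing the coalitional inequalities $\sum_{i \in \mathcal{S}} R_i \geq v(\mathcal{S})$ for every $\mathcal{S} \subseteq \mathcal{L}$. For this I would write $\mathcal{S} = \{i_1 < i_2 < \dots < i_m\}$ and lower-bound each term using the increasing-marginal-contributions form of supermodularity recalled just before Proposition \ref{propconv}: since the $\mathcal{S}$-predecessors $\{i_1,\dots,i_{j-1}\}$ of $i_j$ form a subset of the prefix $\mathcal{L}_{i_j - 1}$, applying that inequality with the element $i_j$ yields
\[
R_{i_j} = v(\mathcal{L}_{i_j}) - v(\mathcal{L}_{i_j-1}) \geq v(\{i_1,\dots,i_j\}) - v(\{i_1,\dots,i_{j-1}\}) .
\]
Summing over $j \in \llbracket 1,m \rrbracket$ makes the right-hand side telescope to $v(\mathcal{S})$, which gives the desired bound and certifies that $(R_l)_{l\in\mathcal{L}} \in \mathcal{C}(v)$, hence $\mathcal{C}(v) \neq \emptyset$.

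I expect this telescoping-with-supermodularity argument to be the only delicate point; everything else is bookkeeping, and in particular the core inequality must be checked for \emph{all} $\mathcal{S}$, not just the prefixes $\mathcal{L}_k$, which is exactly what the supermodularity step buys. A shorter, non-constructive alternative is to invoke Shapley's theorem \cite{shapley1971cores} directly, so that convexity from Proposition \ref{propconv} instantly yields $\mathcal{C}(v) \neq \emptyset$. I would nonetheless favor the explicit construction, because the marginal vectors it produces are concrete, easily computed from the closed form $v(\mathcal{S}) = I(X_{\mathcal{S}}; X_0 | X_{\mathcal{S}^c})$ of Theorem \ref{propsk}, and thus serve directly as candidate fair allocations for the discussion in Section \ref{sec:candid}.
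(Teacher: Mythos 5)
Your argument is correct, but it takes a more constructive route than the paper, which proves the corollary simply by invoking Shapley's theorem from \cite{shapley1971cores} together with Proposition \ref{propconv} — i.e., exactly the ``shorter, non-constructive alternative'' you mention at the end. What you have written out is, in effect, a self-contained proof of the relevant half of Shapley's theorem: the telescoping step is right, the supermodularity inequality is applied with the legitimate inclusion $\{i_1,\dots,i_{j-1}\} \subseteq \llbracket 1, i_j-1 \rrbracket$, and $v(\emptyset)=0$ holds from $v(\mathcal{S}) = I(X_{\mathcal{S}};X_0|X_{\mathcal{S}^c})$, so the marginal vector you construct does lie in $\mathcal{C}(v)$. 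The trade-off is clear: the paper's one-line citation is all that is logically needed, while your construction additionally exhibits an explicit core point computable from the closed form of $v$ — and these marginal vectors are precisely the vertices of the core whose center of gravity is the Shapley value, a fact the paper itself uses later in Section \ref{sec:candid}, so your extra work dovetails naturally with the subsequent discussion.
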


\begin{rem}
From a geometric point of view, $\left\{ (R_l)_{l \in \mathcal{L}} : \sum_{i \in \mathcal{S}} R_i \geq v(\mathcal{S}), \forall \mathcal{S} \subset \mathcal{L} \right\}$ is a contrapolymatroid~\cite{edmonds2003submodular} when $v$ is convex, and its intersection with the hyperplane $\left\{ (R_l)_{l \in \mathcal{L}} : \sum_{l \in \mathcal{L}} R_l = v(\mathcal{L}) \right\}$ forms the core of $v$\cite{shapley1971cores}. See Example \ref{ex} and Figure \ref{fig:core} for an illustration.
\end{rem}
\begin{rem}
In the case of a convex game, the core coincides with the bargaining set for the grand coalition \cite{maschler1971kernel} and thus admits an alternative interpretation, in terms of stable allocations resulting from a sequence of ``threats" and ``counter-threats", see \cite{maschler1971kernel,aumann1964bargaining} for further details.
\end{rem}

We provide an alternative characterization of the core that will turn out to be useful in the following. It can also be viewed as a converse for our problem since the secret-key rate-tuples in the core are upper-bounded.%
\begin{thm} \label{propcore}
The core $\mathcal{C}(v)$ of the game $(\mathcal{L},v)$ defined in \eqref{eqvf} is given by
\begin{align}
& \smash{\left\{ (R_l)_{l \in \mathcal{L}} : \forall \mathcal{S} \subseteq \mathcal{L}, \phantom{\frac{1}{N}} \right.} \nonumber \\
& \left. \phantom{\frac{1}{N}-}  I(X_{\mathcal{S}};X_0) - I(X_{\mathcal{S}};X_{\mathcal{S}^c}) \leq \smash{\sum_{i \in \mathcal{S}}} R_i \leq I(X_{\mathcal{S}};X_0) \right\}.
\end{align}
\end{thm}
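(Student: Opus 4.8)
The plan is to prove set equality by establishing two information-theoretic identities and then matching the defining constraints of each set against the other.

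First I would record the identity that converts the stated lower bound into the usual core constraint. Using $v(\mathcal{S}) = I(X_{\mathcal{S}}; X_0 \mid X_{\mathcal{S}^c})$ from Theorem~\ref{propsk} together with the Markov chain \eqref{Markov} (which yields $H(X_{\mathcal{S}} \mid X_0, X_{\mathcal{S}^c}) = H(X_{\mathcal{S}} \mid X_0)$), I obtain
\begin{align}
I(X_{\mathcal{S}}; X_0) - I(X_{\mathcal{S}}; X_{\mathcal{S}^c}) = H(X_{\mathcal{S}} \mid X_{\mathcal{S}^c}) - H(X_{\mathcal{S}} \mid X_0) = v(\mathcal{S}),
\end{align}
so the lower bound in the claimed set is precisely $\sum_{i \in \mathcal{S}} R_i \geq v(\mathcal{S})$. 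In particular, for $\mathcal{S} = \mathcal{L}$ (where $\mathcal{S}^c = \emptyset$) the lower and upper bounds collapse to $\sum_{l \in \mathcal{L}} R_l = I(X_{\mathcal{L}}; X_0) = v(\mathcal{L})$, recovering the sum-rate equality of Definition~\ref{defcore}.

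The second identity I need produces the upper bound. By the chain rule $I(X_{\mathcal{L}}; X_0) = I(X_{\mathcal{S}}; X_0) + I(X_{\mathcal{S}^c}; X_0 \mid X_{\mathcal{S}})$, and since $v(\mathcal{L}) = I(X_{\mathcal{L}}; X_0)$ and $v(\mathcal{S}^c) = I(X_{\mathcal{S}^c}; X_0 \mid X_{\mathcal{S}})$, I get
\begin{align}
v(\mathcal{L}) - v(\mathcal{S}^c) = I(X_{\mathcal{S}}; X_0).
\end{align}
With both identities in hand the set equality follows by comparing constraints. For the forward inclusion, take $(R_l)_{l \in \mathcal{L}} \in \mathcal{C}(v)$: the core lower bounds and the first identity give $\sum_{i \in \mathcal{S}} R_i \geq v(\mathcal{S})$ directly, and for the upper bound I write $\sum_{i \in \mathcal{S}} R_i = v(\mathcal{L}) - \sum_{i \in \mathcal{S}^c} R_i$ using the sum-rate equality, then apply $\sum_{i \in \mathcal{S}^c} R_i \geq v(\mathcal{S}^c)$ and the second identity to conclude $\sum_{i \in \mathcal{S}} R_i \leq v(\mathcal{L}) - v(\mathcal{S}^c) = I(X_{\mathcal{S}}; X_0)$. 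For the reverse inclusion, a point satisfying the claimed constraints obeys $\sum_{i \in \mathcal{S}} R_i \geq v(\mathcal{S})$ for all $\mathcal{S} \subset \mathcal{L}$ by the first identity, while sandwiching the $\mathcal{S} = \mathcal{L}$ bounds yields $\sum_{l \in \mathcal{L}} R_l = v(\mathcal{L})$; these are exactly the defining conditions of $\mathcal{C}(v)$.

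The argument is essentially bookkeeping with entropies; the only genuinely new ingredient relative to the bare core definition is the second identity $v(\mathcal{L}) - v(\mathcal{S}^c) = I(X_{\mathcal{S}}; X_0)$, which reveals that the upper bounds are not additional constraints but are automatically enforced by the sum-rate equality together with the complementary coalition's lower bound. The main care I would take is with the degenerate cases: $\mathcal{S} = \emptyset$ and $\mathcal{S} = \mathcal{L}$ must be checked separately (both are trivial or reduce to the sum-rate equality), and whenever I invoke a core lower bound on $\mathcal{S}^c$ I must ensure $\mathcal{S}^c \subset \mathcal{L}$, i.e. $\mathcal{S} \neq \emptyset$.
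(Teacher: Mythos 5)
Your proof is correct and follows essentially the same route as the paper: both rest on the two identities $v(\mathcal{S}) = I(X_{\mathcal{S}};X_0) - I(X_{\mathcal{S}};X_{\mathcal{S}^c})$ (via the Markov chain \eqref{Markov}) and $v(\mathcal{L}) - v(\mathcal{S}^c) = I(X_{\mathcal{S}};X_0)$ (via the chain rule), combined with the observation that the sum-rate equality converts the complement's lower bound into the upper bound. The paper presents this as a chain of equivalences rather than two explicit inclusions, but the content is identical.
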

\begin{proof}
We have the following equivalences
\begin{subequations}
\begin{align*}
& \left(\sum_{l \in \mathcal{L}} R_l = v(\mathcal{L}) \text{ and }\sum_{i \in \mathcal{S}} R_i \geq v(\mathcal{S}), \forall \mathcal{S} \subset \mathcal{L} \right)\\
&\iff \nonumber \\
& \left( \sum_{i \in \mathcal{S}} R_i = v(\mathcal{L}) -  \sum_{i \in \mathcal{S}^c} R_i \text{ and }\sum_{i \in \mathcal{S}} R_i \geq v(\mathcal{S}), \forall \mathcal{S} \subset \mathcal{L} \right)\\
& \iff \nonumber \\
& \left( v(\mathcal{L}) -  v(\mathcal{S}^c) \geq \sum_{i \in \mathcal{S}} R_i \geq v(\mathcal{S}), \forall \mathcal{S} \subseteq \mathcal{L} \right).
\end{align*}
\end{subequations}
Finally, for any $\mathcal{S} \subseteq \mathcal{L}$, we have
\begin{subequations}
\begin{align}
v(\mathcal{L}) - v(\mathcal{S}^c)
& = I(X_{\mathcal{L}};X_0) - I(X_{\mathcal{S}^c};X_0|X_{\mathcal{S}}) \\
& = I(X_{\mathcal{S}};X_0),
\end{align}
\end{subequations}
and by the Markov chain \eqref{Markov}, \begin{align}
v(\mathcal{S})
 = I(X_{\mathcal{S}};X_0) - I(X_{\mathcal{S}};X_{\mathcal{S}^c}). 
\end{align}
\end{proof}
\subsection{Candidates for the secret-key rate allocation} \label{sec:candid}
Although $\mathcal{C}(v)$ has been shown to be non-empty in Section~\ref{sec:supadd}, a remaining issue is now to choose a specific rate-tuple allocation in the core. Shapley introduced a solution concept to ensure fairness according to the following axioms.

\begin{enumerate}[(i)]
\item Efficiency axiom, i.e., the secret-key sum-rate capacity for the grand coalition is achieved; 

\item Symmetry axiom, i.e., any two agents that equally contribute to any coalition in the sense that for any $i,j \in \mathcal{L}$, for any $\mathcal{S} \subseteq{L}$ such that $i\neq j$ and $i,j\notin \mathcal{S}$, $v(\mathcal{S}\cup \{i\})=v(\mathcal{S}\cup \{j\})$, obtain the same individual secret-key rate;

\item Dummy axiom, i.e., any agent that does not bring value to any coalition he can join, in the sense,  for any $i \in \mathcal{L}$, for any $\mathcal{S} \subseteq{L}$ such that $i \notin \mathcal{S}$, $v(\mathcal{S}\cup \{i\})=v(\mathcal{S})$, receives a null secret-key rate; 

\item Additivity axiom, i.e., for any two games $v$ and $u$ played by the agents, the individual secret-key length obtained by an agent for the game $u+v$, is the sum of secret-key lengths when $u$ and $v$ are played separately. 
\end{enumerate}

We give an interpretation of the additivity axiom in the situation described next.
		 For $i \in \{ 1,2\}$, assume that the agents collect $N$ source observations, denoted by $(X^N_{\mathcal{L},i},X_{0,i}^N)$, for a source with fixed probability distribution $p_{X_{\mathcal{L},i}X_{0,i}}$. For instance, when the source is obtained from channel gains measurements~\cite{wilson2007channel,wallace2010automatic,ye2010information,pierrot2013experimental}, the source statistics will change if the agents randomly change their physical location over time. We also assume that $p_{X_{\mathcal{L},1}X_{0,1}X_{\mathcal{L},2}X_{0,2}}= p_{X_{\mathcal{L},1}X_{0,1}} p_{X_{\mathcal{L},2}X_{0,2}}$. %
Let $i \in \{ 1,2\}$ and define $\bar{i} = 3-i$. We define the game $v_i$ as extracting keys from $(X^N_{\mathcal{L},i},X_{0,i}^N)$ \emph{when the distribution $p_{X_{\mathcal{L},\bar{i}},X_{0,\bar{i}}}$ is unknown}, and we define the game $w$ as extracting keys from $(X^N_{\mathcal{L},1},X_{0,1}^N,X^N_{\mathcal{L},2},X_{0,2}^N)$ \emph{when both distributions  $p_{X_{\mathcal{L},1}X_{0,1}}$ and  $p_{X_{\mathcal{L},2}X_{0,2}}$ are known}. By independence and by Theorem~\ref{propsk}, the value function associated with $w$ is the sum of the value functions of $v_1$ and $v_2$. In this setup, we interpret the additivity axiom as follows.  If the agents extract keys from $(X^N_{\mathcal{L},i},X_{0,i}^N)$ being ignorant of the distribution  $p_{X_{\mathcal{L},\bar{i}},X_{0,\bar{i}}}$, then they obtain the same payoff as if they had to extract keys from $(X^N_{\mathcal{L},1},X_{0,1}^N,X^N_{\mathcal{L},2},X_{0,2}^N)$ with the knowledge of both  $p_{X_{\mathcal{L},1},X_{0,1}}$ and  $p_{X_{\mathcal{L},2},X_{0,2}}$.  	
	Hence, under the same assumptions, the additivity axiom means that even if the agents do not know in advance the number $M$ of  source observation batches from independent sources they are going to obtain, they can,  after obtaining each batch of observations, successively generate keys without this knowledge and obtain the same individual key lengths as if they had waited to obtain the $M$ batches to generate keys.

\begin{ex}
We have the following intuitive property.	If there exist $i,j\in \mathcal{L}$ such that $i\neq j$ and $p_{X_i|X_0}= p_{X_j|X_0}$, then Agent $i$ and Agent~$j$ satisfy the symmetry axiom described above. 
\end{ex}
\begin{proof}
Let $i,j \in \mathcal{L}$, and $\mathcal{S} \subseteq{L}$ such that $i\neq j$ and $i,j\notin \mathcal{S}$. Define $\overline{\mathcal{S}^c} \triangleq \mathcal{S}^c \backslash \{i,j\}$. We have
\begin{subequations}	
\begin{align}
& v(\mathcal{S} \cup \{ i \}) \nonumber \\
& = H(X_{\mathcal{L}}) - H(X_{(\mathcal{S}\cup \{ i \})^c})- H(X_{\mathcal{S} \cup \{ i \}}|  X_0) \label{eq19a}\\
& = H(X_{\mathcal{L}}) - H(X_jX_{\overline{\mathcal{S}^c}})- H(X_i|X_0) - H(X_{\mathcal{S} }|  X_0) \label{eq19b} \\
& = H(X_{\mathcal{L}}) - H(X_iX_{\overline{\mathcal{S}^c}})- H(X_j|X_0) - H(X_{\mathcal{S} }|  X_0) \label{eq19c}\\
& = v(\mathcal{S} \cup \{ j \}), \label{eq19d}
\end{align}
\end{subequations}
where \eqref{eq19a} holds by \eqref{eqconv}, \eqref{eq19b} holds by definition of $\overline{\mathcal{S}^c}$ and by the Markov chain \eqref{Markov}, \eqref{eq19c} holds because by the Markov chain \eqref{Markov} $p_{X_iX_{\overline{\mathcal{S}^c}}X_0} = p_{X_i|X_0} p_{X_{\overline{\mathcal{S}^c}}X_0}= p_{X_j|X_0} p_{X_{\overline{\mathcal{S}^c}}X_0} =  p_{X_jX_{\overline{\mathcal{S}^c}}X_0}$, which implies by marginalization over $X_0$, $p_{X_iX_{\overline{\mathcal{S}^c}}} = p_{X_jX_{\overline{\mathcal{S}^c}}}$, which in turn implies $H(X_iX_{\overline{\mathcal{S}^c}}) = H(X_jX_{\overline{\mathcal{S}^c}})$, \eqref{eq19d} holds similar to \eqref{eq19a} and~\eqref{eq19b}.
\end{proof}

\begin{prop} [e.g. \cite{ichiishi2014game}]
Given a coalitional  game $(\mathcal{L},v)$, there exists a unique $L$-tuple $\left(R^{\textup{Shap}}_l\right)_{l\in \mathcal{L}}$ that satisfies the efficiency, symmetry, dummy, and additivity axiom described above.  $\left(R^{\textup{Shap}}_l\right)_{l\in \mathcal{L}}$ is called the Shapley value.
 \end{prop}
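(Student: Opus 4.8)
The plan is to prove the statement in the classical two-part fashion: first exhibit an explicit allocation that satisfies the four axioms (existence), then show that the axioms force that allocation (uniqueness). Throughout I would work in the real vector space $\mathcal{G}$ of coalitional games on the fixed player set $\mathcal{L}$, i.e., set functions $v : 2^{\mathcal{L}} \to \mathbb{R}$ with $v(\emptyset) = 0$; this space has dimension $2^L - 1$, indexed by the nonempty coalitions. A convenient basis is the family of \emph{unanimity games} $(u_{\mathcal{T}})_{\mathcal{T} \subseteq \mathcal{L}, \mathcal{T} \neq \emptyset}$ defined by $u_{\mathcal{T}}(\mathcal{S}) = 1$ if $\mathcal{T} \subseteq \mathcal{S}$ and $u_{\mathcal{T}}(\mathcal{S}) = 0$ otherwise. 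Every game decomposes uniquely as $v = \sum_{\mathcal{T} \neq \emptyset} c_{\mathcal{T}} u_{\mathcal{T}}$, where the coefficients are recovered by M\"obius inversion, $c_{\mathcal{T}} = \sum_{\mathcal{R} \subseteq \mathcal{T}} (-1)^{|\mathcal{T}| - |\mathcal{R}|} v(\mathcal{R})$. This decomposition is the engine that drives both parts of the argument.

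For existence, I would propose the explicit candidate
\begin{equation*}
R_l^{\textup{Shap}} = \sum_{\mathcal{S} \subseteq \mathcal{L} \backslash \{l\}} \frac{|\mathcal{S}|!\,(L - |\mathcal{S}| - 1)!}{L!} \bigl( v(\mathcal{S} \cup \{l\}) - v(\mathcal{S}) \bigr)
\end{equation*}
and verify the four axioms directly. The efficiency axiom follows from the random-order interpretation: $R_l^{\textup{Shap}}$ is the expected marginal contribution of agent $l$ when the agents enter in a uniformly random order, and for each fixed order the marginal contributions sum telescopically to $v(\mathcal{L}) - v(\emptyset) = v(\mathcal{L})$. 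The symmetry and dummy axioms are immediate from the symmetric, marginal-contribution form of the weights and the quantity $v(\mathcal{S} \cup \{l\}) - v(\mathcal{S})$, and the additivity axiom holds because the formula is manifestly linear in $v$.

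For uniqueness, let $\phi = (\phi_l)_{l \in \mathcal{L}}$ be \emph{any} map satisfying the four axioms. The key observation is that $\phi$ is already determined on every scaled unanimity game $c\,u_{\mathcal{T}}$ with $c \geq 0$: each agent outside $\mathcal{T}$ is a dummy and hence receives $0$, while the agents inside $\mathcal{T}$ are mutually symmetric, so by symmetry and efficiency each receives $c/|\mathcal{T}|$. I would then extend this to an arbitrary $v = \sum_{\mathcal{T}} c_{\mathcal{T}} u_{\mathcal{T}}$ using additivity. The subtle point, and the step I expect to be the main obstacle, is that the coefficients $c_{\mathcal{T}}$ may be negative, whereas additivity combines games (and the values pinned down above require nonnegative scalars). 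I would resolve this by transposing the negative terms, writing
\begin{equation*}
v + \sum_{\mathcal{T} :\, c_{\mathcal{T}} < 0} (-c_{\mathcal{T}})\, u_{\mathcal{T}} = \sum_{\mathcal{T} :\, c_{\mathcal{T}} > 0} c_{\mathcal{T}}\, u_{\mathcal{T}},
\end{equation*}
so that both sides are nonnegative combinations of unanimity games. Applying additivity to each side and using that $\phi$ is already determined on every nonnegative $c_{\mathcal{T}} u_{\mathcal{T}}$ yields a single linear equation in which $\phi(v)$ is the only unknown, forcing $\phi(v)$ to coincide with the candidate of the existence part. Since that candidate satisfies all four axioms and $\phi$ was arbitrary, the allocation is unique, which identifies it as $(R_l^{\textup{Shap}})_{l \in \mathcal{L}}$.
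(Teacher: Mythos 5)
Your proposal is correct. The paper does not prove this proposition at all --- it is stated as a classical result with a citation to a game-theory textbook --- and what you have written is precisely the standard Shapley argument found there: existence via the explicit marginal-contribution formula (which is the same expression the paper later uses in its Proposition on the Shapley value of this particular game), and uniqueness via the unanimity-game basis with M\"obius coefficients, transposing the negative terms so that additivity is only ever applied to genuine games. You correctly identified and handled the one real subtlety (additivity alone does not let you evaluate $\phi$ on a signed combination directly), so there is no gap.
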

For convex games, the  Shapley value is in the core, and is explicited in the following proposition.

\begin{prop} \label{propshap}
The Shapley value of $(\mathcal{L},v)$ defined in \eqref{eqvf} is in $\mathcal{C}(v)$ and is given by $\forall l \in \mathcal{L}$,
\begin{subequations}
\begin{align}
R_l^{\textup{Shap}} 
& = \sum_{\mathcal{S} \subseteq \mathcal{L} \backslash \{l\}} \frac{|\mathcal{S}|!(L-|\mathcal{S}| -1)!}{L!}  \left( v(\mathcal{S} \cup \{ l \}) - v(\mathcal{S}) \right)\label{eqShap1}\\
& = I\left(X_l;X_0\right) - \frac{1}{L}\sum_{\mathcal{S} \subseteq \mathcal{L} \backslash \{l\}} \dbinom{L-1}{|\mathcal{S}|}^{-1}  I\left(X_l;X_{\mathcal{S}}\right) .\label{eqShap}
\end{align}
\end{subequations}
\end{prop}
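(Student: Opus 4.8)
The plan is to verify that the two displayed expressions for $R_l^{\textup{Shap}}$ agree, and separately to place the value in the core. The latter is immediate: by Proposition \ref{propconv} the game is convex, and Shapley's theorem \cite{shapley1971cores} guarantees that the Shapley value of any convex game belongs to its core, so $\left(R_l^{\textup{Shap}}\right)_{l\in\mathcal{L}} \in \mathcal{C}(v)$. Since \eqref{eqShap1} is merely the defining formula of the Shapley value, the substance of the proof is to derive the closed form \eqref{eqShap} from it.

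First I would compute the marginal contribution $v(\mathcal{S}\cup\{l\}) - v(\mathcal{S})$ for $\mathcal{S}\subseteq\mathcal{L}\backslash\{l\}$. Starting from the decomposition $v(\mathcal{S}) = H(X_{\mathcal{L}}) - H(X_{\mathcal{S}^c}) - H(X_{\mathcal{S}}\mid X_0)$ established in \eqref{eqconv}, and using that $(\mathcal{S}\cup\{l\})^c = \mathcal{S}^c\backslash\{l\}$, the two terms $H(X_{\mathcal{L}})$ cancel and the remaining differences reduce, by the chain rule, to
\begin{align}
v(\mathcal{S}\cup\{l\}) - v(\mathcal{S}) = H(X_l \mid X_{\mathcal{S}^c\backslash\{l\}}) - H(X_l \mid X_{\mathcal{S}}X_0).
\end{align}
Here the Markov chain \eqref{Markov}, applied with the disjoint sets $\{l\}$ and $\mathcal{S}$, gives $H(X_l\mid X_{\mathcal{S}}X_0) = H(X_l\mid X_0)$, so the marginal contribution collapses to $H(X_l\mid X_{\mathcal{S}^c\backslash\{l\}}) - H(X_l\mid X_0)$.

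Next I would substitute this into \eqref{eqShap1}. The Shapley weights $\tfrac{|\mathcal{S}|!(L-|\mathcal{S}|-1)!}{L!}$ sum to one over all $\mathcal{S}\subseteq\mathcal{L}\backslash\{l\}$, so the $-H(X_l\mid X_0)$ term contributes exactly $-H(X_l\mid X_0)$. For the remaining sum I would reindex by the complement $\mathcal{R} \triangleq \mathcal{S}^c\backslash\{l\} = \mathcal{L}\backslash(\mathcal{S}\cup\{l\})$, which again ranges over all subsets of $\mathcal{L}\backslash\{l\}$ and satisfies $|\mathcal{R}| = L-1-|\mathcal{S}|$, turning the weight into $\tfrac{|\mathcal{R}|!(L-1-|\mathcal{R}|)!}{L!}$. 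Writing $H(X_l\mid X_{\mathcal{R}}) = H(X_l) - I(X_l;X_{\mathcal{R}})$ and once more using that the weights sum to one, the $H(X_l)$ contribution combines with $-H(X_l\mid X_0)$ into $I(X_l;X_0)$. Finally I would use the identity $\tfrac{|\mathcal{R}|!(L-1-|\mathcal{R}|)!}{L!} = \tfrac{1}{L}\binom{L-1}{|\mathcal{R}|}^{-1}$, which yields exactly \eqref{eqShap}.

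The computation is essentially mechanical; the only genuinely delicate point is the simplification of the marginal contribution, where the cancellation of the entropy terms together with the conditional-independence identity $H(X_l\mid X_{\mathcal{S}}X_0) = H(X_l\mid X_0)$ coming from \eqref{Markov} is what makes the whole expression collapse to a difference of two conditional entropies. The subsequent reindexing-by-complement and the binomial-weight identity are routine bookkeeping.
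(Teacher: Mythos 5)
Your proposal is correct and follows essentially the same route as the paper: core membership from convexity via \cite{shapley1971cores}, the marginal contribution computed from the decomposition \eqref{eqconv} and the Markov chain \eqref{Markov} as $I(X_l;X_0)-I(X_l;X_{\mathcal{S}^c\backslash\{l\}})$, then the reindexing by complement and the binomial-weight identity. The only difference is cosmetic (you reindex before converting conditional entropies to mutual informations, the paper after), so there is nothing to add.
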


\begin{proof}
The fact that the Shapley value belongs to the core follows by~\cite{shapley1971cores} from the convexity of $(\mathcal{L},v)$ proved in Proposition \ref{propconv}. \eqref{eqShap1} is also from \cite{shapley1971cores}. \eqref{eqShap} is obtained by remarking that for any $l \in \mathcal{L}$, for any $\mathcal{S} \subseteq \mathcal{L} \backslash \{l\}$
\begin{subequations}
\begin{align}
&v(\mathcal{S} \cup \{ l \}) - v(\mathcal{S}) \nonumber \\
& = H(X_{\mathcal{S}^c})+ H(X_{\mathcal{S}}|  X_0) - H(X_{(\mathcal{S}\cup \{ l \})^c}) \nonumber \\
& \phantom{--}- H(X_{\mathcal{S} \cup \{ l \}}|  X_0) \label{eq22a}\\
& = H(X_{\mathcal{S}^c\cap \{l\}} | X_{\mathcal{S}^c\backslash \{ l \}})- H(X_l |X_0 X_{\mathcal{S}}) \\
& = H\left(X_l|X_{\mathcal{S}^c \backslash \{l\}}\right) -  H \left(X_l|X_0\right) \label{eq22b}\\
& = I\left(X_l;X_0\right) - I\left(X_l;X_{\mathcal{S}^c \backslash \{l\}}\right),
\end{align}
\end{subequations}
where \eqref{eq22a} holds by \eqref{eqconv}, \eqref{eq22b} holds because $l \notin \mathcal{S}$ and by the Markov chain \eqref{Markov}.
Finally, we conclude by observing that
\begin{multline}
\sum_{\mathcal{S} \subseteq \mathcal{L} \backslash \{l\}} \frac{|\mathcal{S}|!(L-|\mathcal{S}| -1)!}{L!} \\ = \sum_{k=0}^{L-1} \dbinom{L-1}{k} \frac{k!(L-k -1)!}{L!} = 1,
\end{multline}
and that a change of variables yields
\begin{multline}
 \sum_{\mathcal{S} \subseteq \mathcal{L} \backslash \{l\}} \frac{|\mathcal{S}|!(L-|\mathcal{S}| -1)!}{L!}  I\left(X_l;X_{\mathcal{S}^c \backslash \{l\}}\right)\\ = \sum_{\mathcal{S} \subseteq \mathcal{L} \backslash \{l\}} \frac{|\mathcal{S}|!(L-|\mathcal{S}| -1)!}{L!}  I\left(X_l;X_{\mathcal{S}}\right). 
\end{multline}
\end{proof}

\begin{rem}
Geometrically, the Shapley value corresponds to the center of gravity of the vertices of $\mathcal{C}(v)$ \cite{shapley1971cores}.  See Example~\ref{ex} and Figure \ref{fig:core} for an illustration.
\end{rem}

Observe that \eqref{eqShap} quantifies the difference of key length obtained for Agent $l$, $l \in \mathcal{L}$, between the case $L=1$ and the case $L>1$. Note also that the term $\frac{1}{L}\sum_{\mathcal{S} \subseteq \mathcal{L} \backslash \{l\}} \dbinom{L-1}{|\mathcal{S}|}^{-1}  I\left(X_l;X_{\mathcal{S}}\right)$ is upper-bounded by $I(X_{l};X_{\mathcal{L}\backslash \{l\}})$ according to Theorem \ref{propcore} since the Shapley value belongs to the core.

Note that the Shapley value might not always be meaningful as a solution concept. In particular, the additivity axiom might not always be relevant in our problem, for instance, if the agents do not obtain several batches of  observations from sources with independent statistics. 
Finding an axiomatized solution concept that could be universally agreed upon in our setting remains an open problem.

We next discuss the nucleolus as solution concept and one of its non-axiomatized interpretation that has attracted a certain interest in many studies.

\begin{defn} [\!\!\cite{schmeidler1969nucleolus}]
Define the set $\mathcal{Y} \triangleq \{ \mathbf{y} = (y_i)_{i \in \mathcal{L}} \in \mathbb{R}^L_+ : \sum_{i\in \mathcal{L}}y_i = v(\mathcal{L})\}$. For $\mathbf{y} \in \mathcal{Y}$, for $\mathcal{S} \in 2^{\mathcal{L}}$, define the excess $e (\mathbf{y},\mathcal{S}) \triangleq v(\mathcal{S}) - \sum_{i\in \mathcal{S}}y_i$,
and define the vector $\theta (\mathbf{y})= (\theta_i (\mathbf{y}))_{i \in \llbracket 1, 2^L \rrbracket}\in \mathbb{R}^{2^L}$ as $(e (\mathbf{y},\mathcal{S}))_{\mathcal{S} \in 2^{\mathcal{L}}}$ sorted in nonincreasing order, i.e., for $i,j \in \llbracket 1, 2^L \rrbracket, i<j \implies \theta_i (\mathbf{y}) \geq \theta_j (\mathbf{y})$. 
The nucleolus is defined as
\begin{align}
\{ \mathbf{y}_0 \in \mathcal{Y} : \theta (\mathbf{y}_0) \preceq \theta (\mathbf{y}), \forall y \in \mathcal{Y}\},
\end{align}
where $\preceq$ denote the lexicographic order, i.e., for $\mathbf{y}^{(1)}, \mathbf{y}^{(2)} \in \mathcal{Y}$, 
\begin{multline}
\left(\mathbf{y}^{(1)} \preceq \mathbf{y}^{(2)} \right)
\iff \left( \mathbf{y}^{(1)} = \mathbf{y}^{(2)}  \right. \\ \left. \text{ or } \exists i_0 , \left( \forall j < i_0, y^{(1)}_j =y^{(2)}_j  \text{ and } y^{(1)}_{i_0} < y^{(2)}_{i_0} \right) \right).
\end{multline}
\end{defn}
A possible interpretation of the nucleolus is to see the excess $e (\mathbf{y},\mathcal{S}) \triangleq v(\mathcal{S}) - \sum_{i\in \mathcal{S}}y_i$ for some $\mathbf{y} \in \mathcal{Y}$, $\mathcal{S} \in 2^{\mathcal{L}}$, as an indicator of dissatisfaction of coalition $\mathcal{S}$ associated with $\mathbf{y}$ (the higher the excess, the higher the dissatisfaction). One thus might want to choose the $\mathbf{y}$ that minimizes the maximal excess, i.e., the first component of $\theta$. If several choices for $\mathbf{y}$ are possible, one can decide to select $\mathbf{y}$ such that the second largest excess, i.e., the second component of $\theta$, is minimized. One can then continue until a unique choice for $\mathbf{y}$ is obtained as stated in Proposition \ref{propnucl}.
This interpretation appears, for instance, in~\cite{maschler1979geometric}. 
 
 \begin{prop}[\!\!\cite{schmeidler1969nucleolus}] \label{propnucl}
 For a convex game, the nucleolus is a singleton and belongs to the core. 
 \end{prop}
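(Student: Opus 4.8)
The plan is to split Proposition~\ref{propnucl} into its two assertions—that the nucleolus is a singleton, and that it belongs to $\mathcal{C}(v)$—and to handle each separately, relying on the affine structure of the excess together with the non-emptiness of the core already secured by Corollary~\ref{corcore}. First I would record the underlying topology: $\mathcal{Y} = \{\mathbf{y} \in \mathbb{R}^L_+ : \sum_{i \in \mathcal{L}} y_i = v(\mathcal{L})\}$ is non-empty (it contains the core), compact and convex, and for each fixed $\mathcal{S}$ the map $\mathbf{y} \mapsto e(\mathbf{y},\mathcal{S}) = v(\mathcal{S}) - \sum_{i\in\mathcal{S}} y_i$ is affine. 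Existence of at least one nucleolus is then immediate from compactness of $\mathcal{Y}$ and lower semicontinuity of the lexicographic objective $\mathbf{y} \mapsto \theta(\mathbf{y})$.

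For the singleton claim, I would argue by contradiction via a midpoint. Suppose $\mathbf{y}^{(1)}, \mathbf{y}^{(2)}$ both realize the nucleolus, so $\theta(\mathbf{y}^{(1)}) = \theta(\mathbf{y}^{(2)}) =: \theta^*$, and set $\mathbf{y} = \tfrac12(\mathbf{y}^{(1)} + \mathbf{y}^{(2)}) \in \mathcal{Y}$. The key observation is that for each $k$ the partial sum $f_k(\mathbf{y}) \triangleq \sum_{i=1}^{k} \theta_i(\mathbf{y})$ equals the maximum, over all $k$-element families of coalitions, of the sum of the corresponding excesses; being a maximum of affine functions it is convex. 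Hence $f_k(\mathbf{y}) \leq \tfrac12(f_k(\mathbf{y}^{(1)}) + f_k(\mathbf{y}^{(2)})) = \sum_{i=1}^k \theta_i^*$ for every $k$, from which $\theta(\mathbf{y}) \preceq \theta^*$ follows (if $j$ were the first index with $\theta_j(\mathbf{y}) > \theta_j^*$, then $f_j(\mathbf{y}) > \sum_{i=1}^j \theta_i^*$, a contradiction). By lexicographic minimality $\theta(\mathbf{y}) = \theta^*$, so each $f_k$ attains the convexity bound with equality, i.e. $f_k$ is affine along the segment $[\mathbf{y}^{(1)}, \mathbf{y}^{(2)}]$. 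Tracking, level by level, the coalitions that are \emph{active} (attain the common excess value) and using that their incidence vectors eventually span $\mathbb{R}^L$, I would conclude $\sum_{i\in\mathcal{S}} y_i^{(1)} = \sum_{i\in\mathcal{S}} y_i^{(2)}$ for enough coalitions $\mathcal{S}$ to force $\mathbf{y}^{(1)} = \mathbf{y}^{(2)}$.

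For core membership, I would use a single core point as a certificate. By convexity of the game, Corollary~\ref{corcore} furnishes $\mathbf{z} \in \mathcal{C}(v) \subseteq \mathcal{Y}$, and by definition of the core every excess satisfies $e(\mathbf{z}, \mathcal{S}) \leq 0$, with $e(\mathbf{z}, \mathcal{L}) = 0$; hence $\theta_1(\mathbf{z}) = 0$. Letting $\mathbf{y}_0$ denote the nucleolus, minimality of the first coordinate gives $\theta_1(\mathbf{y}_0) \leq \theta_1(\mathbf{z}) = 0$, while $\theta_1(\mathbf{y}_0) \geq e(\mathbf{y}_0, \mathcal{L}) = 0$ always holds, so $\theta_1(\mathbf{y}_0) = 0$. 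This means $e(\mathbf{y}_0, \mathcal{S}) \leq 0$, i.e. $\sum_{i\in\mathcal{S}} (y_0)_i \geq v(\mathcal{S})$, for all $\mathcal{S} \subseteq \mathcal{L}$; combined with $\mathbf{y}_0 \in \mathcal{Y}$ this is precisely the definition of $\mathcal{C}(v)$, whence $\mathbf{y}_0 \in \mathcal{C}(v)$.

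I expect the main obstacle to be the final extraction step in the uniqueness argument, namely passing from ``every partial sum $f_k$ is minimized'' to ``$\mathbf{y}^{(1)} = \mathbf{y}^{(2)}$.'' The midpoint convexity argument cleanly forces the sorted excess vectors to coincide, but recovering equality of the allocations themselves requires the more delicate bookkeeping of which coalitions remain active at each excess level and a verification that their incidence vectors span $\mathbb{R}^L$; notably this part is a general feature of the nucleolus and does not invoke convexity of $v$. By contrast, the core-membership half is comparatively routine once Corollary~\ref{corcore} supplies a core point certifying that the minimal top excess equals $0$.
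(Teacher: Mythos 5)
Your proposal is correct in substance, but note that the paper does not prove this proposition at all: it is stated as a known result with a citation to Schmeidler, so what you have done is reconstruct the standard argument from the cooperative game theory literature rather than mirror anything in the text. Your core-membership half is complete and clean, and it rightly isolates the fact that convexity of $v$ enters only through Corollary~\ref{corcore} (non-emptiness of the core): a core point $\mathbf{z}$ certifies $\theta_1(\mathbf{z})=0$, while $e(\mathbf{y},\mathcal{L})=0$ for every $\mathbf{y}\in\mathcal{Y}$ pins the nucleolus's top excess at exactly $0$, which is precisely the core condition. Your uniqueness half has the right key lemma --- the partial sums $f_k$ of the sorted excesses are maxima of affine functions, hence convex, so a midpoint of two minimizers is again a minimizer and each $f_k$ is attained with equality --- but it stops at a plan for the endgame. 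That endgame is less delicate than you fear, and your worry about spanning is actually moot: running the averaging argument level by level down the distinct values $t_1>t_2>\cdots$ of $\theta^*$, each coalition $\mathcal{S}$ whose excess at the midpoint equals $t_j$ satisfies $t_j=\tfrac12\bigl(e(\mathbf{y}^{(1)},\mathcal{S})+e(\mathbf{y}^{(2)},\mathcal{S})\bigr)\leq t_j$ once the higher levels have been exhausted, forcing $e(\mathbf{y}^{(1)},\mathcal{S})=e(\mathbf{y}^{(2)},\mathcal{S})$. Since the induction sweeps through every value in $\theta^*$ it covers \emph{all} coalitions, in particular the singletons, and $e(\mathbf{y},\{i\})=v(\{i\})-y_i$ then gives $\mathbf{y}^{(1)}=\mathbf{y}^{(2)}$ directly with no separate incidence-vector argument. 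Two minor presentational points: existence is cleaner via iterated minimization of the continuous coordinates $\theta_1,\theta_2,\dots$ over nested non-empty compact sets than via ``lower semicontinuity of the lexicographic objective,'' which is not a real-valued function; and, as your argument makes explicit, both halves hold for any game with a non-empty core, so the proposition as stated is a strictly weaker consequence of what you prove.
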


 The nucleolus has, however, no closed-form formula and involves the resolution of successive minimization problems. We illustrate this concept in the following example. For completeness and to compute the nucleolus in Example \ref{ex}, we summarize in Algorithm \ref{alg:nucl} a concise description of the method described in~\cite{behringer1981simplex,fromen1997reducing}.
\begin{rem}
In the case of a convex game, the nucleolus coincides with the kernel~\cite{maschler1979geometric} and thus admits another interpretation, see~\cite[Section 5]{maschler1979geometric} for further details.
\end{rem}

\begin{figure} 
\centering
  \includegraphics[width=8.5cm]{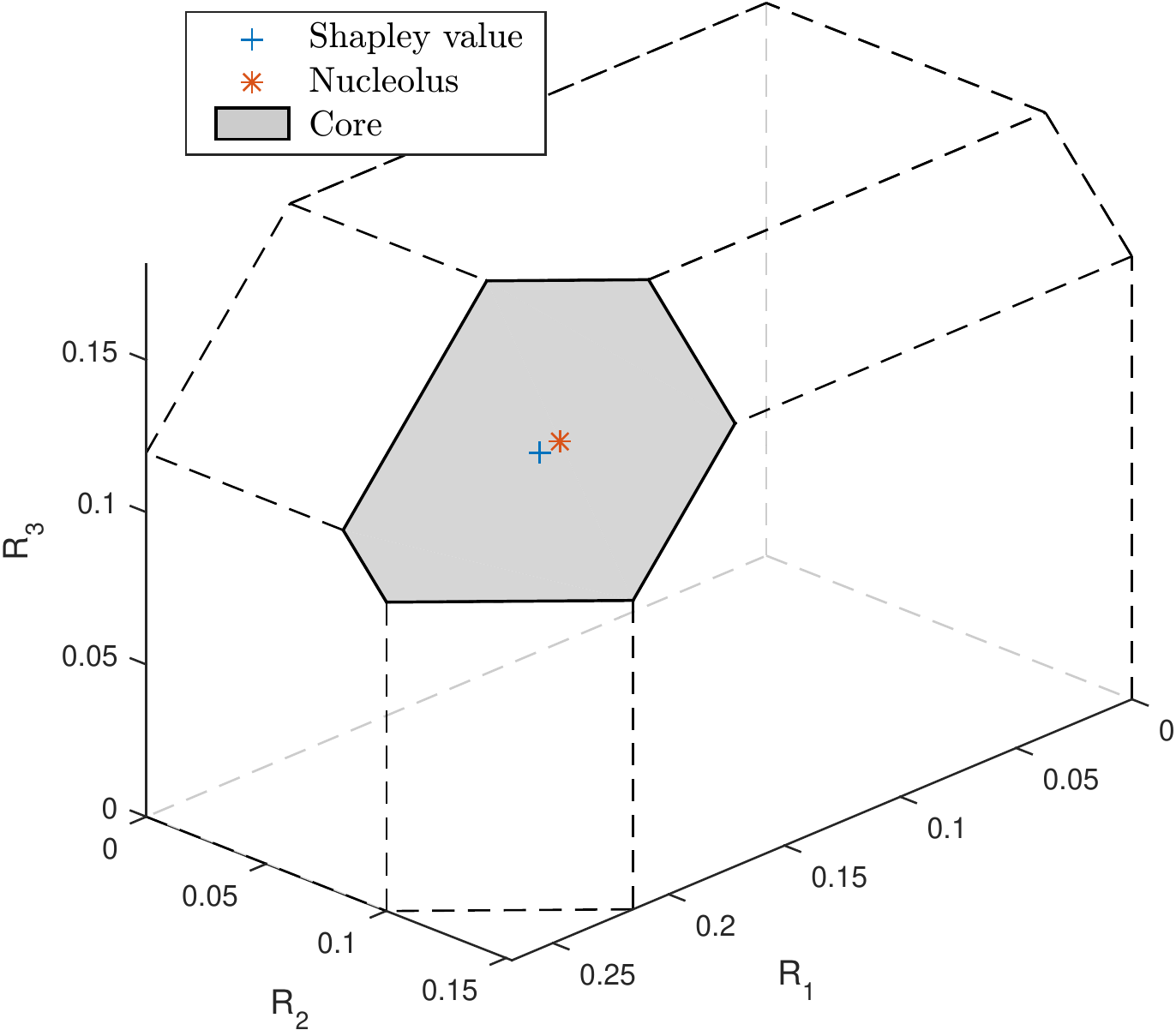}
  \caption{Core, Shapley value, and nucleolus of the game described in Example~\ref{ex}.}  \label{fig:core}
\end{figure}

\begin{algorithm}
  \caption{Nucleolus Computation}
  \label{alg:nucl}
  \begin{algorithmic}[1]
  \STATE $k \leftarrow 0$ 
  \STATE $\mathcal{E}_0 \leftarrow \emptyset$   
\WHILE{the system ($S_k$) has rank $< L$}
\STATE $k \leftarrow k+1$
\STATE Solve the following linear program and let $z_k^*$ denote the value of the objective function obtained
\begin{align*}
 &\text{Minimize } z_k \text{ subject to }  \nonumber \\
 &
\!\!\!\!\!\!\!\!\!\!\!  \left\{ \!\!\!\begin{array}{crr}
        z_k + \displaystyle\sum_{i \in \mathcal{S}}x_i \geq v(\mathcal{S}),  \forall \mathcal{S} \subset {\mathcal{L}} \text{ s.t. } \mathcal{S} \notin \cup_{j=0}^{k-1} \mathcal{E}_j & (E_k)\\
       \left(\!\!\!\! \begin{array}{crr}
        z_j^* + \displaystyle\sum_{i \in \mathcal{S}} x_i = v(\mathcal{S}),  \forall \mathcal{S} \in \mathcal{E}_j, j\in \llbracket 1 ,k-1 \rrbracket  \\
        \displaystyle\sum_{i \in \mathcal{L}}x_i  =v(\mathcal{L})
        \end{array} \!\!\!\right) &  (S_k)
        \end{array}\right.
\end{align*}
\STATE Define $\mathcal{E}_k \triangleq \{ \mathcal{S} \subset \mathcal{L} : (E_k) \text{ holds with equality} \}$
\ENDWHILE 
\RETURN the nucleolus $ \left(R^{\textup{Nucl}}_i \right)_{i \in \mathcal{L}} = \left(x_i \right)_{i \in \mathcal{L}}$
  \end{algorithmic}
\end{algorithm}

\begin{ex} \label{ex}
Let $X_0$ be a Bernoulli random variable with parameter $q\in ]0,1/2[$. Define $X_l \triangleq X_0 \oplus B_l$, $\forall l \in \mathcal{L}$, where the $B_l$'s are independent Bernoulli random variables with parameter $p_l \in ]0,1/2[$. Let $H_b(\cdot)$ denote the binary entropy and define for any $x\in [0,1]$, $\bar{x} = 1-x$.
 For any $\mathcal{S} \subseteq \mathcal{L}$, we have the following formula
 \begin{subequations}
\begin{align}
v(\mathcal{S}) 
& = H(X_{\mathcal{L}}) - H(X_{\mathcal{S}^c})- H(X_{\mathcal{S}}|  X_0) \label{eq25a} \\
& = H(X_{\mathcal{L}}) - H(X_{\mathcal{S}^c})-  \textstyle\sum_{i \in \mathcal{S}} H_b(p_i) \label{eq25b} \\
& =  - \textstyle\sum_{\mathcal{T } \subseteq \mathcal{L}} f_{\mathcal{L}}(\mathcal{T}) \log f_{\mathcal{L}}(\mathcal{T})\nonumber \\
&\phantom{--}  + \textstyle\sum_{\mathcal{T } \subseteq \mathcal{S}^c} f_{\mathcal{S}^c}(\mathcal{T}) \log f_{\mathcal{S}^c}(\mathcal{T}) \nonumber \\
&\phantom{--}-  \textstyle\sum_{i \in \mathcal{S}} H_b(p_i), \label{eq25c}
\end{align}
 \end{subequations}
where \eqref{eq25a} holds by \eqref{eqconv}, \eqref{eq25b} holds by independence of the $B_l$'s, and where in \eqref{eq25c} we have defined for any $\mathcal{S} \subseteq \mathcal{L}$
\begin{align}
f_{\mathcal{S}} : 2^{\mathcal{S}} \to \mathbb{R}_+, \mathcal{T} \mapsto q \prod_{i \in \mathcal{T}} p_i \prod_{j \in \mathcal{S}\backslash \mathcal{T}} \bar{p}_j + \bar{q} \prod_{i \in \mathcal{T}} \bar{p}_i \prod_{j \in \mathcal{S}\backslash \mathcal{T}} p_j.
\end{align}

Assume now that $L=3$, and $(q\text{ }p_1\text{ } p_2\text{ } p_3) = (0.40 \text{ }0.20\text{ } 0.27\text{ } 0.25)$. We obtain $v(\{1\}) \approx 0.17134 $, $v(\{2\}) \approx 0.08205 $, $v(\{3\}) \approx 0.10142 $, $v(\{1,2\}) \approx  0.28771 $, $v(\{1,3\}) \approx 0.31679 $, $v(\{2,3\}) \approx 0.20155 $, $v(\{1,2,3\}) \approx 0.46921 $.
Using Algorithm \ref{alg:nucl} and Proposition \ref{propshap}, we obtain the following secret-key rates
\begin{align*}
R_1^{\textup{Nucl}} \in [0.2109,0.2110],  & \quad R_1^{\textup{Shap}} \in [0.2165,0.2166],\\
R_2^{\textup{Nucl}} \in [0.1172,0.1173], & \quad R_2^{\textup{Shap}} \in [0.1142,0.1143],\\
R_3^{\textup{Nucl}} \in [0.1410,0.1411], & \quad R_3^{\textup{Shap}} \in [0.1384,0.1385].
\end{align*}
The core of the game, as well as the Shapley value and the nucleolus are depicted in Figure~\ref{fig:core}.

\end{ex}

\section{How to achieve any point of the core} \label{sec:CS}
We have seen in Section \ref{sec:analysis} that the grand coalition, i.e., the coalition $\mathcal{L}$, is in the best interest of all agents, and we have characterized the acceptable operating points as the core of the game. Assuming that the grand coalition agrees on an operating point in the core, we now would like to answer whether there exists a secret-key generation protocol for this specific operating point. We show in this section  the following three results. In Theorem \ref{sk}, we claim  that the coding scheme presented in Section~\ref{sec:CS1} achieves for the grand coalition a region that contains the core $\mathcal{C}(v)$. The proof is presented in Section \ref{sec:CS2}. In Theorem~\ref{sk2}, we provide an achievable region for any coalition $\mathcal{S} \subset \mathcal{L}$ of agents. The coding scheme and its analysis partly rely on Theorem \ref{sk} and are discussed in Appendix \ref{App_sk2}. Finally,  we complete the proof of Theorem~\ref{propsk} with Corollary~\ref{cor} obtained from Theorem~\ref{sk2}. 
\begin{thm} \label{sk}
Consider a DMS $\left(\mathcal{X}_{\mathcal{L}} \times \mathcal{X}_0 , p_{X_{\mathcal{L}}X_0} \right)$ such that $\forall l \in \mathcal{L}, |\mathcal{X}_l| = 2$. Any rate tuple in 
\begin{align}
\mathcal{R}_{\mathcal{L}} \triangleq \left\{ (R_l)_{l \in \mathcal{L}} : 0 \leq \sum_{i \in \mathcal{S}} R_i \leq I(X_{\mathcal{S}};X_0), \forall \mathcal{S} \subseteq \mathcal{L} \right\}
\end{align}
is achievable by the grand coalition, in the sense of Definition~\ref{def}, with the coding scheme of Section \ref{sec:CS1}. Moreover, by Theorem \ref{propcore} we have
\begin{align}
\mathcal{R}_{\mathcal{L}} \supseteq \mathcal{C}(v).
\end{align}
\end{thm}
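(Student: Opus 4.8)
The inclusion $\mathcal{R}_{\mathcal{L}} \supseteq \mathcal{C}(v)$ is the easy half and I would dispose of it first, purely algebraically: by Theorem~\ref{propcore} every point of the core satisfies $\sum_{i\in\mathcal{S}} R_i \le I(X_{\mathcal{S}};X_0)$ for all $\mathcal{S}\subseteq\mathcal{L}$, and also $\sum_{i\in\mathcal{S}} R_i \ge I(X_{\mathcal{S}};X_0)-I(X_{\mathcal{S}};X_{\mathcal{S}^c}) = v(\mathcal{S}) \ge 0$. These are exactly the defining inequalities of $\mathcal{R}_{\mathcal{L}}$, so no coding argument is needed for this part. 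The substance of the theorem is the achievability claim, namely that \emph{every} tuple of $\mathcal{R}_{\mathcal{L}}$ is achievable in the sense of Definition~\ref{def} (reliability, collective secrecy \eqref{eqSeca}, and uniformity \eqref{eqU1}) by the explicit scheme of Section~\ref{sec:CS1}.

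For the achievability I would first reduce the problem to a finite set of corner points. The region $\mathcal{R}_{\mathcal{L}}$ is the independence polytope of a polymatroid: its rank function $f(\mathcal{S}) \triangleq I(X_{\mathcal{S}};X_0)$ satisfies $f(\emptyset)=0$, is monotone, and is submodular, the last point because the Markov chain~\eqref{Markov} gives $f(\mathcal{S}) = H(X_{\mathcal{S}}) - \sum_{l\in\mathcal{S}} H(X_l\mid X_0)$, a submodular term minus a modular one. Since achievability of a rate tuple clearly implies achievability of every coordinatewise-smaller tuple (one simply shortens the extracted keys, which only helps \eqref{eqSeca} and \eqref{eqU1}), and since $\mathcal{R}_{\mathcal{L}}$ equals the downward closure of its base polytope, it suffices to achieve the vertices of the base polytope and then invoke time-sharing together with key shortening. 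Writing $\mathcal{S}_{k-1} \triangleq \{\sigma(1),\dots,\sigma(k-1)\}$, the vertex attached to an ordering $\sigma$ of $\mathcal{L}$ allocates to agent $\sigma(k)$ the rate $R_{\sigma(k)} = I(X_{\sigma(k)};X_0 \mid X_{\mathcal{S}_{k-1}})$, and by the chain rule these rates sum to $I(X_{\mathcal{L}};X_0)$.

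To achieve such a vertex I would build the scheme of Section~\ref{sec:CS1} on source polarization~\cite{Arikan10}. Each agent applies the polar transform to its block $X_l^N$ and discloses over the public channel the polarized symbols of high conditional entropy given $X_0^N$, so that the base station can reconstruct every $X_l^N$ by successive-cancellation source decoding; reliability then follows from the vanishing block-error probability of polar source codes. The secret key of agent $\sigma(k)$ is extracted, via a hash function, from the polarized symbols of $X_{\sigma(k)}^N$ that are simultaneously near-uniform and recoverable by the base station but undisclosed, where the admissible index set is restricted by successive conditioning on the earlier variables $X_{\mathcal{S}_{k-1}}$; the size of this set concentrates on $N\,I(X_{\sigma(k)};X_0\mid X_{\mathcal{S}_{k-1}})$, matching the target rate. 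Rather than checking \eqref{eqSeca} and \eqref{eqU1} separately, I would verify the single stronger condition of Remark~\ref{rem}, i.e.\ that the joint law $p_{K_{\mathcal{L}}A}$ is close in scaled variational distance to $p_{\mathcal{U}_{\mathcal{L}}}p_A$; the polarization of the chosen indices then delivers both properties at once.

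The main obstacle is the distributed nature of the extraction. Agent $\sigma(k)$ must output a key of the \emph{conditional} rate $I(X_{\sigma(k)};X_0\mid X_{\mathcal{S}_{k-1}})$, which is strictly below $I(X_{\sigma(k)};X_0)$ in general, while never observing the other agents' blocks, yet the family $(K_l)_{l\in\mathcal{L}}$ must be jointly near-uniform and jointly independent of the entire public communication $A$. Letting each agent instead extract its full individual rate $I(X_l;X_0)$ would overshoot the sum bound $I(X_{\mathcal{L}};X_0)$ and wreck joint uniformity, since all keys are correlated through $X_0$. The way out is that each agent selects its key-index set by successive conditioning on the earlier variables using only the \emph{known source statistics} (not their realizations), so that no access to the other blocks is required; the hard analytical step is then to prove that these statistically-defined, successively-conditioned sets yield a key that is jointly near-uniform and jointly independent of $A$. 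This is precisely where the conditional independence of the $X_l$ given $X_0$ furnished by~\eqref{Markov} is indispensable, as it makes the aggregate of the high-entropy index sets nonoverlapping and pins the total extracted randomness at $N\,I(X_{\mathcal{L}};X_0)$; controlling the interaction between one agent's disclosed reconciliation symbols and another agent's key symbols so that the bound of Remark~\ref{rem} closes is where the bulk of the work in Section~\ref{sec:CS2} will lie.
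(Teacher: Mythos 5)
The easy half is fine: your derivation of $\mathcal{R}_{\mathcal{L}} \supseteq \mathcal{C}(v)$ from Theorem~\ref{propcore} is exactly what the paper does. The achievability half, however, has a genuine gap, and it sits precisely at the point you yourself flag as ``the hard analytical step.'' Your plan is to reduce to the vertices of a polymatroid base polytope and achieve each vertex by having agent $\sigma(k)$ extract key bits from successively-conditioned polarized index sets. But the claim that these distributed, statistically-defined extractions are \emph{jointly} near-uniform and \emph{jointly} independent of the entire public communication $A$ is asserted, not proven, and it is exactly the claim that does not follow from standard single-user polarization or from the two-user privacy-amplification argument of Maurer--Wolf: agent $l$'s key must be independent of the other agents' reconciliation messages $A_j$, $j \neq l$, which are correlated with $X_l^N$ through $X_0$, and no single-agent index-set construction controls all $2^L-1$ joint constraints simultaneously. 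The paper resolves this not with polar extraction but with a \emph{leftover hash lemma for concatenated hash functions} (Lemma~\ref{lemloh}), whose error term is $\sqrt{\sum_{\mathcal{S}\neq\emptyset} 2^{r_{\mathcal{S}} - H_{\infty}(X_{\mathcal{S}}|Z)}}$ — a sum over all nonempty subsets — combined with a min-entropy-to-Shannon-entropy conversion over $B$ blocks (Lemma~\ref{lem1}) and the bound $H(X_{\mathcal{S}}^N|A_{\mathcal{L}}) \geq N I(X_{\mathcal{S}};X_0) - o(N)$ (Lemma~\ref{lem:lb}). That subset-indexed bound is what makes the polymatroid-type constraints $\sum_{i\in\mathcal{S}} r_i \lesssim NB\, I(X_{\mathcal{S}};X_0)$ appear as the exact condition for security; none of these three ingredients is present in your argument.

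A secondary but concrete problem: your reduction to corner points relies on submodularity of $\mathcal{S} \mapsto I(X_{\mathcal{S}};X_0)$, which you derive from the Markov chain~\eqref{Markov}. Theorem~\ref{sk} is explicitly stated (see the remark following Corollary~\ref{cor}) to hold \emph{without} that Markov chain, and without it $I(X_{\mathcal{S}};X_0)$ need not be submodular (take $X_1,X_2$ i.i.d.\ uniform bits and $X_0 = X_1 \oplus X_2$), so $\mathcal{R}_{\mathcal{L}}$ need not be a polymatroid and the vertex decomposition is unavailable. The paper's route needs no such reduction: the concatenated hash lemma achieves every point of the region in one shot, with each agent independently choosing its own key length $r_l$, and both \eqref{eqSeca} and \eqref{eqU1} then follow at once via Remark~\ref{rem}, as you correctly anticipated.
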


\begin{thm} \label{sk2}
Consider a DMS $\left(\mathcal{X}_{\mathcal{L}} \times \mathcal{X}_0 , p_{X_{\mathcal{L}}X_0} \right)$ such that $\forall l \in \mathcal{L}, |\mathcal{X}_l| = 2$ and the Markov chain \eqref{Markov} holds. Any rate tuple in 
\begin{align}
\mathcal{R}_{\mathcal{S}} \triangleq \left\{ (R_l)_{l \in \mathcal{S}} : 0 \leq \sum_{i \in \mathcal{T}} R_i \leq I(X_{\mathcal{T}};X_0|X_{\mathcal{S}^c}), \forall \mathcal{T} \subseteq \mathcal{S} \right\}
\end{align}
is achievable in the sense of Definition \ref{defs} by the coalition of agents $\mathcal{S} \subset \mathcal{L}$.
\end{thm}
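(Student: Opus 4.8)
The plan is to reduce the problem to Theorem~\ref{sk} applied to the coalition $\mathcal{S}$ alone, and then to add a privacy-amplification layer that secures the generated keys against the external observations $X^N_{\mathcal{S}^c}$. The starting observation is that the target region differs from the one of Theorem~\ref{sk} (applied to $\mathcal{S}$) only by the leakage to the outside agents: for any $\mathcal{T} \subseteq \mathcal{S}$, using the Markov chain \eqref{Markov} with the disjoint sets $\mathcal{T}$ and $\mathcal{S}^c$,
\begin{align}
I(X_{\mathcal{T}};X_0) - I(X_{\mathcal{T}};X_{\mathcal{S}^c}) = H(X_{\mathcal{T}}|X_{\mathcal{S}^c}) - H(X_{\mathcal{T}}|X_0) = I(X_{\mathcal{T}};X_0|X_{\mathcal{S}^c}),
\end{align}
so that the penalty $I(X_{\mathcal{T}};X_{\mathcal{S}^c})$ paid for secrecy against $X^N_{\mathcal{S}^c}$ turns the unconditional region of Theorem~\ref{sk} into the conditional region $\mathcal{R}_{\mathcal{S}}$.

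Concretely, I would proceed in two stages. \emph{Stage~1:} view $\mathcal{S}$ as a grand coalition on the source $(X_{\mathcal{S}},X_0)$, which is binary on each agent component, and invoke Theorem~\ref{sk}. This produces, for any tuple with $\sum_{i\in\mathcal{T}} R'_i \leq I(X_{\mathcal{T}};X_0)$, keys $K'_{\mathcal{S}}$ that are reliably recovered by the base station, jointly uniform, and collectively secret from the Stage-1 public communication. These keys constitute common randomness shared between $\mathcal{S}$ and the base station, but they may still be correlated with $X^N_{\mathcal{S}^c}$. \emph{Stage~2:} each agent $i \in \mathcal{S}$ applies a publicly-seeded universal hash function to $K'_i$, shortening it to a key $K_i$ of rate $R_i$, and the base station applies the same hashing to its estimates. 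Choosing the output lengths so that $\sum_{i\in\mathcal{T}} R_i \leq I(X_{\mathcal{T}};X_0|X_{\mathcal{S}^c})$ for every $\mathcal{T}\subseteq\mathcal{S}$ is exactly what the leakage budget computed above permits.

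The verifications then split as follows. Reliability is immediate because hashing commutes with the asymptotically correct Stage-1 reconstruction. For the secrecy and uniformity requirements \eqref{eqSec}--\eqref{eqU}, I would use the combined criterion of Remark~\ref{rem}, i.e., drive $N\,\mathbb{V}(p_{K_{\mathcal{S}}A_{\mathcal{S}}X^N_{\mathcal{S}^c}},p_{\mathcal{U}_{\mathcal{S}}}p_{A_{\mathcal{S}}X^N_{\mathcal{S}^c}})$ to zero. The crucial input is a bound on the information the Stage-1 keys leak to the outside, namely $I(K'_{\mathcal{T}};X^N_{\mathcal{S}^c}) \leq N\,I(X_{\mathcal{T}};X_{\mathcal{S}^c})$ (which follows from the Markov chain \eqref{Markov} and a data-processing argument, up to the negligible Stage-1 communication rate): after hashing out that many bits, a leftover-hash / conditional-min-entropy argument yields collective secrecy against $(A_{\mathcal{S}},X^N_{\mathcal{S}^c})$ together with near-uniformity.

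The main obstacle is Stage~2 in its \emph{distributed and joint} form: the agents hash independently, yet the subset-sum constraints $\sum_{i\in\mathcal{T}} R_i \leq I(X_{\mathcal{T}};X_0|X_{\mathcal{S}^c})$ must hold \emph{simultaneously} for all $\mathcal{T}\subseteq\mathcal{S}$, and the secrecy must be \emph{collective} over the whole coalition rather than per-agent. Establishing that a single choice of per-agent hash lengths controls the leakage $I(X_{\mathcal{T}};X_{\mathcal{S}^c})$ uniformly across all coalitions $\mathcal{T}$, while keeping the argument compatible with the polar-code structure inherited from Theorem~\ref{sk}, is where the real work lies; this is precisely the part deferred to Appendix~\ref{App_sk2}. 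Finally, Corollary~\ref{cor}, and hence the postponed achievability in Theorem~\ref{propsk}, follow by specializing to the sum-rate $\sum_{i\in\mathcal{S}}R_i = I(X_{\mathcal{S}};X_0|X_{\mathcal{S}^c})$.
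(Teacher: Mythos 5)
Your rate bookkeeping is right: the identity $I(X_{\mathcal{T}};X_0)-I(X_{\mathcal{T}};X_{\mathcal{S}^c})=I(X_{\mathcal{T}};X_0|X_{\mathcal{S}^c})$ follows from the Markov chain \eqref{Markov} and matches the gap between the region of Theorem~\ref{sk} and $\mathcal{R}_{\mathcal{S}}$. However, your two-stage construction is not the paper's route, and its Stage~2 has a genuine gap. To run a leftover-hash argument on the Stage-1 keys $K'_{\mathcal{S}}$ against the side information $(A_{\mathcal{S}},X^N_{\mathcal{S}^c})$, you need a lower bound on the conditional \emph{min-entropy} $H_{\infty}(K'_{\mathcal{T}}\mid A_{\mathcal{S}},X^N_{\mathcal{S}^c})$ for every $\mathcal{T}\subseteq\mathcal{S}$ (this is exactly the quantity appearing in Lemma~\ref{lemloh}). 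What you supply is a Shannon mutual-information bound $I(K'_{\mathcal{T}};X^N_{\mathcal{S}^c})\leq N I(X_{\mathcal{T}};X_{\mathcal{S}^c})$, and a small average leakage does not control the min-entropy: the conditional distribution of $K'_{\mathcal{T}}$ given a typical realization of $X^N_{\mathcal{S}^c}$ could still be far from flat. The paper's only tool for converting Shannon entropy into min-entropy is Lemma~\ref{lem1}, which rests on letter-typicality of i.i.d.\ blocks of the underlying pair (source, public message); it applies to $(X^{N}_{\mathcal{L}},A_{\mathcal{L}})$ repeated over $B$ blocks, but not to the hashed outputs $K'_{\mathcal{S}}$, which are neither i.i.d.\ nor amenable to a typicality decomposition. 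So the step ``after hashing out that many bits, a leftover-hash argument yields collective secrecy'' cannot be justified as written. (A secondary issue: your leakage bound must also absorb the Stage-1 reconciliation messages, which carry $NH(X_{\mathcal{S}}|X_0)+o(N)$ bits correlated with $X^N_{\mathcal{S}^c}$ through $X_0$; the interaction between $A_{\mathcal{S}}$ and $X^N_{\mathcal{S}^c}$ has to be bounded per subset $\mathcal{T}$, which is nontrivial.)

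The paper avoids the two-stage structure entirely. It keeps the coding scheme of Section~\ref{sec:CS1} unchanged --- the agents in $\mathcal{S}$ hash their \emph{raw} observations $X^{NB}_i$ once --- and the only modification to the analysis of Theorem~\ref{sk} is to replace Lemma~\ref{lem:lb} by the bound
\begin{align}
H\bigl(X^N_{\mathcal{T}} \mid X^N_{\mathcal{S}^c} A_{\mathcal{S}}\bigr) \geq N I(X_{\mathcal{T}};X_0|X_{\mathcal{S}^c}) - o(N), \quad \forall \mathcal{T}\subseteq\mathcal{S},
\end{align}
proved by lower-bounding $H(A_{\mathcal{S}\backslash\mathcal{T}}\mid X^N_{\mathcal{T}}X^N_{\mathcal{S}^c})$ via the very-high-entropy sets and upper-bounding $H(A_{\mathcal{S}}\mid X^N_{\mathcal{S}^c})$ by the polarization rate, both using \eqref{Markov}. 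This Shannon-entropy bound is then fed into Lemma~\ref{lem1} (with $Z\leftarrow(A_{\mathcal{S}},X^N_{\mathcal{S}^c})$, legitimately, since these are i.i.d.\ across blocks) and into the concatenated leftover hash lemma, which is precisely what delivers the simultaneous subset-sum constraints and collective secrecy that you correctly identify as the crux but leave unproved. If you want to salvage a two-stage argument, you would need smooth min-entropy chain rules in place of Lemma~\ref{lem1}; as it stands, the one-shot modification is both simpler and the one the paper carries out.
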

\begin{proof}
	See Appendix \ref{App_sk2}.
\end{proof}

\begin{cor} \label{cor}
Theorem \ref{sk2} implies the achievability part of Theorem \ref{propsk}, i.e., the secret-key sum rate $I(X_{\mathcal{S}}; X_0 | X_{\mathcal{S}^c})$ is achievable by the coalition $\mathcal{S} \subseteq \mathcal{L}$.	
\end{cor}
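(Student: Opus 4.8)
The plan is to deduce the achievability of the single sum-rate point $I(X_{\mathcal{S}};X_0|X_{\mathcal{S}^c})$ from the achievability of the whole region $\mathcal{R}_{\mathcal{S}}$ granted by Theorem~\ref{sk2}. The achievability part of Theorem~\ref{propsk} only asks for the \emph{existence} of an achievable rate tuple $(R_i)_{i\in\mathcal{S}}$ whose components sum to $I(X_{\mathcal{S}};X_0|X_{\mathcal{S}^c})$; since every tuple in $\mathcal{R}_{\mathcal{S}}$ is achievable by Theorem~\ref{sk2}, it suffices to exhibit one such tuple inside $\mathcal{R}_{\mathcal{S}}$. Taking $\mathcal{T}=\mathcal{S}$ in the definition of $\mathcal{R}_{\mathcal{S}}$ gives exactly the constraint $\sum_{i\in\mathcal{S}}R_i\leq I(X_{\mathcal{S}};X_0|X_{\mathcal{S}^c})$, so the target sum-rate is the largest possible over $\mathcal{R}_{\mathcal{S}}$, and the task reduces to showing that this face is attained.

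First I would record that, writing $g(\mathcal{T})\triangleq I(X_{\mathcal{T}};X_0|X_{\mathcal{S}^c})$ for $\mathcal{T}\subseteq\mathcal{S}$, the Markov chain \eqref{Markov} yields $H(X_{\mathcal{T}}|X_0,X_{\mathcal{S}^c})=H(X_{\mathcal{T}}|X_0)$, and moreover $H(X_{\mathcal{T}}|X_0)=\sum_{i\in\mathcal{T}}H(X_i|X_0)$, since the pairwise statements in \eqref{Markov} (applied to the singletons) force the $X_i$'s to be mutually conditionally independent given $X_0$. Hence $g(\mathcal{T})=H(X_{\mathcal{T}}|X_{\mathcal{S}^c})-\sum_{i\in\mathcal{T}}H(X_i|X_0)$, which is submodular as the difference of the submodular set function $\mathcal{T}\mapsto H(X_{\mathcal{T}}|X_{\mathcal{S}^c})$ and a modular one, with $g(\emptyset)=0$ and $g$ non-decreasing. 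Thus $\mathcal{R}_{\mathcal{S}}$ is a polymatroid with rank function $g$.

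Next I would construct the desired point by the greedy, i.e.\ chain-rule, allocation: ordering $\mathcal{S}=\{s_1,\dots,s_m\}$ arbitrarily and setting
\begin{align}
R_{s_k}\triangleq I\left(X_{s_k};X_0 \,|\, X_{\mathcal{S}^c},X_{s_1},\dots,X_{s_{k-1}}\right),\qquad k\in\llbracket 1,m\rrbracket,
\end{align}
the chain rule gives $\sum_{k=1}^m R_{s_k}=g(\mathcal{S})=I(X_{\mathcal{S}};X_0|X_{\mathcal{S}^c})$, and each $R_{s_k}\geq 0$ since conditional mutual information is non-negative. Standard polymatroid theory then shows this tuple is a vertex of $\mathcal{R}_{\mathcal{S}}$, i.e.\ it satisfies $\sum_{i\in\mathcal{T}}R_i\leq g(\mathcal{T})$ for every $\mathcal{T}\subseteq\mathcal{S}$, so $(R_{s_k})_{k}\in\mathcal{R}_{\mathcal{S}}$. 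Applying Theorem~\ref{sk2} to this tuple shows it is achievable in the sense of Definition~\ref{defs}, and since its sum-rate equals $I(X_{\mathcal{S}};X_0|X_{\mathcal{S}^c})$, the achievability part of Theorem~\ref{propsk} follows.

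The main obstacle is the feasibility of the greedy point, that is, verifying that the chain-rule allocation actually lies in $\mathcal{R}_{\mathcal{S}}$; this is precisely where submodularity of $g$, and hence the polymatroid structure of $\mathcal{R}_{\mathcal{S}}$, is essential. Everything else—the reduction to a single point and the chain-rule computation of the sum-rate—is immediate.
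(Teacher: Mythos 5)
Your proposal is correct and follows essentially the same route as the paper: both arguments reduce to showing that the set function $\mathcal{T}\mapsto I(X_{\mathcal{T}};X_0|X_{\mathcal{S}^c})$ is submodular (the paper via a direct entropy manipulation using the Markov chain, you via the decomposition into a submodular conditional entropy minus a modular term), and then conclude that the maximal sum-rate face of $\mathcal{R}_{\mathcal{S}}$ is non-empty. The only cosmetic difference is that you exhibit the witness explicitly as the greedy chain-rule allocation, whereas the paper invokes the non-emptiness of the core of the associated concave game \`a la Shapley --- two statements of the same polymatroid fact.
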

\begin{proof}
See Appendix \ref{App_cor}.	
\end{proof}

\begin{rem}
Note that Theorem \ref{sk} does not require the Markov chain \eqref{Markov}. Note also that Theorem~\ref{sk} and Theorem \ref{sk2} extend to prime size alphabets by using \cite[Lemma 7]{chou2014polar} in place of Lemma \ref{lemscs2}.
\end{rem}
\begin{rem}
Note that the region in Theorem \ref{sk} has also been shown achievable in \cite{zhangmulti} without the requirements of prime size alphabets. The main difference between Theorem \ref{sk} and~\cite{zhangmulti} is that Theorem \ref{sk} provides an explicit coding scheme. 
\end{rem}

\subsection{Coding Scheme}\label{sec:CS1}
The principle of the coding scheme is to separately deal with reliability and secrecy, as it can be done for secret-key generation between two users \cite{cachin1997linking}, albeit with additional complications. More specifically, a reconciliation step is first performed to allow the base station to reconstruct the observations $X_{\mathcal{L}}^{N}$ of the agents. Then, during a privacy amplification step, each agent extracts from its observations a key that can be reconstructed at the base station. The reconciliation step itself does not present any difficulty, the main complications, compared to a two-user scenario, are (i) to deal with a distributed setting in the privacy amplification step and (ii) to analyze the combination of the reconciliation and privacy amplification steps, as detailed in the next section.

Our coding scheme operates over $B$ blocks of length $N$, where $N$ and $B$ are powers of $2$. We define $\mathcal{B} \triangleq \llbracket 1, B \rrbracket$. We omit indexation of the variables over blocks because encoding is identical for all blocks. The reconciliation step, described in Algoritm \ref{alg:rec}, makes use of polar codes. In particular we introduce the following notation. For $n \in \mathbb{N}$ and $N \triangleq 2^n$, let $G_n \triangleq  \left[ \begin{smallmatrix}
       1 & 0            \\[0.3em]
       1 & 1 
     \end{smallmatrix} \right]^{\otimes n} $ be the source polarization transform defined in~\cite{Arikan10}. For any $l \in \mathcal{L}$, we define  
     \begin{align}
          U_l^N \triangleq X_l^N G_n,
     \end{align}
moreover, for any set $\mathcal{I} \subseteq \llbracket 1,N \rrbracket $, we define $U_l^{N}[\mathcal{I}] \triangleq \left( (U_l)_{i}\right)_{i \in \mathcal{I}}$, where $(U_l)_{i}$ denotes the $i$-th component of the vector $U_l^{N}$, $i \in \mathcal{I}$. For any $l \in \mathcal{L}$, for any $\mathcal{S} \subseteq \mathcal{L}$, we also define the following ``high entropy" and ``very high entropy" sets.
\begin{align}
& \mathcal{H}_{X_l|X_0X_{1:l-1}X_{\mathcal{S}}} \nonumber \\
 &\triangleq 
\left\{ i\in \llbracket 1 ,N \rrbracket \!:\! H\left((U_l)_i | (U_l)^{i-1} X_0^NX^N_{1:l-1}X^N_{\mathcal{S}}\right) \geq \delta_N\right\}, \\
&\mathcal{V}_{X_l|X_0X_{1:l-1}X_{\mathcal{S}}} \nonumber \\
&\triangleq 
\left\{ i\in \llbracket 1 ,N \rrbracket \!:\! H\left((U_l)_i | (U_l)^{i-1} X_0^NX^N_{1:l-1}X^N_{\mathcal{S}}\right) \geq 1- \delta_N\right\}
\end{align}
where we have defined $X_{1:l-1} \triangleq (X_j)_{j \in \llbracket 1:l-1 \rrbracket}$ and $X_{\mathcal{S}} \triangleq (X_j)_{j \in \mathcal{S}}$.
An interpretation of these sets that will be used in our analysis can be summarized in the following two lemmas. %

\begin{lem}[Source coding with side information\cite{Arikan10}] \label{lemscs}
Consider a discrete memoryless source with joint probability distribution $p_{XY}$ over $\mathcal{X}\times\mathcal{Y}$ with $|\mathcal{X}|=2$ and $\mathcal{Y}$ finite. Define $A^{N} \triangleq X^{N} G_n$, and for $\delta_N \triangleq 2^{-N^{\beta}}$ with $\beta \in ]0,1/2[$, the  set $          \mathcal{H}_{X|Y}    \triangleq   \left\{ i \in \llbracket 1, N \rrbracket: H( A_i | A^{i-1} Y^{N})   >   \delta_N  \right\}.
$
Given $A^{N}[\mathcal{H}_{X|Y} ]$ and $Y^{N}$ it is possible to form $\widehat{A}^{N}$ by the successive cancellation (SC) decoder of~\cite{Arikan10} such that $
\lim_{N \to \infty} \mathbb{P} [ \widehat{A}^{N} \neq {A}^{N}] =0.
$ Moreover, $\lim_{N \to \infty} |\mathcal{H}_{X|Y}|/N = H(X|Y)$. 
\end{lem}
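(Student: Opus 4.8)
The plan is to derive both claims from Arikan's source-polarization machinery \cite{Arikan10}, treating the rate statement via an entropy-conservation argument and the reliability statement via a successive-cancellation error analysis. I would assume throughout that the source is memoryless, so that $H(X^N\mid Y^N) = N H(X\mid Y)$.

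First I would establish the rate. Since $A^N = X^N G_n$ and $G_n$ is invertible over $\{0,1\}^N$, conditioning on $Y^N$ and applying the chain rule gives $\sum_{i=1}^N H(A_i \mid A^{i-1} Y^N) = H(A^N \mid Y^N) = H(X^N \mid Y^N) = N H(X\mid Y)$. Source polarization asserts that, as $N \to \infty$, the normalized number of indices $i$ for which $H(A_i \mid A^{i-1} Y^N)$ falls in any fixed interval $(\epsilon, 1-\epsilon)$ vanishes, so the conditional entropies cluster near $0$ and $1$. Combining this clustering with the conservation identity forces the fraction of indices with entropy near $1$ to converge to $H(X\mid Y)$. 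Because $\delta_N \to 0$, the threshold set $\mathcal{H}_{X|Y}$ asymptotically captures exactly these near-$1$ indices up to a vanishing fraction, which yields $\lim_{N\to\infty} |\mathcal{H}_{X|Y}|/N = H(X\mid Y)$.

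Next I would specify the SC decoder and bound its error. Processing $i=1,\dots,N$ in order, the decoder sets $\widehat{A}_i = A_i$ for $i \in \mathcal{H}_{X|Y}$ (these bits are supplied) and otherwise chooses $\widehat{A}_i$ to maximize the posterior $p_{A_i \mid A^{i-1} Y^N}(\cdot \mid \widehat{A}^{i-1}, Y^N)$. Restricting to the event that all previous estimates are correct, a union bound gives $\mathbb{P}[\widehat{A}^N \neq A^N] \leq \sum_{i \notin \mathcal{H}_{X|Y}} \mathbb{P}[\widehat{A}_i \neq A_i \mid \widehat{A}^{i-1} = A^{i-1}]$, and for a binary source each term is controlled by the Bhattacharyya parameter $Z(A_i \mid A^{i-1} Y^N)$, which is in turn bounded through the conditional entropy via the standard relation that small $H$ forces small $Z$.

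The hard part will be making this error bound genuinely vanish: since the union is over $\Theta(N)$ terms, a merely polynomially small per-index entropy would not suffice. The resolution is to invoke the fast-polarization result of Arikan and Telatar, which guarantees that for the scaling $\delta_N = 2^{-N^\beta}$ with $\beta \in (0,1/2)$ the low-entropy indices have Bhattacharyya parameters (equivalently, individual error probabilities) summing to a quantity of order $N\,2^{-N^\beta}$, which tends to $0$. This delivers reliability and, together with the rate computation, completes the lemma.
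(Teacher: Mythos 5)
The paper states this lemma as a known result and simply cites Arikan's source-polarization work rather than proving it; your argument is precisely the standard proof from that reference (entropy conservation plus polarization clustering for the rate claim, SC decoding with a union bound over Bhattacharyya parameters and the $\beta<1/2$ rate-of-polarization theorem for reliability), and it is correct. The only quantitative slip is that the complement of $\mathcal{H}_{X|Y}$ gives $H(A_i\mid A^{i-1}Y^N)\le\delta_N$, which via the relation $Z^2\le H$ yields a per-index error of at most $\sqrt{\delta_N}$ and hence a total error of order $N2^{-N^{\beta}/2}$ rather than $N2^{-N^{\beta}}$ --- still vanishing, so the conclusion stands.
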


\begin{lem}[Privacy amplification\cite{chou2015polar}] \label{lemscs2}
Consider a discrete memoryless source with joint probability distribution $p_{XZ}$ over $\mathcal{X}\times\mathcal{Z}$ with $|\mathcal{X}|=2$ and $\mathcal{Z}$ finite. Define $A^{N} \triangleq X^{N} G_n$, and for $\delta_N \triangleq 2^{-N^{\beta}}$ with $\beta \in ]0,1/2[$, the  set $          \mathcal{V}_{X|Z}    \triangleq   \left\{ i \in \llbracket 1, N \rrbracket: H( A_i | A^{i-1} Z^{N})   >   1-\delta_N  \right\}.
$
$A^{N}[\mathcal{V}_{X|Z} ]$ is almost uniform and independent from $Z^{N}$ in the sense $
\lim_{N \to \infty} \mathbb{V} ( p_{A^{N}[\mathcal{V}_{X|Z} ]Z^{N}} , p_{U} p_{Z^{N}}) =0,
$ where $p_{U}$ is the uniform distribution over $\{0,1\}^{|\mathcal{V}_{X|Z}|}$. Moreover, $\lim_{N \to \infty} |\mathcal{V}_{X|Z}|/N = H(X|Z)$.  
\end{lem}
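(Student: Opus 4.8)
The plan is to establish the two assertions separately, both via the polarization of the conditional entropies $H\left(A_i \mid A^{i-1} Z^N\right)$. For the size of $\mathcal{V}_{X|Z}$, I would first recall that $G_n$ is a bijection, so the chain rule gives $\sum_{i=1}^N H\left(A_i \mid A^{i-1} Z^N\right) = H(A^N \mid Z^N) = H(X^N \mid Z^N) = N H(X\mid Z)$. The source polarization theorem of \cite{Arikan10}, together with its fast-polarization rate (which justifies the choice $\delta_N = 2^{-N^\beta}$ with $\beta \in {]0,1/2[}$), ensures that the fraction of indices for which $H\left(A_i \mid A^{i-1} Z^N\right)$ falls in the unpolarized interval $\left(\delta_N, 1-\delta_N\right)$ vanishes as $N\to\infty$. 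Splitting the above sum over $\mathcal{V}_{X|Z}$, over the low-entropy set, and over the vanishing middle, and bounding each summand by its extreme value up to $\delta_N$, then yields $\lim_{N\to\infty} |\mathcal{V}_{X|Z}|/N = H(X\mid Z)$; equivalently, this follows by combining Lemma~\ref{lemscs} (applied with side information $Z$) with the observation that the middle set $\mathcal{H}_{X|Z}\setminus\mathcal{V}_{X|Z}$ has vanishing relative size.

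For the near-uniformity and independence, I would pass through the Kullback--Leibler divergence. Writing $\mathcal{V} \triangleq \mathcal{V}_{X|Z}$, a direct computation gives the identity $D\left(p_{A^N[\mathcal{V}] Z^N} \,\|\, p_U p_{Z^N}\right) = |\mathcal{V}| - H\left(A^N[\mathcal{V}] \mid Z^N\right)$, since $p_U$ is uniform over $\{0,1\}^{|\mathcal{V}|}$ so that $-\log p_U = |\mathcal{V}|$. The key step is to lower bound $H\left(A^N[\mathcal{V}] \mid Z^N\right)$: expanding it by the chain rule along the ordered indices of $\mathcal{V}$, each resulting conditional entropy conditions on a subset of $A^{i-1}$ and hence, since conditioning reduces entropy, is at least $H\left(A_i \mid A^{i-1} Z^N\right) > 1 - \delta_N$ by the very definition of $\mathcal{V}_{X|Z}$. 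Summing over the $|\mathcal{V}|$ indices gives $H\left(A^N[\mathcal{V}] \mid Z^N\right) \geq |\mathcal{V}|(1 - \delta_N)$, whence $D\left(p_{A^N[\mathcal{V}] Z^N} \,\|\, p_U p_{Z^N}\right) \leq |\mathcal{V}| \delta_N \leq N 2^{-N^\beta}$.

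Finally, I would apply Pinsker's inequality to this divergence bound, obtaining $\mathbb{V}\left(p_{A^N[\mathcal{V}] Z^N}, p_U p_{Z^N}\right) \leq \sqrt{2\ln 2 \cdot N 2^{-N^\beta}}$, which tends to $0$ as $N\to\infty$ precisely because $\beta < 1/2$ forces $N 2^{-N^\beta} \to 0$. This delivers the claimed variational-distance convergence. I expect the genuinely delicate point to be the sizing argument and the appeal to the fast-polarization rate, which is what pins down the admissible range of $\beta$; the chain-rule lower bound underlying independence and uniformity is, by contrast, short and self-contained. For prime (non-binary) alphabets the same scheme goes through after replacing this last step by \cite[Lemma 7]{chou2014polar}, as noted in the surrounding remark.
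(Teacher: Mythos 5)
Your proof is correct and is essentially the standard argument: the paper itself states this lemma with a citation to \cite{chou2015polar} and gives no proof, and your reconstruction matches the one in that reference — the identity $D\bigl(p_{A^N[\mathcal{V}]Z^N}\,\|\,p_U p_{Z^N}\bigr)=|\mathcal{V}|-H\bigl(A^N[\mathcal{V}]\mid Z^N\bigr)$, the chain-rule lower bound $H\bigl(A^N[\mathcal{V}]\mid Z^N\bigr)\geq |\mathcal{V}|(1-\delta_N)$ via ``conditioning reduces entropy,'' Pinsker's inequality, and the rate-of-polarization argument for $|\mathcal{V}_{X|Z}|/N$. One small imprecision: the convergence $N2^{-N^{\beta}}\to 0$ holds for any $\beta>0$; the constraint $\beta<1/2$ is needed only where you first invoke it, namely to guarantee via the fast-polarization theorem that the unpolarized middle set has vanishing relative size.
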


Hence, by Lemma \ref{lemscs}, the vector $U_l^{N}[\mathcal{H}_{X_l|X_0X_{1:l-1}X_{\mathcal{S}}}]$, $l \in \mathcal{L}$, ensures near lossless reconstruction of $X_l^N$ given $(X_0^N,X^N_{1:l-1},X^N_{\mathcal{S}})$. By Lemma \ref{lemscs2}, the vector $U_l^{N}[\mathcal{V}_{X_l|X_0X_{1:l-1}X_{\mathcal{S}}}]$, $l \in \mathcal{L}$, is almost uniform and independent from $(X_0^N,X^N_{1:l-1},X^N_{\mathcal{S}})$. Note also that by definition $\mathcal{V}_{X_l|X_0X_{1:l-1}X_{\mathcal{S}}} \subset \mathcal{H}_{X_l|X_0X_{1:l-1}X_{\mathcal{S}}}$. We refer to~\cite{chou2015polar,chou2014polar} for further discussion of theses sets.
\begin{algorithm}
  \caption{Reconciliation protocol}
  \label{alg:rec}
  \begin{algorithmic}[1]    
          \FOR{Block $b \in \mathcal{B}$}
 \STATE Define $\breve{X}_{\emptyset}^{N} \triangleq \emptyset$

    \FOR{$l \in \mathcal{L}$}
        \STATE Agent $l$ computes $U_l^{N} \triangleq X_l^{N} G_n$ and transmits $A_l \triangleq U_l^{N}[\mathcal{H}_{X_l|X_0X_{1:l-1}}]$ to the base station over the public channel
\STATE Given $A_{l}$, $X_0^{N}$,  and  $\breve{X}_{1:l-1}^{N}$, the base station forms $\breve{X}_{l}^{N}$ an estimate of $X_{l}^{N}$ using the SC decoder of~\cite{Arikan10}

	\ENDFOR
	
	\STATE Define $A_{\mathcal{L}} \triangleq \left( A_l \right)_{l\in \mathcal{L}}$

\STATE Define $\breve{X}_{\mathcal{L}}^{N} \triangleq (\breve{X}_{l}^{N})_{l\in\mathcal{L}}$

\ENDFOR
	\STATE Let $A^B_{\mathcal{L}}$ denote the total public communication over the $B$ blocks

\STATE The base station and the agents perform a final round of reconciliation on $(\breve{X}_{\mathcal{L}}^{N})^B$ (obtained above) and $({X}_{\mathcal{L}}^{N})^B$ as follows
\STATE Define $\widehat{X}_{\emptyset}^{NB} \triangleq \emptyset$
 \FOR{$l \in \mathcal{L}$}
 \STATE Agent $l \in \mathcal{L}$ computes $V^{NB} \triangleq X_l^{NB}G_{\log_2(NB)}$ and transmits $A_{0,l} \triangleq V^{NB}[\mathcal{H}_{{X}_{l}^{N}|X^N_{1:l-1}\breve{X}_{\mathcal{L}}^{N}}]$ to the base station over the public channel
 \STATE Given $A_{0,l}$, $(\breve{X}_{\mathcal{L}}^{N})^B$, and $\widehat{X}_{1:l-1}^{NB}$, the base station forms $\widehat{X}_{l}^{NB}$ an estimate of $X_{l}^{NB}$ using the SC decoder of~\cite{Arikan10}
\ENDFOR 
 \STATE Define $\widehat{X}_{\mathcal{L}}^{NB} \triangleq (\widehat{X}_{l}^{NB})_{l\in \mathcal{L}}$
\STATE   Define $A_0 \triangleq (A_{0,l})_{l\in\mathcal{L}} $
  \end{algorithmic}
\end{algorithm}

\begin{algorithm}
  \caption{Privacy amplification protocol}
  \label{alg:p}
  \begin{algorithmic}[1] 
              \FOR{Block $b \in \mathcal{B}$}   
    \FOR{Agent $l \in \mathcal{L}$}
        \STATE Compute $K_l \triangleq F_l(X_l^{N})$
        \STATE Publicly transmit the choice of $F_l$ to the base station
	\ENDFOR
	\FOR{$l \in \mathcal{L}$}
	\STATE The base station computes $K_{l}\triangleq F_{l}(X_{l}^{N})$
	\ENDFOR
	\ENDFOR
  \end{algorithmic}
\end{algorithm}

 The privacy amplification step, described in Algorithm \ref{alg:p}, relies on two-universal hash functions~\cite{Carter79,Bennett95}. 
 
 \begin{defn}
 A family $\mathcal{F}$ of two-universal hash functions $\mathcal{F} = \{f:\{0,1 \}^N \to \{0,1\}^r\}$ is such~that
 \begin{align}
 \forall x,x' \in \{0,1 \}^N, x \neq x' \implies \mathbb{P} [F(x)=F(x')] \leq 2^{-r},
 \end{align}
 where $F$ is a function uniformly chosen in $\mathcal{F}$.
 \end{defn}

 For $l \in\mathcal{L}$, we let $F_l : \{0,1\}^N \rightarrow \{0,1\}^{r_l}$, be  uniformly chosen in a family $\mathcal{F}_l$ of two-universal hash functions. Note that $r_l$ represents the key length obtained by Agent~$l$. The main difficulty in the analysis of the privacy amplification step is to find the admissible values, in the sense of Definition \ref{def}, for $r_l$. We leave these quantities unspecified in this section, and will specify them in Section \ref{sec:CS2}.

\subsection{Coding Scheme Analysis}\label{sec:CS2}

\subsubsection{Reconciliation Analysis} \label{sec:recanal}

Line 15 in Algorithm \ref{alg:rec} ensures that for a fixed $N$,
\begin{subequations}
\begin{align}
&\mathbb{P}\left[\widehat{X}_{\mathcal{L}}^{NB} \neq ({X}_{\mathcal{L}}^{N})^B\right] \nonumber \\
 & \leq \sum_{l \in \mathcal{L}} \mathbb{P}\left[\widehat{X}_{l}^{NB} \neq {X}_{l}^{NB} | \widehat{X}_{1:l-1}^{NB} = {X}_{1:l-1}^{NB} \right]  \\
& \xrightarrow{B \to \infty} 0,
\end{align}
\end{subequations}
where the limit follows from Lemma \ref{lemscs}.

\subsubsection{Privacy Amplification Analysis}
We will use the following notation. The indicator function is denoted by $\mathds{1}\{ \omega \}$, which is equal to $1$ if the predicate $\omega$ is true and $0$ otherwise.  For a discrete random variable $X$ distributed according to $p_X$ over the alphabet $\mathcal{X}$, we let 
\begin{align}
&\mathcal{T}^N_{\epsilon}(X)  \triangleq \smash{\left\{ x^N \in \mathcal{X}^N : \left| \frac{1}{N}\smash{\sum_{i=1}^N} \mathds{1}\{x_i = a\} - p_X(a) \right| \nonumber  \right.}\\ 
& \left. \phantom{\frac{1}{N} --------------}\leq \epsilon p_X(a), \forall a \in \mathcal{X} \right\}\end{align}
  denote the $\epsilon$-letter-typical set associated with $p_X$ for sequences of length $N$, see, for instance,~\cite{kramerbook}, and define $\mu_X \triangleq \min_{x \in S_X} p(x)$, where $S_X \triangleq \{ x \in \mathcal{X} : p(x)>0 \}$.  Additionally, the min-entropy of $X$ is defined as
\begin{align}
H_{\infty}(X) \triangleq - \log \left( \max_{x \in \mathcal{X}} p_X(x) \right). 
\end{align}

 We will need the following two lemmas. Lemma~\ref{lem1} is a refined version of \cite{Maurer00} meant to relate a min-entropy to a Shannon entropy, which is easier to study, and Lemma~\ref{lem:lb} can be interpreted as quantifying how much information is revealed about $X_{\mathcal{S}}^N$ knowing the  public communication~$A_{\mathcal{L}}$.
\begin{lem}[{\cite[Lemma 1.1]{chou2014separation}}\cite{Maurer00}] \label{lem1}
Let $\epsilon >0$. Consider a DMS $(\mathcal{X}\times\mathcal{Z},p_{XZ})$ and define the random variable $\Theta$ as

\begin{align}
\Theta & \triangleq \mathds{1} \left\{ (X^{B} , Z^{B}) \in  \mathcal{T}_{2\epsilon}^{B}(XZ) \right\} \mathds{1} \left\{ Z^{B} \in  \mathcal{T}_{\epsilon}^{B}(Z) \right\},
 \end{align}
Then, $\mathbb{P}[\Theta=1]\geq1- \delta_{\epsilon}^0(B)$, with  
\begin{align}\delta_{\epsilon}^0(B) \triangleq 2|S_X|e^{-\epsilon^2 B \mu_X /3}+2|S_{XZ}|e^{-\epsilon^2 B \mu_{XZ} /3}.\end{align} Moreover, if $z^B \in \mathcal{T}_{\epsilon}^{B}(Z)$, then
\begin{multline}
{H}_{\infty}(X^B|Z^B=z^B,\Theta=1) \\  \geq B(1- \epsilon) {H}(X|Z)  + \log( 1-\delta_{\epsilon}^1(B)), 
\end{multline}
where $\delta_{\epsilon}^1(B) \triangleq 2|S_{XZ}|e^{-\epsilon^2 B\mu_{XZ}/6}$.
\end{lem}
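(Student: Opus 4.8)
The plan is to prove Lemma~\ref{lem1}, which has two distinct parts: first a concentration bound establishing $\mathbb{P}[\Theta=1]\geq 1-\delta_{\epsilon}^0(B)$, and second a min-entropy lower bound conditioned on the event $\{\Theta=1\}$ and on a typical $z^B$. I would treat these separately.

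For the first part, I would observe that $\Theta$ is the product of two indicators, so by a union bound
\begin{align}
\mathbb{P}[\Theta=0] \leq \mathbb{P}\!\left[(X^B,Z^B)\notin \mathcal{T}_{2\epsilon}^B(XZ)\right] + \mathbb{P}\!\left[Z^B \notin \mathcal{T}_{\epsilon}^B(Z)\right].
\end{align}
Each term is the probability that an i.i.d.\ sequence falls outside its letter-typical set, which is controlled by a Chernoff/Hoeffding-type concentration inequality applied to the empirical letter frequencies. Specifically, for each symbol $a$ the deviation $|\frac{1}{B}\sum_i \mathds{1}\{x_i=a\}-p(a)|>\epsilon p(a)$ has probability at most $2e^{-\epsilon^2 B\mu/3}$ by the standard multiplicative Chernoff bound, and a union over the $|S_X|$ (resp.\ $|S_{XZ}|$) support symbols yields the two summands defining $\delta_{\epsilon}^0(B)$. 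This is routine and I would cite the concentration inequality rather than re-derive it.

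For the min-entropy bound, fix $z^B\in\mathcal{T}_{\epsilon}^B(Z)$. I would bound the max-probability $\max_{x^B} \mathbb{P}[X^B=x^B \mid Z^B=z^B,\Theta=1]$ from above. The key point is that conditioning on $\Theta=1$ restricts $x^B$ to the jointly typical set $\mathcal{T}_{2\epsilon}^B(XZ)$ (intersected with the slice at $z^B$), and for such $x^B$ the conditional probability $p_{X^B|Z^B}(x^B|z^B)=\prod_i p_{X|Z}(x_i|z_i)$ admits the standard typicality upper bound $2^{-B(1-\epsilon)H(X|Z)}$, roughly speaking, because the empirical joint type is close to $p_{XZ}$. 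The conditioning on $\Theta=1$ introduces a normalization factor $1/\mathbb{P}[\Theta=1\mid Z^B=z^B]$, which I would control using the first part to show it is at least $1-\delta_{\epsilon}^1(B)$; taking $-\log$ of the resulting max-probability gives the claimed bound $B(1-\epsilon)H(X|Z)+\log(1-\delta_{\epsilon}^1(B))$.

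The main obstacle I anticipate is bookkeeping the constants carefully enough to land exactly on the stated $\delta_{\epsilon}^0(B)$ and $\delta_{\epsilon}^1(B)$ with their specific exponents ($/3$ versus $/6$) and prefactors, since the two-typicality structure (using $2\epsilon$ for the joint set and $\epsilon$ for the marginal) is precisely what makes the slice-conditional max-probability bound come out with the clean $(1-\epsilon)$ factor. Since Lemma~\ref{lem1} is cited as \cite[Lemma 1.1]{chou2014separation} together with \cite{Maurer00}, I expect the cleanest route is to invoke that reference for the detailed constant-tracking and present only the skeleton above; the genuinely novel content of the present paper lies in how this lemma is \emph{applied} to the distributed privacy-amplification analysis, not in re-establishing the lemma itself.
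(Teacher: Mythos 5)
The paper does not prove this lemma at all: it is imported verbatim by citation to \cite[Lemma 1.1]{chou2014separation} and \cite{Maurer00}, with no argument given in the text or appendices. Your sketch is the standard (and correct) route to such a statement --- a union bound plus multiplicative Chernoff bounds on empirical letter frequencies for the probability estimate, and, for the min-entropy part, an upper bound on $\max_{x^B} p_{X^B\mid Z^B,\Theta=1}(x^B\mid z^B)$ obtained by combining the letter-typical upper bound on $\prod_i p_{X|Z}(x_i|z_i)$ with the normalization $1/\mathbb{P}[\Theta=1\mid Z^B=z^B]\leq 1/(1-\delta_{\epsilon}^1(B))$ --- and your decision to defer the constant-tracking to the cited reference matches exactly what the authors do. The only point worth flagging is the one you already anticipate: with the joint set at tolerance $2\epsilon$ a naive count gives a factor $(1-2\epsilon)$ rather than $(1-\epsilon)$ in front of $H(X|Z)$, so landing on the stated form requires the precise interplay between the $\epsilon$-typicality of $z^B$ and the $2\epsilon$-typicality of the pair as worked out in the references; this affects only constants, not the structure of the argument.
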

\begin{lem} \label{lem:lb}
For any $\mathcal{S} \subseteq \mathcal{L}$, we have
\begin{align}
	 H(X^N_{\mathcal{S}} | A_{\mathcal{L}}) \geq N I(X_{\mathcal{S}};X_0) - o(N).
\end{align}
\end{lem}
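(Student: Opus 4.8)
The plan is to peel off the public messages of the complementary agents and to exploit that, conditioned on $X_0^N$, the source is degraded, i.e. \eqref{Markov} gives $X_{\mathcal{S}}^N - X_0^N - X_{\mathcal{S}^c}^N$. Write $A_{\mathcal{L}}=(A_{\mathcal{S}},A_{\mathcal{S}^c})$ with $A_{\mathcal{S}}\triangleq(A_l)_{l\in\mathcal{S}}$, $A_{\mathcal{S}^c}\triangleq(A_l)_{l\in\mathcal{S}^c}$, where $A_l\triangleq U_l^{N}[\mathcal{H}_{X_l|X_0X_{1:l-1}}]$. The two structural facts I would use are that $A_{\mathcal{S}}$ is a deterministic function of $X_{\mathcal{S}}^N$ and $A_{\mathcal{S}^c}$ a deterministic function of $X_{\mathcal{S}^c}^N$. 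Using the former, I would first establish, by adding $A_{\mathcal{S}}$ to the conditioning for free inside $H(X_{\mathcal{S}}^N\mid A_{\mathcal{S}^c})$ and subtracting it back,
\[ H(X_{\mathcal{S}}^N \mid A_{\mathcal{L}}) = N H(X_{\mathcal{S}}) - I(X_{\mathcal{S}}^N; A_{\mathcal{S}^c}) - H(A_{\mathcal{S}} \mid A_{\mathcal{S}^c}). \]
It then suffices to upper bound the last two terms by $NH(X_{\mathcal{S}}|X_0)+o(N)$ and $o(N)$, respectively.

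For the term $H(A_{\mathcal{S}}\mid A_{\mathcal{S}^c})$, I would bound it by $H(A_{\mathcal{S}})\le\sum_{l\in\mathcal{S}}|\mathcal{H}_{X_l|X_0X_{1:l-1}}|$, since $A_{\mathcal{S}}$ is a binary vector of that length. Lemma~\ref{lemscs} gives $|\mathcal{H}_{X_l|X_0X_{1:l-1}}|=NH(X_l|X_0X_{1:l-1})+o(N)$, and because conditioning reduces entropy, $\sum_{l\in\mathcal{S}}H(X_l|X_0X_{1:l-1})\le\sum_{l\in\mathcal{S}}H(X_l|X_0X_{\mathcal{S}\cap\llbracket 1,l-1\rrbracket})=H(X_{\mathcal{S}}|X_0)$ by the chain rule restricted to $\mathcal{S}$. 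This yields $H(A_{\mathcal{S}}\mid A_{\mathcal{S}^c})\le NH(X_{\mathcal{S}}|X_0)+o(N)$.

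The main obstacle is the leakage term $I(X_{\mathcal{S}}^N;A_{\mathcal{S}^c})$, which I would control in two steps. First, since $A_{\mathcal{S}^c}$ is a function of $X_{\mathcal{S}^c}^N$ and \eqref{Markov} gives $X_{\mathcal{S}}^N - X_0^N - X_{\mathcal{S}^c}^N$, the data-processing inequality yields $I(X_{\mathcal{S}}^N;A_{\mathcal{S}^c})\le I(X_0^N;A_{\mathcal{S}^c})$; this is precisely where degradedness is used, as it removes the correlation $I(X_{\mathcal{S}};X_{\mathcal{S}^c})$ that a naive bound would incur. Second, I would prove $I(X_0^N;A_{\mathcal{S}^c})=o(N)$ by writing $I(X_0^N;A_{\mathcal{S}^c})=H(A_{\mathcal{S}^c})-H(A_{\mathcal{S}^c}\mid X_0^N)$ and comparing both terms with $\sum_{l\in\mathcal{S}^c}|\mathcal{H}_{X_l|X_0X_{1:l-1}}|$. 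The upper side is immediate. For the lower side I would expand $H(A_{\mathcal{S}^c}\mid X_0^N)$ by the chain rule over $l\in\mathcal{S}^c$, use that the earlier messages are functions of $X_{1:l-1}^N$ to lower bound each term by $H(A_l\mid X_0^NX_{1:l-1}^N)$, and then invoke the polarization fact that the very-high-entropy set satisfies $|\mathcal{H}_{X_l|X_0X_{1:l-1}}\setminus\mathcal{V}_{X_l|X_0X_{1:l-1}}|=o(N)$, which follows because the size limits in Lemmas~\ref{lemscs} and~\ref{lemscs2} coincide. Since each index in $\mathcal{V}_{X_l|X_0X_{1:l-1}}$ contributes conditional entropy at least $1-\delta_N$, a chain-rule expansion gives $H(A_l\mid X_0^NX_{1:l-1}^N)\ge|\mathcal{H}_{X_l|X_0X_{1:l-1}}|-o(N)$, so the two sides differ by only $o(N)$ and $I(X_0^N;A_{\mathcal{S}^c})=o(N)$.

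Combining the identity with these bounds yields $H(X_{\mathcal{S}}^N\mid A_{\mathcal{L}})\ge NH(X_{\mathcal{S}})-NH(X_{\mathcal{S}}|X_0)-o(N)=NI(X_{\mathcal{S}};X_0)-o(N)$, as claimed. I expect the delicate point to be the second step above: the revealed bits $A_{\mathcal{S}^c}$ are not individually guaranteed to be independent of $X_0^N$ (only the very-high-entropy bits are), so the argument genuinely relies on polarization forcing all but a vanishing fraction of the revealed bits into the very-high-entropy regime, combined with the successive chain-rule bookkeeping across the agents of $\mathcal{S}^c$.
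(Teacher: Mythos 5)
Your argument is correct under the Markov chain \eqref{Markov}, and it reaches the bound by a genuinely different decomposition than the paper. The paper writes $H(X^N_{\mathcal{S}}|A_{\mathcal{L}}) = H(X^N_{\mathcal{S}}) + H(A_{\mathcal{S}^c}|X^N_{\mathcal{S}}) - H(A_{\mathcal{L}})$, upper-bounds $H(A_{\mathcal{L}})$ by $\log|\mathcal{A}_{\mathcal{L}}| = NH(X_{\mathcal{L}}|X_0)+o(N)$, and then lower-bounds $H(A_{\mathcal{S}^c}|X^N_{\mathcal{S}})$ by $NH(X_{\mathcal{S}^c}|X_0X_{\mathcal{S}})-o(N)$ using the nested sets $\mathcal{H}_{X_j|X_0X_{1:j-1}} \supset \mathcal{V}_{X_j|X_0X_{1:j-1}X_{\mathcal{S}}}$; the two correction terms cancel by the chain rule alone, with no appeal to \eqref{Markov}. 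You instead isolate the leakage term $I(X^N_{\mathcal{S}};A_{\mathcal{S}^c})$, pass it through $X_0^N$ by data processing, and show $I(X_0^N;A_{\mathcal{S}^c})=o(N)$ by squeezing $H(A_{\mathcal{S}^c}|X_0^N)$ against $\sum_{l\in\mathcal{S}^c}|\mathcal{H}_{X_l|X_0X_{1:l-1}}|$ via the sets $\mathcal{V}_{X_l|X_0X_{1:l-1}}$. The bookkeeping in your second step is sound: the chain rule over $l\in\mathcal{S}^c$, dropping to the conditioning $X_0^NX_{1:l-1}^N$, and the fact that $|\mathcal{H}_{X_l|X_0X_{1:l-1}}\setminus\mathcal{V}_{X_l|X_0X_{1:l-1}}|=o(N)$ from Lemmas \ref{lemscs} and \ref{lemscs2} all check out, as does the identity $H(X^N_{\mathcal{S}}|A_{\mathcal{L}}) = NH(X_{\mathcal{S}}) - I(X^N_{\mathcal{S}};A_{\mathcal{S}^c}) - H(A_{\mathcal{S}}|A_{\mathcal{S}^c})$ and the bound $H(A_{\mathcal{S}}|A_{\mathcal{S}^c})\leq NH(X_{\mathcal{S}}|X_0)+o(N)$. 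What your route buys is a cleaner interpretation (the complementary agents' public messages leak $o(N)$ bits about $X_0^N$, hence about $X_{\mathcal{S}}^N$); what it costs is generality. The data-processing step $I(X^N_{\mathcal{S}};A_{\mathcal{S}^c})\leq I(X_0^N;A_{\mathcal{S}^c})$ is exactly where \eqref{Markov} enters, and the paper explicitly remarks that Theorem \ref{sk}, whose proof rests on this lemma, does \emph{not} require \eqref{Markov}. So your proof establishes the lemma only for degraded sources; if you want the statement at the generality the paper claims, you must avoid routing the leakage through $X_0^N$ and instead lower-bound $H(A_{\mathcal{S}^c}|X^N_{\mathcal{S}})$ directly using the sets $\mathcal{V}_{X_j|X_0X_{1:j-1}X_{\mathcal{S}}}$, as the paper does.
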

\begin{proof}
See Appendix \ref{App_lem2}.	
\end{proof}
We are now equipped to show \eqref{eqmin}.
Let $\mathcal{S} \subseteq \mathcal{L}$ and let 
\begin{align}
\Theta &  \triangleq \mathds{1} \left\{ (X_{\mathcal{L}}^{NB} , A_{\mathcal{L}}^{B}, A_0) \in  \mathcal{T}_{2\epsilon}^{B}(X_{\mathcal{L}}^NA_{\mathcal{L}} A_0) \right\} \nonumber \\
& \phantom{--}\times \mathds{1} \left\{ (A_{\mathcal{L}}^{B},A_0) \in  \mathcal{T}_{\epsilon}^{B}(A_{\mathcal{L}}A_0) \right\}.
\end{align}
Fix $(a_{\mathcal{L}}^{B}, a_0) \in  \mathcal{T}_{\epsilon}^{B}(A_{\mathcal{L}},A_0)$. We define  
\begin{align}
\Omega &  \triangleq \mathds{1} \left\{ {H}_{\infty}(X_{\mathcal{S}}^{NB} | A_{\mathcal{L}}^{B}=a_{\mathcal{L}}^{B}, A_0 = a_0, \Theta=1)\phantom{\sqrt{NB}}\right. \nonumber \\
&\phantom{----} \geq {H}_{\infty}(X_{\mathcal{S}}^{NB} | A_{\mathcal{L}}^{B}=a_{\mathcal{L}}^{B}, \Theta=1)   \nonumber \\
& \left. \phantom{-----}- \textstyle\sum_{l\in\mathcal{L}} |\mathcal{H}_{{X}_{l}^{N}|X^N_{1:l-1}\breve{X}_{\mathcal{L}}^{N}}| -\sqrt{NB} \right\}.
\end{align}
We have 
\begin{subequations}
\begin{align} 
&{H}_{\infty}(X_{\mathcal{S}}^{NB} | A_{\mathcal{L}}^{B}=a_{\mathcal{L}}^{B}, A_0 = a_0,\Theta=1, \Omega =1)   \\ 
& \geq  {H}_{\infty}(X_{\mathcal{S}}^{NB} | A_{\mathcal{L}}^{B}= a_{\mathcal{L}}^{B}, \Theta=1)\nonumber \\
&\phantom{--} - \textstyle\sum_{l\in\mathcal{L}} |\mathcal{H}_{{X}_{l}^{N}|X^N_{1:l-1}\breve{X}_{\mathcal{L}}^{N}}| -\sqrt{NB}\label{eq39000} \\ 
& =  {H}_{\infty}(X_{\mathcal{S}}^{NB} | A_{\mathcal{L}}^{B}= a_{\mathcal{L}}^{B}, \Theta=1) \nonumber \\
&\phantom{--} - \textstyle\sum_{l\in\mathcal{L}} (B {H}({X}_{l}^{N}|X^N_{1:l-1}\breve{X}_{\mathcal{L}}^{N}) + o(B)) -\sqrt{NB}\label{eq3900} \\ 
& =  {H}_{\infty}(X_{\mathcal{S}}^{NB} | A_{\mathcal{L}}^{B}= a_{\mathcal{L}}^{B}, \Theta=1)\nonumber \\
&\phantom{--} - B H({X}_{\mathcal{L}}^{N}|\breve{X}_{\mathcal{L}}^{N}) - o(NB)\\ 
& \geq  {H}_{\infty}(X_{\mathcal{S}}^{NB} | A_{\mathcal{L}}^{B}= a_{\mathcal{L}}^{B}, \Theta=1) \nonumber \\
&\phantom{--} - o(NB)\label{eq390} \\ 
&  \geq B(1- \epsilon) {H}(X_{\mathcal{S}}^{N} | A_{\mathcal{L}})\nonumber \\
&\phantom{--} + \log( 1-\delta_{\epsilon}^1(N,B)) \label{eq39a} - o(NB)\\ 
&  \geq (1- \epsilon) (NB\cdot I(X_{\mathcal{S}};X_0) - o(N)\cdot B) \nonumber \\
&\phantom{--}+ \log( 1-\delta_{\epsilon}^1(N,B)) \label{eq39b} - o(NB)\\
&  = (1- \epsilon) \cdot NB\cdot I(X_{\mathcal{S}};X_0) + \delta^2_{\epsilon}(N,B) , \label{eqmin}
\end{align}
\end{subequations}
where \eqref{eq39000} holds by definition of $\Omega$, \eqref{eq3900} holds by Lemma \ref{lemscs},  \eqref{eq390} holds by Fano's inequality because similar to the analysis in Section \ref{sec:recanal}, $\mathbb{P}[\breve{X}_{\mathcal{L}}^{N} \neq {X}_{\mathcal{L}}^{N}] \xrightarrow{N\to \infty}0$, \eqref{eq39a} holds by Lemma~\ref{lem1} applied to the DMS $(\mathcal{X}_{\mathcal{S}}^N\times \mathcal{A}_{\mathcal{L}}, p_{X^N_{\mathcal{S}} A_{\mathcal{L} }})$, \eqref{eq39b} holds by Lemma~\ref{lem:lb}, and in \eqref{eqmin} we have defined 
\begin{multline} 
\delta^2_{\epsilon}(N,B) \\ \triangleq - o(N)\cdot B (1- \epsilon) + \log( 1-\delta_{\epsilon}^1(N,B))- o(NB).
\end{multline}

\begin{rem}
Unfortunately, unlike the two-user secret-key generation setting considered in~\cite{Maurer00}, it is not possible to use \cite[Lemma 10]{Maurer00} to obtain a tight lower bound on ${H}_{\infty}(X_{\mathcal{S}}^{NB} | A_{\mathcal{L}}^{B}=a_{\mathcal{L}}^{B}) $. 
We use Lemmas \ref{lem1}, \ref{lem:lb} to circumvent this issue.

\end{rem}
We will then need the following version of the leftover hash lemma \cite{haastad1999pseudorandom,Bennett95,dodis2008fuzzy,wullschleger2007oblivious}. 
\begin{lem}  [Leftover hash lemma for concatenated hash functions]\label{lemloh}
Let $X_{\mathcal{L}} \triangleq (X_l)_{l\in \mathcal{L}}$ and $Z$ be random variables distributed according to $p_{X_{\mathcal{L}}Z}$ over $\mathcal{X}_{\mathcal{L}} \times \mathcal{Z}$. For $l \in \mathcal{L}$, let $F_l : \{0,1\}^{n_l} \rightarrow \{0,1\}^{r_l}$, be  uniformly chosen in a family $\mathcal{F}_l$ of two-universal hash functions. Define  $s_{\mathcal{L}} \triangleq\prod_{l\in \mathcal{L}} s_l$, where $s_l \triangleq |\mathcal{F}_l|$, $l \in \mathcal{L}$, and for any $\mathcal{S} \subseteq {\mathcal{L}} $, define $r_{\mathcal{S}} \triangleq\sum_{i\in \mathcal{S}} r_i$. Define also $F_{\mathcal{L}} \triangleq (F_l)_{l \in \mathcal{L}}$ and 
\begin{align} 
F_{\mathcal{L}}(X_{\mathcal{L}}) \triangleq (F_1(X_1)||F_2(X_2)|| \ldots || F_L(X_L)),
\end{align} 
where  $||$ denotes concatenation. Then, for any $z \in \mathcal{Z}$, we have
\begin{align}
\mathbb{V} ( p_{F_{\mathcal{L}}(X_{\mathcal{L}}),F_{\mathcal{L}}|Z=z}, p_{U_{\mathcal{K}}} p_{U_{\mathcal{F}}})  
&\leq   \sqrt{ \sum_{ \substack{ \mathcal{S} \subseteq {\mathcal{L}} \\ \mathcal{S}\neq \emptyset }} 2^{ r_{\mathcal{S}} - H_{\infty}\left( {X_{\mathcal{S}}|Z=z} \right)} 
}, \label{eqh}
\end{align}
where $p_{U_{\mathcal{K}}}$ and $p_{U_{\mathcal{F}}}$ are the uniform distribution over $\llbracket 1, 2^{r_{\mathcal{L}}} \rrbracket$, and $\llbracket 1, s_{\mathcal{L}} \rrbracket$, respectively. 

A consequence of \eqref{eqh} is 
\begin{align}
\mathbb{V} ( p_{F_{\mathcal{L}}(X_{\mathcal{L}}),F_{\mathcal{L}},Z}, p_{U_{\mathcal{K}}} p_{U_{\mathcal{F}}}p_Z)  
&\leq   \sqrt{ \sum_{ \substack{ \mathcal{S} \subseteq {\mathcal{L}} \\ \mathcal{S}\neq \emptyset }} 2^{ r_{\mathcal{S}} - {H}_{\infty}\left( {X_{\mathcal{S}}|Z} \right)} 
}, 
\end{align}
where we have used the average conditional min-entropy of $X$ given $Z$ defined as in \cite{dodis2008fuzzy} by
\begin{align}
  {H}_{\infty}({X|Z}) \triangleq - \log (\mathbb{E}_{Z} \max_{x} p_{X|Z} (x|Z)).
\end{align}
\end{lem}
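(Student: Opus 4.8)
The plan is to follow the standard collision-probability route to the leftover hash lemma, adapted to the concatenated structure of $F_{\mathcal{L}}$, and to deduce the second inequality from the first by averaging over $Z$. For the first inequality I would fix $z \in \mathcal{Z}$, write $P$ for the joint distribution of $(F_{\mathcal{L}}(X_{\mathcal{L}}), F_{\mathcal{L}})$ given $Z=z$, and $Q \triangleq p_{U_{\mathcal{K}}}p_{U_{\mathcal{F}}}$ for the target, which is uniform over a set of cardinality $M \triangleq 2^{r_{\mathcal{L}}} s_{\mathcal{L}}$. First I would apply the Cauchy--Schwarz inequality over this support together with the identity $\sum_x (P(x)-Q(x))^2 = \lVert P \rVert_2^2 - 1/M$ (valid because $Q$ is uniform and $P$ is supported within the support of $Q$) to obtain $\mathbb{V}(P,Q) \leq \sqrt{M\lVert P \rVert_2^2 - 1}$. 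Since $\lVert P \rVert_2^2$ equals the collision probability $P_{\mathrm{coll}} \triangleq \mathbb{P}[(F_{\mathcal{L}}(X_{\mathcal{L}}),F_{\mathcal{L}}) = (F_{\mathcal{L}}'(X_{\mathcal{L}}'),F_{\mathcal{L}}')]$ of two independent draws, everything reduces to bounding $M P_{\mathrm{coll}} - 1$.

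The core computation is the collision probability. Because $F_{\mathcal{L}}'$ is uniform and independent, the event $F_{\mathcal{L}} = F_{\mathcal{L}}'$ contributes a factor $1/s_{\mathcal{L}}$ and forces the value-collision event to read $F_l(X_l) = F_l(X_l')$ for every $l \in \mathcal{L}$. Conditioning on $(X_{\mathcal{L}}, X_{\mathcal{L}}')$, using the independence of the $F_l$'s across $l$, and invoking two-universality, I would bound each factor by $\mathds{1}\{X_l = X_l'\} + 2^{-r_l}\mathds{1}\{X_l \neq X_l'\}$, so that
\[
P_{\mathrm{coll}} \leq \frac{1}{s_{\mathcal{L}}}\, \mathbb{E}_{X_{\mathcal{L}}, X_{\mathcal{L}}'} \prod_{l \in \mathcal{L}} \left( \mathds{1}\{X_l = X_l'\} + 2^{-r_l}\mathds{1}\{X_l \neq X_l'\} \right),
\]
where the two independent copies are drawn from $p_{X_{\mathcal{L}}|Z=z}$. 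Expanding the product by the distributive law over subsets $\mathcal{T} \subseteq \mathcal{L}$ (the term indexed by $\mathcal{T}$ selecting the $2^{-r_l}$ factor for $l \in \mathcal{T}$) and bounding the surviving indicator expectation by $\mathbb{P}[X_{\mathcal{T}^c} = X_{\mathcal{T}^c}'] \leq 2^{-H_{\infty}(X_{\mathcal{T}^c}|Z=z)}$ (the collision probability of a distribution is at most its largest atom) yields, after multiplying by $M = 2^{r_{\mathcal{L}}}s_{\mathcal{L}}$ and using $r_{\mathcal{L}} - r_{\mathcal{T}} = r_{\mathcal{T}^c}$, the bound $M P_{\mathrm{coll}} \leq \sum_{\mathcal{T}\subseteq\mathcal{L}} 2^{\,r_{\mathcal{T}^c} - H_{\infty}(X_{\mathcal{T}^c}|Z=z)}$. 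Relabeling $\mathcal{S} \triangleq \mathcal{T}^c$ and isolating the $\mathcal{S}=\emptyset$ term, which equals exactly $1$, gives $M P_{\mathrm{coll}} - 1 \leq \sum_{\mathcal{S}\neq\emptyset} 2^{\,r_{\mathcal{S}} - H_{\infty}(X_{\mathcal{S}}|Z=z)}$, which combined with the Cauchy--Schwarz step establishes \eqref{eqh}.

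For the stated consequence, I would write $\mathbb{V}(p_{F_{\mathcal{L}}(X_{\mathcal{L}})F_{\mathcal{L}}Z}, p_{U_{\mathcal{K}}}p_{U_{\mathcal{F}}}p_Z) = \mathbb{E}_Z[\mathbb{V}(p_{F_{\mathcal{L}}(X_{\mathcal{L}})F_{\mathcal{L}}|Z}, p_{U_{\mathcal{K}}}p_{U_{\mathcal{F}}})]$, apply \eqref{eqh} inside the expectation, then move the expectation inside the square root by Jensen's inequality (concavity of $\sqrt{\,\cdot\,}$), and finally use $\mathbb{E}_Z[\max_x p_{X_{\mathcal{S}}|Z}(x|Z)] = 2^{-H_{\infty}(X_{\mathcal{S}}|Z)}$ by the definition of the average conditional min-entropy to reach the claimed bound.

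I expect the main obstacle to be the bookkeeping in the collision-probability decomposition, specifically verifying that the factorized, two-universality-based bound reorganizes into exactly the claimed sum over subsets, with the key-length sums appearing as $2^{r_{\mathcal{S}}}$ with a positive exponent after the cancellation $r_{\mathcal{L}} - r_{\mathcal{T}} = r_{\mathcal{T}^c}$ and the relabeling $\mathcal{S} = \mathcal{T}^c$. The remaining ingredients (Cauchy--Schwarz, Jensen, and the min-entropy identity) are routine.
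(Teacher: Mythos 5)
Your proposal is correct and follows essentially the same route as the paper's proof in Appendix B: the Cauchy--Schwarz reduction to the collision probability $M P_{\mathrm{coll}}-1$, the subset decomposition of the collision event combined with two-universality and the bound $\mathbb{P}[X_{\mathcal{S}}=X_{\mathcal{S}}']\leq 2^{-H_{\infty}(X_{\mathcal{S}})}$, and Jensen's inequality to pass from the conditional statement to the average one. The only cosmetic difference is that the paper partitions the sum over pairs $(x_{\mathcal{L}},x_{\mathcal{L}}')$ by the exact set of disagreeing coordinates whereas you expand a product of indicator bounds, which yields the same sum over subsets.
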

\begin{proof}
See Appendix \ref{App_lem3}.	
\end{proof}

Combining \eqref{eqmin} and Lemma \ref{lemloh}, we are able to determine the admissible values for $r_l$, $l\in \mathcal{L}$, as follows. 
\begin{subequations}
\begin{align}
&\mathbb{V} ( p_{F_{\mathcal{L}}(X^{NB}_{\mathcal{L}})F_{\mathcal{L}} A^B_{\mathcal{L}} A_0}, p_{U_{\mathcal{K}}} p_{U_{\mathcal{F}}}p_{A^B_{\mathcal{L}} A_0})  \nonumber \\ 
& \leq \mathbb{V} ( p_{F_{\mathcal{L}}(X^{NB}_{\mathcal{L}})F_{\mathcal{L}} A^B_{\mathcal{L}} A_0\Theta \Omega}, p_{U_{\mathcal{K}}} p_{U_{\mathcal{F}}}p_{A^B_{\mathcal{L}} A_0\Theta\Omega})  \label{eq45a} \\ 
& = \mathbb{E} \left[ \mathbb{V} ( p_{F_{\mathcal{L}}(X^{NB}_{\mathcal{L}})F_{\mathcal{L}} A^B_{\mathcal{L}} A_0|\Theta\Omega}, p_{U_{\mathcal{K}}} p_{U_{\mathcal{F}}}p_{A^B_{\mathcal{L}}A_0|\Theta \Omega}) \right] \label{eq45a2}\\ 
& \leq 2\mathbb{P}[\Theta = 0 \lor \Omega=0] +  \nonumber \\
&\phantom{-l} \mathbb{V} ( p_{F_{\mathcal{L}}(X^{NB}_{\mathcal{L}})F_{\mathcal{L}} A^B_{\mathcal{L}} A_0 |\Theta=1, \Omega=1}, p_{U_{\mathcal{K}}} p_{U_{\mathcal{F}}}p_{A^B_{\mathcal{L}} A_0|\Theta=1, \Omega=1})  \label{eq45b} \\ 
& \leq 2\mathbb{P}[\Theta = 0\lor \Omega=0]\nonumber \\
& \phantom{--} +  \mathbb{E} \left[ \mathbb{V} ( p_{F_{\mathcal{L}}(X^{NB}_{\mathcal{L}})F_{\mathcal{L}} |A^B_{\mathcal{L}} A_0 \Theta=1, \Omega=1}, p_{U_{\mathcal{K}}} p_{U_{\mathcal{F}}})  \right] \label{eq45b2}\\ 
&\leq  2\mathbb{P}[\Theta = 0\lor \Omega=0]\nonumber \\
& \phantom{--} +  \mathbb{E}  \sqrt{ \sum_{ \substack{ \mathcal{S} \subseteq {\mathcal{L}} \\ \mathcal{S}\neq \emptyset }} 2^{ r_{\mathcal{S}} - H_{\infty}\left( {X_{\mathcal{S}}^{NB}|A^B_{\mathcal{L}} =a^B_{\mathcal{L}} ,A_0=a_0, \Theta = 1, \Omega=1} \right)} } \label{eq45c} \\ 
&\leq   2\mathbb{P}[\Theta = 0\lor \Omega=0] \nonumber \\
& \phantom{--} +  \mathbb{E}  \sqrt{ \sum_{ \substack{ \mathcal{S} \subseteq {\mathcal{L}} \\ \mathcal{S}\neq \emptyset }} 2^{ r_{\mathcal{S}} - (1- \epsilon) \cdot NB\cdot I(X_{\mathcal{S}};X_0) - \delta^2_{\epsilon}(N,B)}}  \label{eq45d} \\ 
& =   2\mathbb{P}[\Theta = 0\lor \Omega=0] \nonumber \\
& \phantom{--} +   \sqrt{ \sum_{ \substack{ \mathcal{S} \subseteq {\mathcal{L}} \\ \mathcal{S}\neq \emptyset }} 2^{ r_{\mathcal{S}} - (1- \epsilon) \cdot NB\cdot I(X_{\mathcal{S}};X_0) - \delta^2_{\epsilon}(N,B)}} \\
& \leq   2\delta_{\epsilon}^0(N,B) + 2\cdot 2^{-\sqrt{NB}} \nonumber \\
& \phantom{--} +   \sqrt{ \sum_{ \substack{ \mathcal{S} \subseteq {\mathcal{L}} \\ \mathcal{S}\neq \emptyset }} 2^{ r_{\mathcal{S}} - (1- \epsilon) \cdot NB\cdot I(X_{\mathcal{S}};X_0) - \delta^2_{\epsilon}(N,B)}} , \label{eqf} 
\end{align}
\end{subequations}
where \eqref{eq45a} holds by marginalization over $\Theta$ and the triangle inequality, in \eqref{eq45a2} the expectation is with respect to $p_{\Theta,\Omega}$, \eqref{eq45b} holds because $\mathbb{V}(\cdot,\cdot)$ is upper bounded by~$2$, in \eqref{eq45b2} the expectation is with respect to $p_{A^B_{\mathcal{L}}A_0|\Theta=1, \Omega=1}$, \eqref{eq45c} holds by Lemma \ref{lemloh} with the substitutions $z \leftarrow (a^B_{\mathcal{L}}, a_0, \Theta = 1, \Omega =1)$ and $X_{\mathcal{L}} \leftarrow X^{NB}_{\mathcal{L}}$ and the expectation is with respect to $p_{A^B_{\mathcal{L}}A_0|\Theta=1, \Omega=1}$, \eqref{eq45d} holds by \eqref{eqmin} and the expectation is with respect to $p_{A^B_{\mathcal{L}}A_0|\Theta=1, \Omega=1}$, \eqref{eqf} holds by the union bound, Lemma~\ref{lem1}, the definition of $\Omega$ and \cite[Lemma 10]{Maurer00}.

Finally, we conclude that Theorem \ref{sk} holds by Remark \ref{rem} and \eqref{eqf}.

\section{Non-degraded source case when $L=2$} \label{sec:ndg}

In this section, we consider  the setting described in Section~\ref{sec:def} when $L=2$, public communication is restricted to be one-way from the agents to the base station, and when \eqref{Markov} does not hold. Similar to Section~\ref{sec:def}, the value function $v^*$ of this game is defined as  
\begin{align}  \label{eqvfnd}
v^*  : 2^{\mathcal{L}}& \to \mathbb{R}^+, 
 \mathcal{S} \mapsto \max_{ \substack{a_{\mathcal{S}} \\ \in \mathfrak{S}(\mathcal{S})}} \min_{ \substack {a_{\mathcal{L} \backslash \mathcal{S}} \\ \in \mathfrak{S}(\mathcal{L} \backslash\mathcal{S})}} \sum_{i \in \mathcal{S}}\pi_i(a_{\mathcal{S}},a_{\mathcal{L} \backslash \mathcal{S}})
\end{align}
such that for any $\mathcal{S} \subseteq \mathcal{L}$, $v(\mathcal{S})$ corresponds to the maximal secret-key sum-rate achievable by coalition $\mathcal{S}$ when \emph{no specific strategy is assumed} for the agents in $\mathcal{L} \backslash \mathcal{S}$.

Next, we characterize the value function $v^*$ by providing a counterpart to Theorem \ref{propsk}.
\begin{prop} \label{propndg}
We have 
\begin{subequations}
\begin{align}
& v^*(\{1\}) \nonumber \\
& = \max_{V_1-U_1-X_1-(X_0,X_2)} [I(U_1;X_0|V_1) - I(U_1;X_2|V_1)]^+,\\
&v^*(\{2\}) \nonumber \\
& = \max_{V_2-U_2-X_2-(X_0,X_1)} [I(U_2;X_0|V_2) - I(U_2;X_1|V_2)]^+,
\\
&v^*(\{1,2\}) \nonumber \\
& = I(X_1X_2;X_0) ,
\end{align}
\end{subequations}
where $[x]^+ \triangleq \max(x,0)$ for any $x \in \mathbb{R}$.
\end{prop}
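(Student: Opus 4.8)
The plan is to treat this proposition as three separate secret-key capacity computations, each of which reduces to a known two-terminal or sum-rate result once the coalition structure is fixed. For the grand coalition $v^*(\{1,2\})$, I would argue exactly as in the proof of Theorem~\ref{propsk}: since both agents cooperate and there is no third party, the adversarial minimization over $\mathcal{L}\backslash\mathcal{S}$ is vacuous, and the secret-key sum-rate capacity between the combined terminal observing $(X_1^N,X_2^N)$ and the base station observing $X_0^N$ equals $I(X_1X_2;X_0)$. This is the standard source-model secret-key capacity of Maurer~\cite{Maurer93} and Ahlswede-Csisz\'ar~\cite{Ahlswede93} for two terminals with no eavesdropper side information, and crucially it does \emph{not} require the Markov chain~\eqref{Markov}; the one-way communication restriction does not reduce this capacity since the optimal two-terminal scheme already uses one-way (Slepian-Wolf) reconciliation followed by privacy amplification.

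For the singleton $v^*(\{1\})$, the key modeling point is that the worst-case strategy of Agent~$2$ is, as established in Section~\ref{secmod2}, to publicly reveal $X_2^N$; hence Agent~$1$ must generate a key that is secret from $(A,X_2^N)$ while the legitimate partner is the base station observing $X_0^N$. I would therefore cast $v^*(\{1\})$ as a one-way secret-key capacity in the source model with a single legitimate terminal ($X_1$), a legitimate receiver ($X_0$), and an eavesdropper observing $X_2$, but \emph{without} assuming degradedness. The correct expression for this one-way (forward) secret-key capacity is precisely the Csisz\'ar-Narayan / wiretap-type single-letter formula with an auxiliary prefix channel and a time-sharing variable, namely $\max_{V_1-U_1-X_1-(X_0,X_2)}[I(U_1;X_0|V_1)-I(U_1;X_2|V_1)]^+$, where $U_1$ is the auxiliary reconciliation variable and $V_1$ the time-sharing/rate-splitting variable. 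The achievability uses Wyner-Ziv-style binning with a prefix channel $p_{U_1|X_1}$ to generate the key and public reconciliation to the base station; the converse is the standard single-letterization exploiting the one-way nature of the communication. By symmetry of the roles of agents~$1$ and~$2$, the expression for $v^*(\{2\})$ follows identically with indices swapped.

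The main obstacle will be the converse for the singleton cases, specifically the single-letterization when~\eqref{Markov} does \emph{not} hold and communication is one-way. The hard part is identifying the auxiliary random variables $U_1$ and $V_1$ from the multi-letter quantities and showing that the one-way public message can be absorbed into the auxiliary $U_1$ so that the final bound matches the forward secret-key rate with prefixing. I expect to invoke the one-way secret-key results for the source model (e.g.\ in the style of \cite{Ahlswede93,Maurer93} refined for non-degraded sources) together with a careful Markov-chain verification $V_1-U_1-X_1-(X_0,X_2)$, and to handle the $[\cdot]^+$ by noting that the coalition can always fall back on a zero-rate key. The achievability for the singletons can lean on the same polar-coding/privacy-amplification machinery developed in Section~\ref{sec:CS}, adapted to a single agent against a worst-case eavesdropper, so I do not anticipate difficulty there.
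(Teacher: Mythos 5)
Your overall strategy coincides with the paper's: the proof there is a two-line reduction, citing \cite{Ahlswede93} for the singleton values (the one-way forward secret-key capacity with the other agent's observation handed to the eavesdropper, which is exactly the prefix-channel formula with $V_i-U_i-X_i-(X_0,X_{\bar{i}})$ that you write down) and citing \cite{lai2013simultaneously,zhangmulti} for $v^*(\{1,2\})$. Your treatment of the singletons is therefore fine, including the observation that the worst case for the complement is public disclosure of its observations and that the $[\cdot]^+$ comes from the zero-rate fallback; there is no need to redo the single-letterization since the non-degraded one-way converse is precisely the content of the cited result.

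The one genuine gap is in your argument for $v^*(\{1,2\})$. You reduce it to "the secret-key capacity between the combined terminal observing $(X_1^N,X_2^N)$ and the base station," but the two agents are \emph{not} a combined terminal: by Definition~\ref{definition_modelg}, Agent $l$ computes $K_l(X_l^N,A)$ from its own observation only, and the pair $(K_1,K_2)$ must be jointly near-uniform and collectively secret. The combined-terminal picture gives only the converse $\sum_i R_i \leq I(X_1X_2;X_0)$; the nontrivial content of the claim is that this sum-rate is achievable \emph{distributively}, i.e., that splitting the source between two agents who cannot pool their observations costs nothing. That is exactly what \cite{lai2013simultaneously,zhangmulti} establish (and what the paper's Theorem~\ref{sk} re-proves constructively without requiring the Markov chain~\eqref{Markov}), and it is the step your proposal silently skips. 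The fix is simply to invoke one of those results in place of the two-terminal Maurer/Ahlswede--Csisz\'ar capacity, but as written the achievability direction of the third identity is not justified.
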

\begin{proof}
$v^*(\{1\})$ and $v^*(\{2\})$ are obtained from \cite{Ahlswede93}. $v^*(\{1,2\})$ is obtained from \cite{lai2013simultaneously,zhangmulti}.
\end{proof}
\begin{rem}
If $L>2$, then $v^*(\mathcal{L}) = I(X_{\mathcal{L}};X_0)$ by \cite{zhangmulti}, and for any $l \in \mathcal{L}$, $v^*(\{l\})$ can be obtained from \cite{Ahlswede93}. However, for any $\mathcal{S} \subset \mathcal{L}$ such that $|\mathcal{S}|>1$, a closed form expression for $v^*(\mathcal{S})$ is unknown.
\end{rem}
\begin{rem}
In the case of two-way communication between the base station and the agents, $v^*(\{1,2\})$ has the same expression. However, in this case, no closed-form expression is known for $v^*(\{1\})$ or $v^*(\{2\})$, which both correspond to a secret-key capacity between two parties in presence of an eavesdropper~\cite{Maurer93,Ahlswede93}.
\end{rem}

\begin{pro} \label{proadd2}
	The game $(\mathcal{L},v^*)$ defined in \eqref{eqvfnd} is superadditive.
\end{pro}
\begin{proof}
Any two disjoint coalitions $\mathcal{S},\mathcal{T}\subseteq \mathcal{L}$, $\mathcal{S} \cap\mathcal{T} = \emptyset$, obtain 
secret-key sum-rate capacities that cannot add up to a quantity strictly larger than the secret-key sum-rate capacity of the coalition $\mathcal{S}\cup\mathcal{T}$. Note indeed that the reliability, secrecy, and uniformity constraints for coalitions  $\mathcal{S}$ and $\mathcal{T}$ imply a reliability, secrecy, and uniformity constraint for the coalition $\mathcal{S} \cup \mathcal{T}$. Indeed, we have
\begin{align}
\mathbb{P}[ \widehat{K}_{\mathcal{S}\cup \mathcal{T}} \neq K_{\mathcal{S}\cup \mathcal{T}}] \leq \mathbb{P}[ \widehat{K}_{\mathcal{S}} \neq K_{\mathcal{S}}]  + \mathbb{P}[ \widehat{K}_{\mathcal{T}} \neq K_{\mathcal{T}}].
\end{align}
Next, we have 
\begin{subequations} \label{eqreason}
\begin{align}
&I\left(K_{\mathcal{S}\cup \mathcal{T}} ; A_{\mathcal{S} \cup \mathcal{T}} X^N_{(\mathcal{S} \cup \mathcal{T})^c}\right) \nonumber\\
& = I\left(K_{\mathcal{S}} ; A_{\mathcal{S}} A_{\mathcal{T}} X^N_{(\mathcal{S} \cup \mathcal{T})^c}\right) \nonumber \\
&\phantom{--}+ I\left(K_{\mathcal{T}} ; A_{\mathcal{S}} A_{\mathcal{T}} X^N_{(\mathcal{S} \cup \mathcal{T})^c} |K_{\mathcal{S}}\right) \label{eq91} \\
& \leq I\left(K_{\mathcal{S}} ; A_{\mathcal{S}} A_{\mathcal{T}} X^N_{(\mathcal{S} \cup \mathcal{T})^c} \right) \nonumber \\
&\phantom{--}+ I\left(K_{\mathcal{T}} ;  A_{\mathcal{T}} A_{\mathcal{S}} X^N_{(\mathcal{S} \cup \mathcal{T})^c} K_{\mathcal{S}}\right) \label{eq9a}\\
&\leq  I\left(K_{\mathcal{S} } ; A_{\mathcal{S} } X^N_{\mathcal{S}^c }\right) + I\left(K_{\mathcal{T} } ; A_{\mathcal{T} } X^N_{\mathcal{T}^c}\right),\label{eq9b}
\end{align}
\end{subequations}
where in \eqref{eq91} we decompose $A_{\mathcal{S} \cup \mathcal{T}}$ in $A_{\mathcal{S}}$ and $A_{\mathcal{T}}$,  the public communication emitted by the agents in $\mathcal{S}$ and $\mathcal{T}$, respectively, \eqref{eq9a} holds by positivity of the mutual information and the chain rule, \eqref{eq9b} holds because $(A_{\mathcal{S}},A_{\mathcal{T}}, X^N_{(\mathcal{S} \cup \mathcal{T})^c})$ is a function of $(A_{\mathcal{S}},X^N_{\mathcal{S}^c})$ and $(A_{\mathcal{T}},A_{\mathcal{S}}, X^N_{(\mathcal{S} \cup \mathcal{T})^c} ,K_{\mathcal{S}})$ is a function of $(A_{\mathcal{T}},X^N_{\mathcal{T}^c})$.

Finally, we have
\begin{subequations}
\begin{align}
&\log |\mathcal{K}_{\mathcal{S}\cup \mathcal{T}} | - H(K_{\mathcal{S}\cup\mathcal{T}}) \nonumber \\
&= \log |\mathcal{K}_{\mathcal{S}}| - H(K_{\mathcal{S}})  + \log |\mathcal{K}_{\mathcal{T}}| - H(K_{\mathcal{T}})  \nonumber \\
&\phantom{--}+ I(K_{\mathcal{S}} ; K_{\mathcal{T}}) \\
& \leq \log |\mathcal{K}_{\mathcal{S}}| - H(K_{\mathcal{S}})  + \log |\mathcal{K}_{\mathcal{T}}| - H(K_{\mathcal{T}}) \nonumber \\
&\phantom{--} + I\left(K_{\mathcal{S} } ; X^N_{\mathcal{S}^c }\right).
\end{align}
\end{subequations}

\end{proof}
Note that the proof of Property \ref{proadd2} is valid for any $L$, not only $L=2$. Hence, we can define the core of the game as in Definition \ref{defcore}. From Property \ref{proadd2}, we  immediately obtain the following property since superadditivity implies convexity for $L=2$. 
\begin{pro} \label{proconvex2}
	The game $(\mathcal{L},v^*)$ defined in \eqref{eqvfnd} is convex.
\end{pro}

Next, from \cite{shapley1971cores} and Property \ref{proconvex2}, we deduce the following corollary.
 
\begin{cor}
The game $(\mathcal{L},v^*)$ defined in \eqref{eqvfnd} has a non-empty core.
\end{cor}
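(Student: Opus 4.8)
The plan is to invoke Shapley's classical result \cite{shapley1971cores}, already used in Corollary \ref{corcore} for the degraded game $(\mathcal{L},v)$, that \emph{every convex coalitional game has a non-empty core}. Since Property \ref{proconvex2} has just established that $(\mathcal{L},v^*)$ defined in \eqref{eqvfnd} is convex, the conclusion follows in a single step: convexity implies non-emptiness of the core. The argument is therefore a two-line citation, mirroring the structure of Section \ref{sec:supadd}.

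The only substantive content lies upstream, in the convexity of $v^*$, which is why I would emphasize why it holds here almost for free. For $L=2$, the power set $2^{\mathcal{L}}$ has only the four elements $\emptyset,\{1\},\{2\},\{1,2\}$, so the supermodularity inequality $v^*(\mathcal{U}) + v^*(\mathcal{V}) \leq v^*(\mathcal{U}\cup\mathcal{V}) + v^*(\mathcal{U}\cap\mathcal{V})$ is vacuous unless $\mathcal{U},\mathcal{V}$ are incomparable, i.e.\ unless $\{\mathcal{U},\mathcal{V}\}=\{\{1\},\{2\}\}$. In that single case $\mathcal{U}\cap\mathcal{V}=\emptyset$ and $v^*(\emptyset)=0$, so the inequality collapses to $v^*(\{1\})+v^*(\{2\}) \leq v^*(\{1,2\})$, which is exactly the superadditivity statement guaranteed by Property \ref{proadd2}. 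Thus superadditivity and convexity coincide when $L=2$, justifying Property \ref{proconvex2}.

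I do not expect any real obstacle in this corollary itself: once Property \ref{proconvex2} is in place, non-emptiness of the core is immediate from \cite{shapley1971cores}, and I would simply state this. The genuine effort was already spent in Property \ref{proadd2}, whose proof (valid for arbitrary $L$) bounds the joint reliability, collective-secrecy, and uniformity penalties of $\mathcal{S}\cup\mathcal{T}$ by the sum of those of $\mathcal{S}$ and $\mathcal{T}$, thereby certifying that disjoint coalitions cannot outperform their union.
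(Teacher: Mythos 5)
Your proof is correct and follows exactly the paper's route: the paper likewise deduces the corollary from Shapley's theorem that convex games have non-empty cores, together with Property \ref{proconvex2}, which is itself obtained from Property \ref{proadd2} via the observation that superadditivity and convexity coincide for $L=2$. Your expanded justification of that equivalence (the only nontrivial supermodularity instance being $\{\mathcal{U},\mathcal{V}\}=\{\{1\},\{2\}\}$ with $v^*(\emptyset)=0$) is the same argument the paper leaves implicit.
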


As an example of solution concept, we characterize the Shapley value discussed in Section~\ref{sec:candid} for $(\mathcal{L},v^*)$.

\begin{ex}
The Shapley value of the game $(\mathcal{L},v^*)$ defined in~\eqref{eqvfnd} is given for $i\in \{ 1 ,2 \}$ and $\bar{i} \triangleq 3-i$ by
\begin{multline}
R_i^{\textup{Shap}}  = \frac{1}{2}I(X_1X_2;X_0) \\+\frac{1}{2} \displaystyle\max_{V_i-U_i-X_i-(X_0,X_{\bar{i}})} \left[I(U_i;X_0|V_i) - I(U_i;X_{\bar{i}}|V_i)\right]^+ \\  - \frac{1}{2}\max_{V_{\bar{i}}-U_{\bar{i}}-X_{\bar{i}}-(X_0,X_i)} \left[ I(U_{\bar{i}};X_0|V_{\bar{i}}) - I(U_{\bar{i}};X_i|V_{\bar{i}}) \right]^+\label{eqShap1b}.
\end{multline}
\end{ex}

\section{Extension to multiple levels of security clearance} \label{secext}

We now consider that multiple levels of security clearance exist in our model. Specifically, each agent has a pre-defined security clearance level, and it is required that keys generated by agents at a given level must be be kept secret from the agents at a strictly superior level. Note that this setting is related to the problem of simultaneously generating private and secret keys~\text{\cite{zhang2014capacity,ye2005secret}}.

\subsection{Model}
Let $Q\in \mathbb{N}^*$ and define $\mathcal{Q} \triangleq \llbracket 1, Q \rrbracket$. For $q \in \mathcal{Q}$, let $L_q \in \mathbb{N}^*$ and let $\mathcal{L}_q$ be a set of $L_q$ agents. In the following, we consider $Q$ sets $ (\mathcal{L}_q)_{q\in \mathcal{Q}}$ of agents and one base station depicted in Figure~\ref{fig:modellevel}. We also use the notation $\mathcal{L}_{\mathcal{Q}} \triangleq \bigcup_{q \in \mathcal{Q}} \mathcal{L}_q $ to denote all the agents in the $Q$ sets.

\subsubsection{Definition of the source model}Define $\mathcal{X}_{\mathcal{L}_{\mathcal{Q}}}$ as the Cartesian product of $\sum_{q=1}^Q L_q$ finite alphabets $\mathcal{X}_{l}$, $l\in \mathcal{L}_{\mathcal{Q}}$. Consider a discrete memoryless source (DMS) $\left(\mathcal{X}_{\mathcal{L}_{\mathcal{Q}}} \times \mathcal{X}_0 , p_{X_{\mathcal{L}_{\mathcal{Q}}}X_0} \right)$, where $\mathcal{X}_0$ is a finite alphabet and $X_{\mathcal{L}_{\mathcal{Q}}} \triangleq (X_{l})_{l \in \mathcal{L}_{\mathcal{Q}}}$. For $l \in\mathcal{L}_{\mathcal{Q}}$, Agent $l$ observes the component $X_{l}$ of the DMS, and the base station observes the component $X_0$. The source is assumed to follow the following Markov chain: for any $\mathcal{S},\mathcal{T} \subset \mathcal{L}_{\mathcal{Q}}$ such that $\mathcal{S} \cap \mathcal{T} = \emptyset$,
\begin{align} \label{Markovmult}
X_{\mathcal{S}} - X_0 - X_{\mathcal{T}}.
\end{align}

The source's statistics are assumed known to all parties, and communication is allowed over an authenticated  noiseless  public channel. 

\subsubsection{Description of the objectives for the agents} The goal of Agent $l\in \mathcal{L}_{\mathcal{Q}}$ is to generate an individual secret-key with the base station. The index $q \in\mathcal{Q}$ is meant to describe different sets of agents that do not have the same security constraints. In particular, we require that for any $q \in \mathcal{Q}$, the keys generated by the agents in $\mathcal{L}_{q}$ are secret, in an information-theoretic sense, from the agents in $\bigcup_{i \in \llbracket q+1 , Q \rrbracket} \mathcal{L}_i$ but need not to be secret from the agents in $\bigcup_{i \in \llbracket 1 , q \rrbracket} \mathcal{L}_i$. One can interpret it in terms of levels of security clearance, where $\mathcal{L}_q$, $q \in \mathcal{Q}$, represents a set of agents that share the same level of security clearance, and for $q'<q$, $\mathcal{L}_{q'}$, represents another set of agents with a higher security clearance than $\mathcal{L}_{q}$.
\begin{rem}
We require {information-theoretic security} across security clearance levels, meaning that the agents in $\mathcal{L}_q$, $q \in \mathcal{Q}$, must keep their keys secret from all agents in $(\mathcal{L}_i )_{i \in \llbracket q+1 , Q \rrbracket}$, {for any communication strategy} the latter group of agents may decide to adopt. Consequently, 
we discard the possibility for the agents in $(\mathcal{L}_i )_{i \in \llbracket q+1 , Q \rrbracket}$ and $\mathcal{L}_q$ to agree on participating in a common secret key generation scheme. Not doing so might imply that the agents in $(\mathcal{L}_i )_{i \in \llbracket q+1 , Q \rrbracket}$ follow a pre-determined communication strategy, which would contradict the information-theoretic security requirement.
\end{rem}
\begin{figure} 
\centering
\subfloat[Overview of the $Q$ sets of Agents]{\includegraphics[width=8.8cm]{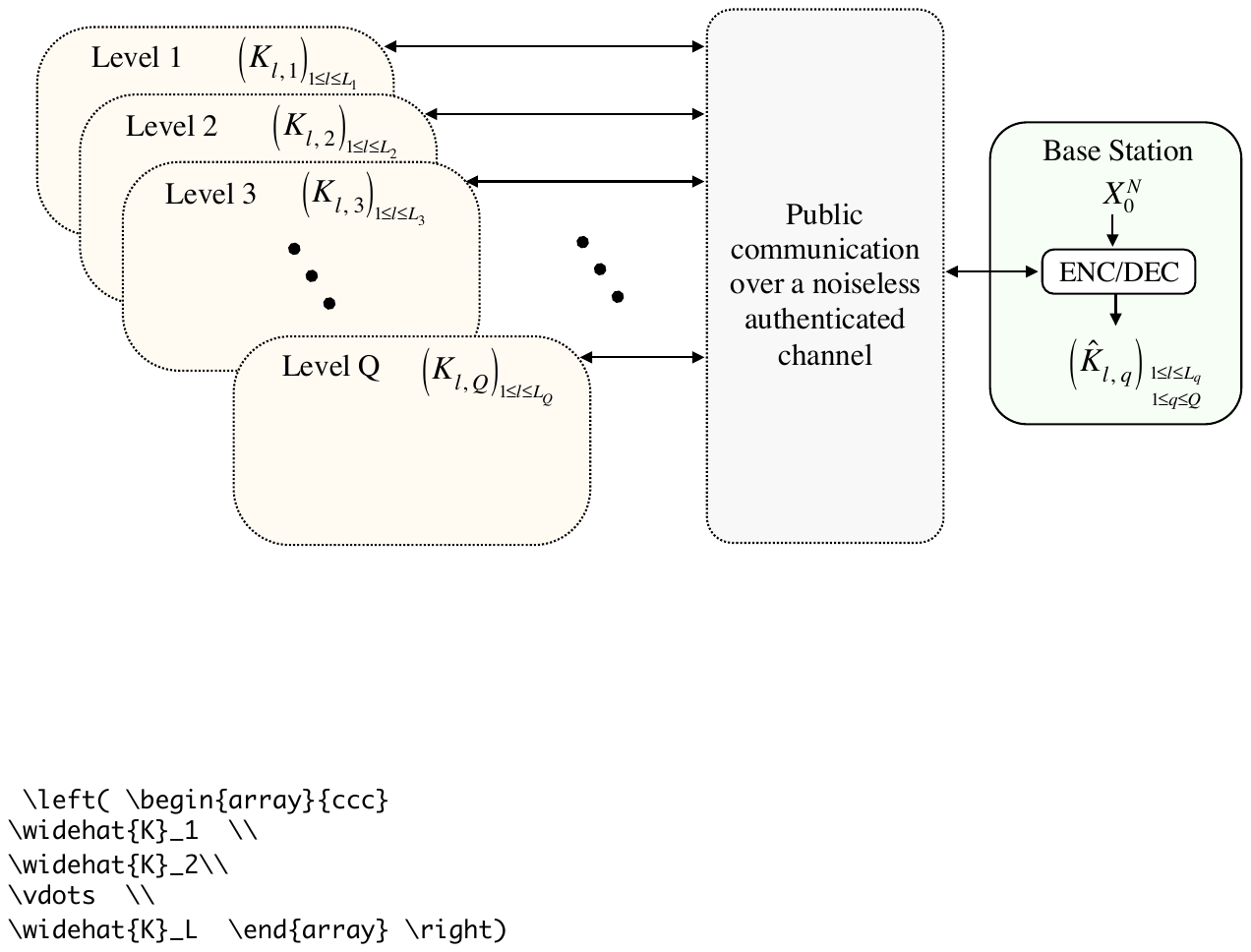}}
	  \label{fig:model}
\subfloat[Representation of the setting at Level $q \in \mathcal{Q}$.]{\includegraphics[width=7.31cm]{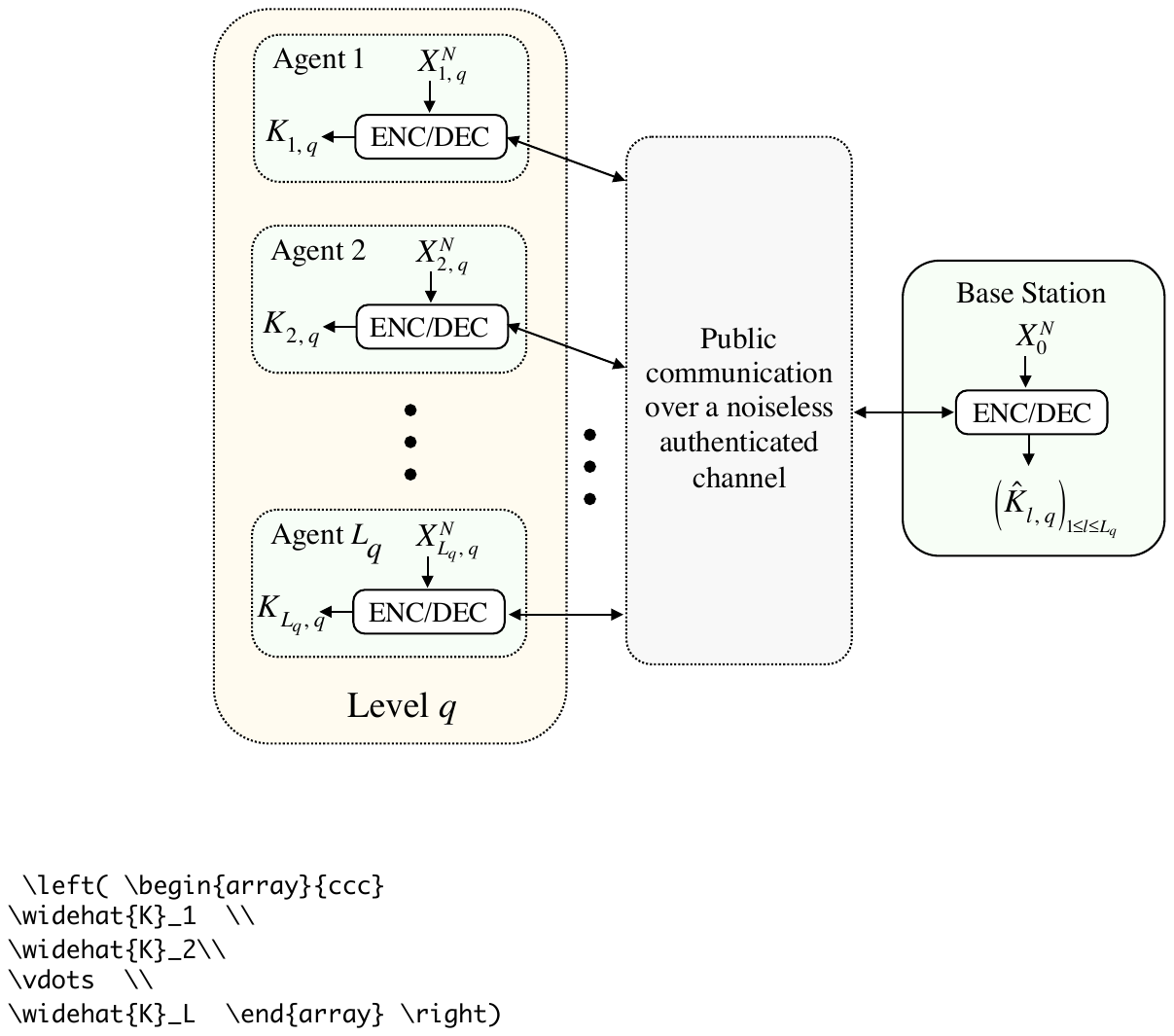}}  \label{fig:model2}
  \caption{Many-to-one secret-key generation setting with multiple levels of security clearance. (b) provides details of the setting described in (a) at a given level $q \in \mathcal{Q}$.}  \label{fig:modellevel}
\end{figure}

To extend Sections \ref{sec:analysis} and \ref{sec:CS} to this setting with multiple clearance levels, we first consider the following auxiliary setting.  We consider the setting described in Definitions~\ref{definition_modelg} and~\ref{def} when an eavesdropper that observes the public communication is also in possession of correlated source observations. The auxiliary setting is presented in Section \ref{sec:mode}. 
Next, to address the multiple levels of security constraints, it is sufficient to  apply the results of Section~\ref{sec:mode} to each security clearance level $q \in \mathcal{Q}$ by considering for the agents in $\mathcal{L}_q$, the DMS $\left(\mathcal{X}_{\mathcal{L}_q} \times \mathcal{X}_0 \times \mathcal{X}_{ \mathcal{L}_{q+1:Q}}\right)$, where $\mathcal{L}_{q+1:Q} \triangleq  \bigcup_{i \in \llbracket q+1,Q \rrbracket} \mathcal{L}_i$, with the assumption 
that an eavesdropper observes the components $X_{\mathcal{L}_{q+1:Q}}$  of the DMS.

\subsection{Auxiliary setting} \label{sec:mode}

Consider a DMS $\left(\mathcal{X}_{\mathcal{L}} \times \mathcal{X}_0 \times \mathcal{Z}, p_{X_{\mathcal{L}}X_0Z} \right)$, where $Z$ is an additional, compared to the source model considered in  Section~\ref{secmod}, component of the source observed by an eavesdropper.  The source is assumed to follow the following Markov chain: for any $\mathcal{S},\mathcal{T} \subseteq \mathcal{L}$ such that $\mathcal{S} \cap \mathcal{T} = \emptyset$,
\begin{align} \label{Markov2}
X_{\mathcal{S}} - X_0 - (X_{\mathcal{T}}Z).
\end{align}
We then consider the same Definitions \ref{definition_modelgs}, \ref{defs} as in Section \ref{secmod3}, where the secrecy constraint~\eqref{eqSec} of Definition \ref{defs} becomes for a coalition $\mathcal{S} \subset \mathcal{L}$
\begin{align}
	\lim_{N \to \infty} I(K_{\mathcal{S}} ; A_{\mathcal{S}} X_{\mathcal{S}^c}^{N} Z^N ) &= 0 .
\end{align} 

 Next, we define our object of study as the secret-key generation problem we have just defined when $\mathcal{S} = \mathcal{L}$ and when the users are selfish. Similar to Section \ref{secmod2}, we wish to understand whether the agents can find a consensus about the coalitions to form, and how the secret-sum rate of each coalition should be allocated among its agents. Following Section \ref{secmod2}, we cast the problem as a coalitional game $(\mathcal{L},v^Z)$ where the value of coalition $\mathcal{S} \subseteq \mathcal{L}$ is defined as  the \emph{maximal 
 secret-key sum-rate} that coalition $\mathcal{S}$ can obtain \emph{regardless of the strategies adopted by the member of $\mathcal{S}^c$}.

Similar to the proof of Theorem \ref{propsk} by using Corollary~\ref{cor4}, stated below, in place of Corollary~\ref{cor}, one can show the following characterization of the value function  $v^Z$.
\begin{align} \label{eqZ}
v^Z  : 2^{\mathcal{L}}& \to \mathbb{R}^+, 
 \mathcal{S} \mapsto I(X_{\mathcal{S}}; X_0 | X_{\mathcal{S}^c}Z).
\end{align}

Similar to Proposition \ref{propconv}, one can show that the game $(\mathcal{L},v^Z)$ is convex, and similar to Theorem~\ref{propcore} that its core is given by
\begin{align}
&\mathcal{C}(v^Z) = \smash{\left\{ (R_l)_{l \in \mathcal{L}} : \forall \mathcal{S} \subseteq \mathcal{L}, \phantom{\frac{1}{N}} \right.} \nonumber \\
& \left. \phantom{\frac{1}{N}----}  I(X_{\mathcal{S}}; X_0 | X_{\mathcal{S}^c}Z) \leq \smash{\sum_{i \in \mathcal{S}}} R_i \leq I(X_{\mathcal{S}};X_0|Z) \right\}.
\end{align}

Moreover, similar to Proposition \ref{propshap}, the Shapley value is in $\mathcal{C}(v^Z)$ and given by
\begin{multline}
\forall l \in \mathcal{L},R_l^{\textup{Shap}} 
\\ = I\left(X_l;X_0|Z\right) - \frac{1}{L}\sum_{\mathcal{S} \subseteq \mathcal{L} \backslash \{l\}} \dbinom{L-1}{|\mathcal{S}|}^{-1}  I\left(X_l;X_{\mathcal{S}}|Z\right) .
\end{multline}

Finally, it is possible to achieve any point of the core $\mathcal{C}(v^Z)$ with an explicit coding scheme, by deducing from Theorem \ref{sk2} the following corollary.
\begin{cor} \label{cor2}
Consider a DMS $\left(\mathcal{X}_{\mathcal{L}} \times \mathcal{X}_0 \times {\mathcal{Z}}, p_{X_{\mathcal{L}}X_0Z} \right)$ such that $\forall l \in \mathcal{L}, |\mathcal{X}_l| = 2$ and the Markov chain \eqref{Markov2} holds. Any rate tuple in 
\begin{align}
&\mathcal{R}^Z_{\mathcal{S}} \triangleq \nonumber \\
&\phantom{-} \left\{ (R_l)_{l \in \mathcal{S}} : 0 \leq \sum_{i \in \mathcal{T}} R_i \leq I(X_{\mathcal{T}};X_0|X_{\mathcal{S}^c}Z), \forall \mathcal{T} \subseteq \mathcal{S} \right\}
\end{align}
is achievable  by the coalition of agents $\mathcal{S} \subseteq \mathcal{L}$. Morevover,
\begin{align}
\mathcal{C}(v^Z) \subset \mathcal{R}^Z_{\mathcal{L}}.
\end{align}
\end{cor}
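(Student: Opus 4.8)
The plan is to deduce Corollary~\ref{cor2} from Theorem~\ref{sk2} by folding the eavesdropper's observation $Z$ into the set of agents lying outside the coalition, so that $Z$ plays exactly the role of the complement observations $X_{\mathcal{S}^c}$. Concretely, I would introduce a virtual agent indexed by $L+1$ whose source component is $X_{L+1} \triangleq Z$, and set $\mathcal{L}' \triangleq \mathcal{L} \cup \{L+1\}$. The coalition under study remains $\mathcal{S} \subseteq \mathcal{L}$, so its complement inside $\mathcal{L}'$ is $\mathcal{S}^c \cup \{L+1\}$ and the corresponding side information is $(X_{\mathcal{S}^c}, Z)$. Under this relabeling, the constraint $\sum_{i\in\mathcal{T}} R_i \leq I(X_{\mathcal{T}};X_0|X_{\mathcal{S}^c})$ of Theorem~\ref{sk2} becomes $\sum_{i\in\mathcal{T}} R_i \leq I(X_{\mathcal{T}};X_0|X_{\mathcal{S}^c}Z)$ for every $\mathcal{T}\subseteq\mathcal{S}$, and the secrecy constraint \eqref{eqSec} of Definition~\ref{defs} becomes the required $I(K_{\mathcal{S}};A_{\mathcal{S}}X_{\mathcal{S}^c}^NZ^N)\to 0$, matching the auxiliary setting of Section~\ref{sec:mode}.

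First I would verify that the augmented source $(\mathcal{X}_{\mathcal{L}'}\times\mathcal{X}_0, p_{X_{\mathcal{L}'}X_0})$ inherits the Markov structure \eqref{Markov} required by Theorem~\ref{sk2}. This amounts to checking $X_{\mathcal{S}'}-X_0-X_{\mathcal{T}'}$ for all disjoint $\mathcal{S}',\mathcal{T}'\subseteq\mathcal{L}'$, which I would split into three cases according to whether the index $L+1$ lies in $\mathcal{T}'$, in $\mathcal{S}'$, or in neither; in each case, substituting $X_{L+1}=Z$ reduces the claim to an instance of \eqref{Markov2}, using the symmetry of the Markov condition in its two end terms. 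This is a short, routine verification.

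The main subtlety, and the step I would treat most carefully, is the alphabet hypothesis: Theorem~\ref{sk2} assumes binary alphabets for all agents, whereas $\mathcal{Z}$ is an arbitrary finite alphabet, so one cannot literally invoke Theorem~\ref{sk2} on $\mathcal{L}'$. The resolution is that the virtual agent $L+1$ never generates a key and always sits in the complement of the coalition; consequently the polarization transform $G_n$ is applied only to the genuine agents $X_l^N$, $l\in\mathcal{L}$, while $Z$ enters the coding scheme solely as conditioning side information, namely in the definition of the high-entropy sets $\mathcal{H}_{X_l|\cdots}$ and in the min-entropy and mutual-information bounds of Section~\ref{sec:CS2}, all of which remain valid for any finite side-information alphabet (cf. Lemmas~\ref{lemscs}, \ref{lemscs2}, \ref{lem1}, and \ref{lemloh}, whose side-information spaces $\mathcal{Y}$ and $\mathcal{Z}$ are only assumed finite). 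Hence the coding scheme and its analysis underlying Theorem~\ref{sk2} carry over once $X_{\mathcal{S}^c}$ is replaced throughout by $(X_{\mathcal{S}^c},Z)$, yielding the achievability of $\mathcal{R}^Z_{\mathcal{S}}$.

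Finally, the inclusion $\mathcal{C}(v^Z)\subset\mathcal{R}^Z_{\mathcal{L}}$ follows exactly as in Theorem~\ref{sk}: specializing $\mathcal{R}^Z_{\mathcal{S}}$ to $\mathcal{S}=\mathcal{L}$, so that $\mathcal{S}^c=\emptyset$, gives $\mathcal{R}^Z_{\mathcal{L}}=\{(R_l)_{l\in\mathcal{L}}:0\leq\sum_{i\in\mathcal{T}}R_i\leq I(X_{\mathcal{T}};X_0|Z),\,\forall\mathcal{T}\subseteq\mathcal{L}\}$, and the upper bounds in the characterization of $\mathcal{C}(v^Z)$ stated just above the corollary are precisely $\sum_{i\in\mathcal{S}}R_i\leq I(X_{\mathcal{S}};X_0|Z)$ for all $\mathcal{S}\subseteq\mathcal{L}$, while the lower bounds $\sum_{i\in\mathcal{S}}R_i\geq I(X_{\mathcal{S}};X_0|X_{\mathcal{S}^c}Z)\geq 0$ ensure nonnegativity; thus every core point lies in $\mathcal{R}^Z_{\mathcal{L}}$.
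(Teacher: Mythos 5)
Your proposal is correct and follows exactly the route the paper intends: the paper deduces Corollary~\ref{cor2} from Theorem~\ref{sk2} by absorbing $Z$ into the side information of the agents outside the coalition, which is precisely your virtual-agent construction. You actually supply details the paper leaves implicit --- the case-by-case verification that the augmented source satisfies \eqref{Markov}, and the observation that the binary-alphabet hypothesis need not apply to $Z$ since it is never polarized and enters only as conditioning --- and both are handled correctly.
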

Finally, from Corollary \ref{cor2}, we deduce, similar to the proof of Corollary \ref{cor} the following corollary, which allows us to establish  \eqref{eqZ}. 
\begin{cor} \label{cor4}
Corollary \ref{cor2} implies that for any $\mathcal{S} \subseteq \mathcal{L}$,  the secret-key sum rate $I(X_{\mathcal{S}}; X_0 | X_{\mathcal{S}^c}Z)$ is achievable by Coalition $\mathcal{S}$.	
\end{cor}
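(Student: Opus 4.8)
The plan is to mirror the proof of Corollary~\ref{cor} given in Appendix~\ref{App_cor}, replacing the invocation of Theorem~\ref{sk2} by Corollary~\ref{cor2}. Since Corollary~\ref{cor2} guarantees that every rate tuple in $\mathcal{R}^Z_{\mathcal{S}}$ is achievable by coalition $\mathcal{S}$, it suffices to exhibit a single tuple $(R_l)_{l\in\mathcal{S}} \in \mathcal{R}^Z_{\mathcal{S}}$ whose components sum to the target value $I(X_{\mathcal{S}}; X_0 | X_{\mathcal{S}^c} Z)$. The constraint for $\mathcal{T} = \mathcal{S}$ in the definition of $\mathcal{R}^Z_{\mathcal{S}}$ already shows this value is the largest conceivable sum-rate, so the content of the corollary is precisely that this upper corner is attained.

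First I would fix an arbitrary ordering $\mathcal{S} = \{i_1,\dots,i_m\}$, $m \triangleq |\mathcal{S}|$, and assign the corner-point rates $R_{i_k} \triangleq I(X_{i_k}; X_0 | X_{\mathcal{S}^c} X_{\{i_1,\dots,i_{k-1}\}} Z)$ for $k \in \llbracket 1, m \rrbracket$. The chain rule for mutual information immediately gives $\sum_{k=1}^m R_{i_k} = I(X_{\mathcal{S}}; X_0 | X_{\mathcal{S}^c} Z)$, so this tuple has the desired sum-rate; it then remains only to verify that it lies in $\mathcal{R}^Z_{\mathcal{S}}$, i.e. that $\sum_{i\in\mathcal{T}} R_i \leq I(X_{\mathcal{T}}; X_0 | X_{\mathcal{S}^c} Z)$ for every $\mathcal{T}\subseteq\mathcal{S}$.

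The feasibility check rests on showing that $g(\mathcal{T}) \triangleq I(X_{\mathcal{T}}; X_0 | X_{\mathcal{S}^c} Z)$ is a polymatroid rank function, after which the assigned tuple is exactly the greedy vertex associated with the chosen ordering and hence feasible. To establish submodularity I would use the Markov chain~\eqref{Markov2}: since $\mathcal{T}\subseteq\mathcal{S}$ is disjoint from $\mathcal{S}^c$, one has $H(X_{\mathcal{T}} | X_0 X_{\mathcal{S}^c} Z) = H(X_{\mathcal{T}} | X_0)$, whence $g(\mathcal{T}) = H(X_{\mathcal{T}} | X_{\mathcal{S}^c} Z) - H(X_{\mathcal{T}} | X_0)$. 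The first term is submodular in $\mathcal{T}$ because the conditional entropy of a subset is submodular, while the second term is modular because \eqref{Markov2} forces $(X_i)_{i\in\mathcal{L}}$ to be mutually conditionally independent given $X_0$, so that $H(X_{\mathcal{T}} | X_0) = \sum_{i\in\mathcal{T}} H(X_i | X_0)$. A submodular function minus a modular one is submodular, and $g$ is normalized with $g(\emptyset)=0$ and monotone, so $g$ is indeed a polymatroid rank function. The diminishing-returns form of submodularity then yields, for any $\mathcal{T} = \{i_{k_1},\dots,i_{k_j}\}$ with $k_1 < \dots < k_j$, the bound $R_{i_{k_r}} = I(X_{i_{k_r}}; X_0 | X_{\mathcal{S}^c} X_{\{i_1,\dots,i_{k_r-1}\}} Z) \leq I(X_{i_{k_r}}; X_0 | X_{\mathcal{S}^c} X_{\{i_{k_1},\dots,i_{k_{r-1}}\}} Z)$, since $\{i_{k_1},\dots,i_{k_{r-1}}\} \subseteq \{i_1,\dots,i_{k_r-1}\}$; summing over $r$ and telescoping via the chain rule gives $\sum_{i\in\mathcal{T}} R_i \leq I(X_{\mathcal{T}}; X_0 | X_{\mathcal{S}^c} Z) = g(\mathcal{T})$, as required.

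The only genuinely delicate step is the submodularity of $g$, which is what makes the greedy corner point feasible, and it is exactly here that the Markov chain~\eqref{Markov2} is indispensable: without it the subtracted term $H(X_{\mathcal{T}}|X_0)$ need not be modular and $g$ need not be a polymatroid rank function. Everything else is a routine application of the chain rule together with the achievability already supplied by Corollary~\ref{cor2}.
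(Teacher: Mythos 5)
Your proof is correct and follows essentially the same route as the paper: the paper likewise reduces the claim to the submodularity of $\mathcal{T}\mapsto I(X_{\mathcal{T}};X_0|X_{\mathcal{S}^c}Z)$ (the analogue of Lemma~\ref{lemc} with $Z$ adjoined) and then invokes the non-emptiness of the core of the resulting concave game, which is precisely the statement that your greedy corner point is feasible. The only cosmetic difference is that you exhibit the greedy vertex and check feasibility by the diminishing-returns inequality by hand, where the paper cites \cite{shapley1971cores} and the argument of \cite{madiman2008cores}.
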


\section{Concluding Remarks} \label{secconcl}
We have studied a pairwise secret-key generation source model between multiple agents and a base station. Although cooperation among agents can increase their individual key length, it can, at the same time, lead to conflict of interests between agents. We have proposed an integrated information-theoretic and game-theoretic formulation of the problem. Specifically, we have cast the problem as a coalitional game in which the value function is determined under  information-theoretic guarantees, i.e., the value associated with a coalition is computed with no restrictions  on the strategies that the users outside the coalition could adopt. We have shown that the game associated with our problem is convex, and characterized its core, which is interpreted as a converse for our setting. We have shown that the grand coalition is in the best interest of all agents and stable, in the sense that any coalition of agents has a disincentive  to leave the grand coalition. We have also characterized the Shapley value, and used it as a possible solution concept to ensure fairness among agents. 
 Finally, we have proposed an explicit coding scheme relying on polar codes for source coding and hash functions to achieve \emph{any point of the core}, including the Shapley value and the nucleolus. 
 
The framework is general and could be applied to other security problems involving a tension between cooperation and self-interest. The challenge
is in characterizing a value function for this framework. 
For instance, in our setting, being able to determine the value function in the non-degraded setting when $L>2$ remains an open problem, and is, unfortunately, at least as difficult as determining the secret-key capacity for the two-user secret generation model of~\cite{Maurer93}.

\appendices

\section{Proof of Lemma \ref{lem:lb}} \label{App_lem2}
Let $\mathcal{S} \subseteq \mathcal{L}$. We have
\begin{subequations}
\begin{align}
& H(X^N_{\mathcal{S}} | A_{\mathcal{L}}) \nonumber \\
& = H(X^N_{\mathcal{S}}  A_{\mathcal{S} } A_{\mathcal{L} \backslash \mathcal{S} } ) - H(A_{\mathcal{L}})  \\ 
&  = H(X^N_{\mathcal{S}}   A_{\mathcal{L} \backslash \mathcal{S} } ) - H(A_{\mathcal{L}})  \label{eq53a} \\ 
&  = H(X^N_{\mathcal{S}}) + H( A_{\mathcal{L} \backslash \mathcal{S} } |X^N_{\mathcal{S}}  ) - H(A_{\mathcal{L}}) \\ 
&  \geq H(X^N_{\mathcal{S}}) + H( A_{\mathcal{L} \backslash \mathcal{S} } |X^N_{\mathcal{S}}  ) - \log |\mathcal{A}_{\mathcal{L}}| \\ 
&  = H(X^N_{\mathcal{S}}) + H( A_{\mathcal{L} \backslash \mathcal{S} } |X^N_{\mathcal{S}}  ) - NH(X_{\mathcal{L}}|X_0) - o(N) \label{eq53b} \\ 
& = N I(X_{\mathcal{S}};X_0) + H( A_{\mathcal{L} \backslash \mathcal{S} } |X^N_{\mathcal{S}}  )  \nonumber\\
& \phantom{--} - NH(X_{\mathcal{L} \backslash \mathcal{S}}|X_0X_{\mathcal{S}}) - o(N), \label{eq2}
\end{align}
\end{subequations}
where \eqref{eq53a} holds because $A_{\mathcal{S} }$ is a function of $X^N_{\mathcal{S}}$, \eqref{eq53b} holds because 
\begin{subequations}
\begin{align}
\log |\mathcal{A}_{\mathcal{L}}|   
& = \sum_{l\in \mathcal{L}}\log |\mathcal{A}_{l}|\\ 
& = \sum_{l\in \mathcal{L}} |\mathcal{H}_{X_l|X_0X_{1:l-1}}|\\ 
& = \sum_{l\in \mathcal{L}} NH(X_l|X_0X_{1:l-1}) + o(N) \label{eq54a}\\
& = NH(X_{\mathcal{L}}|X_0) + o(N),
\end{align}
\end{subequations}
where \eqref{eq54a} holds by \cite[Theorem 3.5]{Sasoglu11,Arikan10,arikan2009rate}.

We lower-bound the second term in the right-hand side of~\eqref{eq2} as follows
\begin{subequations}
\begin{align}
&H( A_{\mathcal{L} \backslash \mathcal{S} } |X^N_{\mathcal{S}}  ) \nonumber \\
& \geq \sum_{j \in \mathcal{L} \backslash \mathcal{S} } H(A_j | A_{1:j-1}X^N_{\mathcal{S}}) \label{eq55a} \\  
& \geq \sum_{j \in \mathcal{L} \backslash \mathcal{S} } H(A_j | X_0^N X^N_{1:j-1}X^N_{\mathcal{S}})\\  
& = \sum_{j \in \mathcal{L} \backslash \mathcal{S} } H(U_j^N[\mathcal{H}_{X_j|X_0X_{1:j-1}}] |X_0^N X^N_{1:j-1}X^N_{\mathcal{S}} )\\ 
& \geq \sum_{j \in \mathcal{L} \backslash \mathcal{S} } H(U_j^N[\mathcal{V}_{X_j|X_0X_{1:j-1}X_{\mathcal{S}} }] |X_0^N X^N_{1:j-1}X^N_{\mathcal{S}} ) \label{eq55b} \\ 
&  \geq  \sum_{j \in \mathcal{L} \backslash \mathcal{S} } \sum_{ \substack{i \in \\ \mathcal{V}_{X_j|X_0X_{1:j-1}X_{\mathcal{S}} }} } \!\!\!\!\!\!\!\!\!\!\!\!H( (U_j)_i| (U_j)^{i-1} X_0^NX^N_{1:j-1}X^N_{\mathcal{S}} ) \label{eq55c} \\  
& \geq  \sum_{j \in \mathcal{L} \backslash \mathcal{S} } \sum_{ \substack{i \in \\ \mathcal{V}_{X_j|X_0X_{1:j-1}X_{\mathcal{S}} }} } (1- \delta_N) \label{eq55d}\\  
&=  \sum_{j \in \mathcal{L} \backslash \mathcal{S} } |\mathcal{V}_{X_j|X_0X_{1:j-1}X_{\mathcal{S}}}| (1- \delta_N)\\  
&  = \sum_{j \in \mathcal{L} \backslash \mathcal{S} } NH(X_j|X_0X_{1:j-1}X_{\mathcal{S}})  -o(N) \label{eq55e} \\  
&  = \sum_{j \in \mathcal{L} \backslash \mathcal{S} } NH(X_j|X_0X_{\llbracket 1:j-1 \rrbracket \cap \mathcal{L} \backslash \mathcal{S}}X_{\mathcal{S}})  -o(N)\\  
& = N H(X_{\mathcal{L} \backslash \mathcal{S}}|X_0X_{\mathcal{S}})-o(N) ,\label{eq1}
\end{align}
\end{subequations}
where \eqref{eq55a} and \eqref{eq55c} hold by the chain rule and because conditioning reduces entropy, \eqref{eq55b} holds because $\mathcal{H}_{X_j|X_0X_{1:j-1}} \supset \mathcal{H}_{X_j|X_0X_{1:j-1}X_{\mathcal{S}} } \supset \mathcal{V}_{X_j|X_0X_{1:j-1}X_{\mathcal{S}} }$ , \eqref{eq55d} holds by definition of $\mathcal{V}_{X_j|X_0X_{1:j-1}X_{\mathcal{S}}}$, \eqref{eq55e} holds by Lemma~\ref{lemscs2}, \eqref{eq1} holds by the chain rule.

Finally, combining \eqref{eq2} and \eqref{eq1} proves Lemma \ref{lem:lb}.

\section{Proof of Lemma \ref{lemloh}} \label{App_lem3}
We first prove the result when $Z=\emptyset$.  For $X_{\mathcal{L}},X_{\mathcal{L}}',F_{\mathcal{L}},F_{\mathcal{L}}'$  independent, we compute the following collision probability
\begin{subequations}
\begin{align}
&\mathbb{P} [ (F_{\mathcal{L}}(X_{\mathcal{L}}),F_{\mathcal{L}}) = (F_{\mathcal{L}}'(X_{\mathcal{L}}'),F_{\mathcal{L}}')]  \nonumber \\ 
&= \mathbb{P} [ F_{\mathcal{L}} = F_{\mathcal{L}}'] \mathbb{P} [ F_{\mathcal{L}}(X_{\mathcal{L}}) = F_{\mathcal{L}}(X_{\mathcal{L}}')]\\ 
&= \prod_{l \in \mathcal{L}} \mathbb{P} [ F_{l} = F_{l}'] \mathbb{P} [ F_{\mathcal{L}}(X_{\mathcal{L}}) = F_{\mathcal{L}}(X_{\mathcal{L}}')]\\ 
& = s^{-1}_{\mathcal{L}} \sum_{x_{\mathcal{L}},x_{\mathcal{L}}'} \mathbb{P} [ F_{\mathcal{L}}(x_{\mathcal{L}}) = F_{\mathcal{L}}(x_{\mathcal{L}}')] \mathbb{P} [X_{\mathcal{L}}=x_{\mathcal{L}}, X_{\mathcal{L}}'=x_{\mathcal{L}}']\\ 
& = s^{-1}_{\mathcal{L}} \sum_{x_{\mathcal{L}},x_{\mathcal{L}}'} \mathbb{P} [ F_{\mathcal{L}}(x_{\mathcal{L}}) = F_{\mathcal{L}}(x_{\mathcal{L}}')] \mathbb{P} [X_{\mathcal{L}}=x_{\mathcal{L}}] \mathbb{P} [X_{\mathcal{L}}'=x_{\mathcal{L}}']\\ 
& = s^{-1}_{\mathcal{L}} \sum_{\mathcal{S} \subseteq \mathcal{L}}\sum_{x_{\mathcal{L}}}  \smash{\sum_{\substack{x_{\mathcal{L}}' \\ \!\!\!\!\!\!\text{s.t.} x'_{\mathcal{S}} \neq x_{\mathcal{S}} \\  x'_{\mathcal{S}^c} = x_{\mathcal{S}^c} }}} \mathbb{P} [ F_{\mathcal{L}}(x_{\mathcal{L}}) = F_{\mathcal{L}}(x_{\mathcal{L}}')] \nonumber \\
&  \phantom{-----------} \times \mathbb{P} [X_{\mathcal{L}}=x_{\mathcal{L}}] \mathbb{P} [X_{\mathcal{L}}'=x_{\mathcal{L}}']\\ 
& = s^{-1}_{\mathcal{L}} \sum_{\mathcal{S} \subseteq \mathcal{L}}\sum_{x_{\mathcal{L}}} \smash{ \sum_{\substack{x_{\mathcal{L}}' \\ \!\!\!\!\!\!\text{s.t.} x'_{\mathcal{S}} \neq x_{\mathcal{S}} \\  x'_{\mathcal{S}^c} = x_{\mathcal{S}^c} }} \prod_{l\in \mathcal{L}} \mathbb{P} [ F_{l}(x_{l}) = F_{l}(x_{l}')] \mathbb{P} [X_{\mathcal{L}}=x_{\mathcal{L}}] } \nonumber \\
&  \phantom{------------} \times  \mathbb{P} [X_{\mathcal{S}}'=x_{\mathcal{S}}',X_{\mathcal{S}^c}'=x_{\mathcal{S}^c}]\\ 
& \leq s^{-1}_{\mathcal{L}}  \sum_{\mathcal{S} \subseteq \mathcal{L}}\sum_{x_{\mathcal{L}}} \smash{ \sum_{\substack{x_{\mathcal{L}}' \\ \!\!\!\!\!\!\text{s.t.} x'_{\mathcal{S}} \neq x_{\mathcal{S}} \\ x'_{\mathcal{S}^c} = x_{\mathcal{S}^c} }}}  2^{-r_{\mathcal{S}}} \mathbb{P} [X_{\mathcal{L}}=x_{\mathcal{L}}]\nonumber \\
&  \phantom{-----------} \times \mathbb{P} [X_{\mathcal{S}}'=x_{\mathcal{S}}',X_{\mathcal{S}^c}'=x_{\mathcal{S}^c}]  \label{eq56a}\\ 
& \leq s^{-1}_{\mathcal{L}}  \sum_{\mathcal{S} \subseteq \mathcal{L}}\sum_{x_{\mathcal{L}}}   2^{-r_{\mathcal{S}}} \mathbb{P} [X_{\mathcal{L}}=x_{\mathcal{L}}] \mathbb{P} [X_{\mathcal{S}^c}'=x_{\mathcal{S}^c}] \label{eq56b}\\ 
&\leq s^{-1}_{\mathcal{L}}  \sum_{\mathcal{S} \subseteq \mathcal{L}}\sum_{x_{\mathcal{L}}}   2^{-r_{\mathcal{S}}} \mathbb{P} [X_{\mathcal{L}}=x_{\mathcal{L}}] 2^{-H_{\infty}(p_{X_{\mathcal{S}^c}})} \label{eq56c} \\ 
& = s^{-1}_{\mathcal{L}}  \sum_{\mathcal{S} \subseteq \mathcal{L}}   2^{-r_{\mathcal{S}}-H_{\infty}(p_{X_{\mathcal{S}^c}})}, \label{eqloh}
\end{align}
\end{subequations}
where \eqref{eq56a} holds by the two-universality of the $F_l$'s, $l \in \mathcal{L}$, \eqref{eq56b} holds by marginalization over $X'_{\mathcal{S}}$, \eqref{eq56c}~holds by definition of the min-entropy.

Then, viewing $\mathbb{V} ( p_{F_{\mathcal{L}}(X_{\mathcal{L}}),F_{\mathcal{L}}}, p_{U_{\mathcal{K}}} p_{U_{\mathcal{F}}})$ as a scalar product between $(p_{F_{\mathcal{L}}(X_{\mathcal{L}}),F_{\mathcal{L}}} -  p_{U_{\mathcal{K}}} p_{U_{\mathcal{F}}})$ and its sign, by Cauchy-Schwarz inequality, we have  
\begin{subequations}
\begin{align}
& \mathbb{V} ( p_{F_{\mathcal{L}}(X_{\mathcal{L}}),F_{\mathcal{L}}}, p_{U_{\mathcal{K}}} p_{U_{\mathcal{F}}})^2   \nonumber \\
& \leq s_{\mathcal{L}} 2^{r_{\mathcal{L}}}  \sum_{m_{\mathcal{L}},f_{\mathcal{L}}} \left[ p_{F_{\mathcal{L}}(X_{\mathcal{L}}),F_{\mathcal{L}}} (m_{\mathcal{L}},f_{\mathcal{L}}) -  \frac{1}{s_{\mathcal{L}} 2^{r_{\mathcal{L}}} } \right]^2 \\ 
& = s_{\mathcal{L}} 2^{r_{\mathcal{L}}}  \left[ \sum_{m_{\mathcal{L}},f_{\mathcal{L}}}  p_{F_{\mathcal{L}}(X_{\mathcal{L}}),F_{\mathcal{L}}} (m_{\mathcal{L}},f_{\mathcal{L}})^2 \right] -  1  \\ 
&  = s_{\mathcal{L}} 2^{r_{\mathcal{L}}}  \mathbb{P} [ (F_{\mathcal{L}}(X_{\mathcal{L}}),F_{\mathcal{L}}) = (F_{\mathcal{L}}'(X_{\mathcal{L}}'),F_{\mathcal{L}}')] -   1 \\  
& \leq  2^{r_{\mathcal{L}}} \sum_{\mathcal{S} \subseteq \mathcal{L}}   2^{-r_{\mathcal{S}}-H_{\infty}(p_{X_{\mathcal{S}^c}})} -1 \label{eq57a}\\ 
& =   \sum_{\mathcal{S} \subsetneq \mathcal{L}}   2^{r_{\mathcal{S}^c}-H_{\infty}(p_{X_{\mathcal{S}^c}})} \\ 
& = \sum_{ \substack{ \mathcal{S} \subseteq {\mathcal{L}} \\ \mathcal{S}\neq \emptyset }} 2^{ r_{\mathcal{S}} - H_{\infty}\left( p_{X_{\mathcal{S}}} \right) }, \label{eqloh2}
\end{align}
\end{subequations}
where \eqref{eq57a} holds by \eqref{eqloh}.

We now introduce the random variable $Z$ correlated to $X_{\mathcal{L}}$ and proceed as in \cite{dodis2008fuzzy}. Let $z \in \mathcal{Z}$ and $X_{\mathcal{L}}^{(z)}$ be defined by $p_{X_{\mathcal{L}}^{(z)}} = p_{X_{\mathcal{L}}|Z=z}$. We have
\begin{subequations}
\begin{align}
&  \mathbb{V} ( p_{F_{\mathcal{L}}(X_{\mathcal{L}}),F_{\mathcal{L}},Z}, p_{U_{\mathcal{K}}} p_{U_{\mathcal{F}}}p_Z)  \nonumber\\
&=  \mathbb{E}_{Z} \left[ \mathbb{V} ( p_{F_{\mathcal{L}}(X^{(z)}_{\mathcal{L}}),F_{\mathcal{L}}}, p_{U_{\mathcal{K}}} p_{U_{\mathcal{F}}})  \right]  \\
& \leq  \mathbb{E}_{Z} \sqrt{ \sum_{ \substack{ \mathcal{S} \subseteq {\mathcal{L}} \\ \mathcal{S}\neq \emptyset }} 2^{ r_{\mathcal{S}} - {H}_{\infty}\left( {X_{\mathcal{S}}|Z=z} \right)} }\label{eq58a} \\
& \leq   \sqrt{ \sum_{ \substack{ \mathcal{S} \subseteq {\mathcal{L}} \\ \mathcal{S}\neq \emptyset }} 2^{ r_{\mathcal{S}} - {H}_{\infty}\left( {X_{\mathcal{S}}|Z} \right)}  \label{eq58b}
},
\end{align}
\end{subequations}
where \eqref{eq58a} holds by \eqref{eqloh2}, \eqref{eq58b} holds by Jensen's inequality  and since, by definition, $\mathbb{E}_{Z}[2^{ - {H}_{\infty}\left( {X_{\mathcal{S}}|Z=z}\right)}] = 2^{ - {H}_{\infty}\left( {X_{\mathcal{S}}|Z}\right)}$, $\mathcal{S} \subseteq \mathcal{L}$.

\section{Proof of Theorem \ref{sk2}} \label{App_sk2}
We consider the coding scheme of Section \ref{sec:CS1}. The only difference with the proof of Theorem \ref{sk} is that instead of Lemma \ref{lem:lb}, we now need to lower bound for any $\mathcal{T} \subseteq \mathcal{S}$, the quantity $H(X^N_{\mathcal{T}} | X^N_{{\mathcal{S}^c}}A_{\mathcal{S}})$. We do it as follows.
\begin{subequations}
\begin{align}
& H(X^N_{\mathcal{T}} | X^N_{{\mathcal{S}^c}}A_{\mathcal{S}}) \nonumber \\
& =H(X^N_{\mathcal{T}} X^N_{{\mathcal{S}^c}}A_{\mathcal{S} })  - H(X^N_{{\mathcal{S}^c}}A_{\mathcal{S}})  \\ 
& =H(X^N_{\mathcal{T}} X^N_{{\mathcal{S}^c}}A_{\mathcal{S} \backslash \mathcal{T} })  - H(X^N_{{\mathcal{S}^c}}A_{\mathcal{S}})  \\ 
& = H( X^N_{{\mathcal{T}}}|X^N_{\mathcal{S}^c}) + H(A_{\mathcal{S} \backslash \mathcal{T} }|X^N_{\mathcal{T}} X^N_{{\mathcal{S}^c}} ) \nonumber \\
& \phantom{--} -H(A_{\mathcal{S}}|X^N_{{\mathcal{S}^c}}). \label{eqapp}
\end{align}
\end{subequations}
We lower bound $H(A_{\mathcal{S} \backslash \mathcal{T} }|X^N_{\mathcal{T}}X^N_{{\mathcal{S}^c}})$ in the right hand side of~\eqref{eqapp} as follows.
\begin{subequations}
\begin{align} 
&H(A_{\mathcal{S} \backslash \mathcal{T} }|X^N_{\mathcal{T}}X^N_{{\mathcal{S}^c}}) \nonumber\\
& \geq N H(X_{\mathcal{S} \backslash \mathcal{T}}|X_0X_{\mathcal{T}}X_{{\mathcal{S}^c}})+o(N) \label{eq60a} \\
& = N H(X_{\mathcal{S} \backslash \mathcal{T}}|X_0)+o(N),  \label{eqapp2}
\end{align}
\end{subequations}
where \eqref{eq60a} holds similarly to \eqref{eq1} proved in Appendix~\ref{App_lem2} by conditioning on $X^N_{{\mathcal{S}^c}}$, \eqref{eqapp2} holds by the Markov chain \eqref{Markov}.
We then upper bound $H(A_{\mathcal{S}}|X^N_{{\mathcal{S}^c}})$ in the right hand side of~\eqref{eqapp} as follows.
\begin{subequations}
\begin{align}
& H(A_{\mathcal{S}}|X^N_{{\mathcal{S}^c}}) \nonumber \\
& \leq \log |\mathcal{A}_{\mathcal{S}}| \\ 
& = \sum_{i \in \mathcal{S}} \log |\mathcal{A}_{i}|\\ 
& = \sum_{i \in \mathcal{S}} |\mathcal{H}_{X_{i}|X_0X_{1:i-1}}| \\ 
& = \sum_{i \in \mathcal{S}}  N H(X_{i}|X_0X_{ 1:i-1})+o(N) \label{eq62a}\\ 
& = N  H( X_{\mathcal{S}} | X_0) +o(N) \label{eq62b} \\
& = N  H( X_{\mathcal{T}} | X_0) + N H(X_{\mathcal{S} \backslash \mathcal{T}} | X_0) +o(N) , \label{eqapp3}
\end{align}
\end{subequations}
where \eqref{eq62a} holds by \cite[Theorem 3.5]{Sasoglu11}, \eqref{eq62b} and \eqref{eqapp3} hold by the Markov chain \eqref{Markov}.
Hence, we obtain
\begin{subequations}
\begin{align}
& H(X^N_{\mathcal{T}} | X^N_{{\mathcal{S}^c}}A_{\mathcal{S}})  \nonumber \\
& \geq N H( X_{{\mathcal{T}}}|X_{\mathcal{S}^c})  - NH(X_{\mathcal{T}}|X_{0})-o(N)  \label{eq63a} \\
& = NI( X_{{\mathcal{T}}}; X_0|X_{\mathcal{S}^c}) -o(N), \label{eq63b}
\end{align}
\end{subequations}
where \eqref{eq63a} holds by combining \eqref{eqapp}, \eqref{eqapp2}, and \eqref{eqapp3}, \eqref{eq63b} holds by the Markov chain \eqref{Markov}.

\section{Proof of Corollary \ref{cor}} \label{App_cor}
We first show the following lemma and then use a similar argument than in \cite{madiman2008cores}. 
\begin{lem} \label{lemc}
Let $\mathcal{S} \subset \mathcal{L}$. The set function $w:2^{\mathcal{S}} \to \mathbb{R}^+, \mathcal{T} \mapsto I(X_{\mathcal{T}};X_0|X_{\mathcal{S}^c})$ is submodular, i.e., $-w$ is supermodular.	
\end{lem}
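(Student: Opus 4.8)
The plan is to show directly that $w$ is submodular, which is precisely the assertion that $-w$ is supermodular: for all $\mathcal{U},\mathcal{V}\subseteq\mathcal{S}$,
\[
w(\mathcal{U})+w(\mathcal{V}) \geq w(\mathcal{U}\cup\mathcal{V})+w(\mathcal{U}\cap\mathcal{V}).
\]
First I would expand, for any $\mathcal{T}\subseteq\mathcal{S}$,
\[
w(\mathcal{T}) = I(X_{\mathcal{T}};X_0|X_{\mathcal{S}^c}) = H(X_{\mathcal{T}}|X_{\mathcal{S}^c}) - H(X_{\mathcal{T}}|X_0X_{\mathcal{S}^c}),
\]
and then simplify the subtracted term using the Markov chain \eqref{Markov}.

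The crucial observation is that \eqref{Markov} forces the components $(X_l)_{l\in\mathcal{L}}$ to be mutually independent conditioned on $X_0$; in particular, conditioned on $(X_0,X_{\mathcal{S}^c})$ the variables $(X_l)_{l\in\mathcal{S}}$ stay mutually independent with $X_l-X_0-X_{\mathcal{S}^c}$, so that
\[
H(X_{\mathcal{T}}|X_0X_{\mathcal{S}^c}) = \sum_{l\in\mathcal{T}} H(X_l|X_0).
\]
Thus $\mathcal{T}\mapsto H(X_{\mathcal{T}}|X_0X_{\mathcal{S}^c})$ is \emph{modular} (additive in $\mathcal{T}$), hence simultaneously submodular and supermodular, and in particular its subtraction from a submodular function preserves submodularity.

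It then remains to argue that $\mathcal{T}\mapsto H(X_{\mathcal{T}}|X_{\mathcal{S}^c})$ is submodular. This is the standard submodularity of conditional entropy, proved exactly as for \eqref{eq14} in the proof of Proposition \ref{propconv}: writing $\mathcal{V}\setminus\mathcal{U}=\mathcal{V}\setminus(\mathcal{U}\cap\mathcal{V})$ and using the chain rule,
\begin{align*}
& H(X_{\mathcal{U}\cup\mathcal{V}}|X_{\mathcal{S}^c}) + H(X_{\mathcal{U}\cap\mathcal{V}}|X_{\mathcal{S}^c}) \\
& = H(X_{\mathcal{U}}|X_{\mathcal{S}^c}) + H(X_{\mathcal{V}\setminus\mathcal{U}}|X_{\mathcal{U}}X_{\mathcal{S}^c}) + H(X_{\mathcal{U}\cap\mathcal{V}}|X_{\mathcal{S}^c}) \\
& \leq H(X_{\mathcal{U}}|X_{\mathcal{S}^c}) + H(X_{\mathcal{V}\setminus\mathcal{U}}|X_{\mathcal{U}\cap\mathcal{V}}X_{\mathcal{S}^c}) + H(X_{\mathcal{U}\cap\mathcal{V}}|X_{\mathcal{S}^c}) \\
& = H(X_{\mathcal{U}}|X_{\mathcal{S}^c}) + H(X_{\mathcal{V}}|X_{\mathcal{S}^c}),
\end{align*}
where the inequality holds because conditioning reduces entropy. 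Combining the two steps, $w$ is the sum of a submodular function and the negative of a modular function, hence submodular, which is the claim.

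The point requiring care — more than a genuine obstacle — is the minus sign: a difference of two submodular functions need not be submodular, so one cannot simply invoke submodularity of both conditional entropies appearing in $w(\mathcal{T})=H(X_{\mathcal{T}}|X_{\mathcal{S}^c})-H(X_{\mathcal{T}}|X_0X_{\mathcal{S}^c})$. The Markov chain \eqref{Markov} is exactly what rescues the argument, by collapsing the subtracted term into an additive (modular) function of $\mathcal{T}$, whose subtraction leaves the submodularity of the first term intact.
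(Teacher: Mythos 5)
Your proof is correct and rests on the same two ingredients as the paper's: submodularity of the conditional entropy $\mathcal{T}\mapsto H(X_{\mathcal{T}}|X_{\mathcal{S}^c})$ via ``conditioning reduces entropy,'' and the Markov chain \eqref{Markov} to neutralize the subtracted term $H(X_{\mathcal{T}}|X_0X_{\mathcal{S}^c})$. The paper applies these facts locally inside a chain-rule expansion of $I(X_{\mathcal{U}\cup\mathcal{V}};X_0|X_{\mathcal{S}^c})$, whereas you make the second fact global by collapsing the subtracted term into the modular function $\sum_{l\in\mathcal{T}}H(X_l|X_0)$; this is a presentational difference only, and both arguments are valid.
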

\begin{proof}
Let $\mathcal{U},\mathcal{V} \subseteq \mathcal{S}$. We have
\begin{subequations}
\begin{align}
	& I(X_{\mathcal{U}\cup\mathcal{V}};X_0|X_{\mathcal{S}^c}) + I(X_{\mathcal{U}\cap\mathcal{V}};X_0|X_{\mathcal{S}^c}) \nonumber \\
	& = I(X_{\mathcal{U}};X_0|X_{\mathcal{S}^c}) + I(X_{\mathcal{V}\backslash \mathcal{U}};X_0|X_{\mathcal{S}^c}X_{\mathcal{U}})\nonumber \\
& \phantom{--}+ I(X_{\mathcal{U}\cap\mathcal{V}};X_0|X_{\mathcal{S}^c})\\
	& = I(X_{\mathcal{U}};X_0|X_{\mathcal{S}^c}) + H(X_{\mathcal{V}\backslash \mathcal{U}}|X_{\mathcal{S}^c}X_{\mathcal{U}}) \nonumber \\
& \phantom{--}- H(X_{\mathcal{V}\backslash \mathcal{U}}|X_0X_{\mathcal{S}^c}X_{\mathcal{U}}) + I(X_{\mathcal{U}\cap\mathcal{V}};X_0|X_{\mathcal{S}^c})\\
	& \leq  I(X_{\mathcal{U}};X_0|X_{\mathcal{S}^c}) + I(X_{\mathcal{V}\backslash \mathcal{U}};X_0|X_{\mathcal{S}^c}X_{\mathcal{U}\cap \mathcal{V}})\nonumber \\
& \phantom{--}+ I(X_{\mathcal{U}\cap\mathcal{V}};X_0|X_{\mathcal{S}^c}) \label{eq64a}\\
	& = I(X_{\mathcal{U}};X_0|X_{\mathcal{S}^c}) + I(X_{\mathcal{V}};X_0|X_{\mathcal{S}^c}),
\end{align}
\end{subequations}
where \eqref{eq64a} holds because $H(X_{\mathcal{V}\backslash \mathcal{U}}|X_{\mathcal{S}^c}X_{\mathcal{U}}) \leq H(X_{\mathcal{V}\backslash \mathcal{U}}|X_{\mathcal{S}^c}X_{\mathcal{U} \cap \mathcal{V} })$ and because $H(X_{\mathcal{V}\backslash \mathcal{U}}|X_0X_{\mathcal{S}^c}X_{\mathcal{U}}) = H(X_{\mathcal{V}\backslash \mathcal{U}}|X_0X_{\mathcal{S}^c}X_{\mathcal{U}\cap \mathcal{V}})$ by the Markov chain \eqref{Markov}.
\end{proof}

For $\mathcal{S} \subset \mathcal{L}$, $(\mathcal{S},w)$ defines a concave game by submodularity of $w$ shown in Lemma \ref{lemc}. Consequently, its core $\mathcal{C}(w) \triangleq \left\{ (R_i)_{i \in \mathcal{S}} : \displaystyle\sum_{i \in \mathcal{S}} R_l = v(\mathcal{S}) \text{ and }\displaystyle\sum_{i \in \mathcal{T}} R_i \leq v(\mathcal{T}), \forall \mathcal{T} \subset \mathcal{S} \right\}
$ is non-empty by \cite{shapley1971cores}, i.e., there exists an achievable rate tuple in $\mathcal{R}_{\mathcal{S}}$ with sum-rate $I(X_{\mathcal{S}};X_0|X_{\mathcal{S}^c})$.

\bibliographystyle{IEEEtran}
\bibliography{polarwiretap}

\end{document}